\newcommand{\BibTeX}{\rm B\kern-.05em{\sc i\kern-.025em b}\kern-.08em\TeX}
\tikzset{snake it/.style={decorate, decoration=snake}}
\pgfplotsset{compat=1.18}
\tikzset{
    leading_agent/.style={circle, draw={rgb, 255:red, 93; green, 166; blue, 13}, 
                          fill={rgb, 255:red, 232; green, 254; blue, 212}, 
                          line width=0.75pt, minimum size=6mm, inner sep=0pt}, 
    agent/.style={circle, draw=black, fill=lightgray!30, line width=0.75pt, minimum size=6mm, inner sep=0pt}, 
    etc/.style={draw=none, minimum size=6mm, inner sep=0pt},
    envy/.style={-{Triangle}, line width=1pt}, 
    pseudo/.style={-{latex}, dashed, draw=gray, line width=1pt},
    etc_edge/.style={-{Triangle}, dotted, line width=1pt},
    champ/.style={-{Triangle}, bend left=20, color={rgb, 255:red, 208; green, 2; blue, 27}, line width=1pt}, 
    best/.style={-{Triangle}, bend left=20, color=teal, line width=1pt},
    every loop/.style={min distance=12mm}
}
\colorlet{mygray}{gray!40}
\let\oldnl\nl
\newcommand{\nonl}{\renewcommand{\nl}{\let\nl\oldnl}}
\theoremstyle{definition}
\newtheorem{definition}{Definition}
\newtheorem{lemma}{Lemma}
\newtheorem{theorem}{Theorem}
\newtheorem{corollary}{Corollary}
\newtheorem{proposition}{Proposition}
\newtheorem{claim}{Claim}
\Crefname{claim}{Claim}{Claims}
\Crefname{corollary}{Corollary}{Corollaries}
\Crefname{definition}{Definition}{Definitions}
\Crefname{example}{Example}{Examples}
\Crefname{lemma}{Lemma}{Lemmas}
\Crefname{property}{Property}{Properties}
\Crefname{proposition}{Proposition}{Propositions}
\Crefname{remark}{Remark}{Remarks}
\Crefname{theorem}{Theorem}{Theorems}
\newcommand{\M}{\mathcal{M}} 
\newcommand{\N}{\mathcal{N}} 
\newcommand{\V}{\mathcal{V}} 
\newcommand{\R}{\mathbb{R}_{{\ge}0}} 
\newcommand{\A}{\mathbb{A}}
\newcommand{\B}{\mathbb{B}}
\newcommand{\C}{\mathbb{C}}
\newcommand{\D}{\mathbb{D}}
\newcommand{\Pl}{\mathcal{P}}
\newcommand{\CGMchamp}{heavy-champion}
\newcommand{\ppa}{\textsc{3PA}^+}
\newcommand{\ppatypes}{\textsc{3PA}^+\textsc{-TYPES}}
\theoremstyle{remark}
\newcommand{\EFX}[1]{
  \if\relax\detokenize{#1}\relax
    \mathrm{EFX}
  \else
    #1\text{-}\mathrm{EFX}
  \fi
}
\newcommand{\EFXT}{\EFX{\frac{2}{3} }}
\newcommand{\EFXe}{\EFX{(1-\varepsilon)}}
\newcommand{\EFXa}{\EFX{\alpha}}
\title{Almost and Approximate EFX for Few Types of Agents}
\author{
	\begin{tabular}{m{0.12\textwidth}m{0.12\textwidth}m{0.12\textwidth}m{0.12\textwidth}
 }
    \multicolumn{2}{c}{\textbf{Vishwa Prakash HV}} & \multicolumn{2}{c}{\textbf{Ruta Mehta}}
        \\
        \multicolumn{2}{c}{Chennai Mathematical Institute} & \multicolumn{2}{c}{University of Illinois Urbana-Champaign} 
        \\ 
		    \multicolumn{2}{c}{\href{mailto:vishwa@cmi.ac.in}{\small{\texttt{vishwa@cmi.ac.in}}}} &\multicolumn{2}{c}{\href{mailto:rutameht@illinois.edu}{\small{\texttt{rutameht@illinois.edu}}}}
        \\
        &&&\\
		\multicolumn{4}{c}{\textbf{Prajakta Nimbhorkar}}\\
            \multicolumn{4}{c}{Chennai Mathematical Institute}\\
            \multicolumn{4}{c}{\href{mailto: prajakta@cmi.ac.cin}{\small{\texttt{prajakta@cmi.ac.in}}}}
	\end{tabular}
}
\DeclareMathOperator*{\argmax}{arg\,max}
\DeclareMathOperator*{\argmin}{arg\,min}
\date{}
\begin{document}

\maketitle

\begin{abstract}\label{sec:abstract}
    We study the problem of fair allocation of a set of indivisible goods among $n$ agents with $k$ distinct additive valuations, with the goal of achieving approximate envy-freeness up to any good ($\EFXa$).

    It is known that EFX allocations exist for $n$ agents when there are at most three distinct valuations \cite{hvetalEFXExists24}. \cite{amanatidisetalPushingFrontier24a} showed that a $\EFXT$ allocation is guaranteed to exist when number of agents is at most seven. In this paper, we show that a $\EFXT$ allocation exists for any number of agents when there are at most four distinct valuations.

    Secondly, we consider a relaxation called $\EFX{}$\textit{ with charity}, where some goods remain unallocated such that no agent envies the set of unallocated goods. \cite{akramietalEFXSimpler25} showed that for $n$ agents and any $\varepsilon \in \left(0, \frac{1}{2}\right]$, there exists a $\EFXe$ allocation with at most $\tilde{\mathcal{O}}((n/\varepsilon)^{\sfrac{1}{2}})$ goods to charity.  In this paper, we show that a $\EFXe$ allocation with a $\tilde{\mathcal{O}}(k/\varepsilon)^{\sfrac{1}{2}}$ charity exists for any number of agents when there are at most $k$ distinct valuations.
\end{abstract}



\section{Introduction}\label{sec:intro}

The problem of discrete fair division arises whenever a collection of \emph{indivisible} goods must be assigned to a group of agents, each with their own preferences over these goods. These agents may be people, autonomous software programs, or embodied agents such as robots. The goods in question are not divisible--there is no meaningful way to split a university seat or a domain name in half. In some settings, we are dividing public housing, school seats, or government resources among human participants. In others, we are allocating compute tokens, digital licenses, or scheduled access to systems among artificial agents. The underlying challenge in each of these settings is the same---how can we allocate scarce, indivisible goods in a fair manner?

The quintessential notion of fairness is that of envy-freeness \cite{foleyResourceallocation67, varianEquityenvy74}, which requires that each agent prefers the bundle assigned to them over that of any other agent. While compelling, this notion is often too strong in settings with indivisible goods, where such allocations may fail to exist even in simple cases. This has led to the study of relaxed fairness guarantees \cite{liptonetalapproximatelyfair04, caragiannisetalUnreasonableFairness19, plautroughgardenAlmostEnvyFreeness20, amanatidisetalMultiplebirds20} that retain the spirit of envy-freeness while accommodating the combinatorial nature of discrete settings. Among these, envy-freeness up to any good (\(\EFX{}\)) and its approximation, \(\EFXa\), have emerged as two of the central solution concepts.

An allocation is said to be \emph{envy-free up to any good} (\(\EFX{}\)) if each agent prefers the bundle assigned to them over any strict subset of the bundle assigned to any other agent. Does every instance of the discrete fair division problem admit an \(\EFX{}\) allocation? This seemingly simple question has turned out to be surprisingly difficult to answer. The answer to this question is known under a range of additional assumptions---such as restrictions on the number of agents, the structure of their valuation functions, or the number of unique valuation functions present in the instance. For example, \(\EFX{}\) allocations are known to exist when there are at most three agents \cite{chaudhuryetalEFXExistsJ.ACM24} with additive valuations\footnote{This result is known when two agents have arbitrary monotone valuations and one agent has {\em MMS-feasible} valuation \cite{akramietalEFXSimpler25}}. More recently, it has been shown that \(\EFX{}\) exists for three \emph{types} of agents with additive valuations \cite{hvetalEFXExists24}. That is, for any number of agents provided that the number of distinct valuation functions is at most three. For general valuations, \(\EFX{}\) allocations are known to exist for two agents \cite{plautroughgardenAlmostEnvyFreeness20} and more generally for two types of agents \cite{maharaExtensionAdditive24}.

In light of the challenges surrounding exact \(\EFX{}\) allocations, the literature has developed along several distinct directions---each relaxing the problem in a different way. One such direction considered a multiplicative approximation of \(\EFX{}\), known as \(\EFXa\), which requires that each agent prefers their bundle over any strict subset of the bundle assigned to any other agent up to a factor of \(\alpha\), for some \(\alpha\in (0,1]\). \citet{plautroughgardenAlmostEnvyFreeness20} showed the existence of \(\EFX{0.5}\) for any number of agents under general monotonic valuations. Under additive valuations, \citet{amanatidisetalMultiplebirds20}, showed that a \(\EFX{0.618}\) exists for any number of agents. More recently, \citet{amanatidisetalPushingFrontier24a} showed that a \(\EFXT\) allocation exists when there are at most seven agents with additive valuations. 

Another line of work \cite{chaudhuryetalLittleCharity21, bergeretalAlmostFullAAAI22, ghosaletalAlmostFullAAAI25} considers partial \(\EFX{}\) allocations, under an additional constraint that no agent envies the set of unallocated goods---often referred to as the \emph{charity}.  \citet{chaudhuryetalLittleCharity21} proved that for any number of agents with additive valuations, \(\EFX{}\) allocations always exists with at most \(n-1\) goods to the charity. \citet{bergeretalAlmostFullAAAI22} and \cite{maharaExtensionAdditive24} subsequently improved this bound independently, showing that an \(\EFX{}\) allocation always exists with at most \(n-2\) goods in the charity. Additionally, \cite{bergeretalAlmostFullAAAI22} proved that for four agents, an EFX allocation can always be found with at most one unallocated good. Recently, \citet{ghosaletalAlmostFullAAAI25} prove that an instance with \(n\) agents and \(k\) distinct valuation functions admits an \(\EFX{}\) allocation with at most \(k-2\) goods in the charity.

A third direction combines these two relaxations, seeking partial allocations that are approximately \(\EFX{}\) while minimizing the number of goods given to the charity. \citet{chaudhuryImprovingEFXGuarantees.EC.2021} proved that for any number of agents with additive valuations, and for any \(\varepsilon \in (0,\nicefrac{1}{2}]\) an \(\EFXe\) allocation exists with \(\mathcal{O}(\left( {n}/{\varepsilon}\right)^{\nicefrac{4}{5}})\) goods in the charity. This bound was later tightened by \citet{berendsohnetalFixedPointCycles22} to \(\mathcal{O}(\left( {n}/{\varepsilon}\right)^{\nicefrac{2}{3}})\), and subsequently improved to \(\tilde{\mathcal{O}}(\sqrt{n/\varepsilon})\)  by  \citet{akramietalEFXSimpler25}.

\paragraph{Instances with Few Types of Agents.} In many practical scenarios, it is reasonable to assume that agents can be grouped into a small number of types based on various shared characteristics or needs. For example, Consider the allocation of disaster relief resources. Here the beneficiaries can often be classified by demographic groups such as men and women, or by age brackets such as adults, children, and the elderly. In public health interventions, patients are grouped based on risk profiles. Similarly, in public housing assignments, applicants are frequently categorized into types based on household composition---such as single adults, families with children, or senior citizens---with each group having distinct preferences. These kinds of structured populations give rise to instances with a limited number of valuation types, although the total number of agents is large.

\paragraph{Our Contribution:} We consider the problem of \(\EFXa\) allocations for \(k\) types of agents with additive valuations. Our contributions are listed below:
\begin{itemize}
    \item In Theorem~\ref{thm:4types}, we show that for an instance with at most four types of agents, there always exists a complete \(\EFXT\) allocation.
    \item In Theorem~\ref{thm:rainbow-types}, we show that for an instance with at most \(k\) types of agents, there exists a \(\EFXe\) allocation with at most \(\tilde{\mathcal{O}}(\sqrt{k/\varepsilon})\) goods in the charity.
\end{itemize}
In the process of obtaining our first result, we discover a couple of errors in the work of \cite{amanatidisetalPushingFrontier24a} as discussed in Appendix~\ref{sec:error-in-3pa}. We first fix these, and then show the result for four types of agents.

\paragraph{Additional Related Work:}
We discuss some of the relevant literature here, and refer the reader to a comprehensive survey by \cite{amanatidis2023fair} for a more detailed literature review.

Apart from partial and/or approximate EFX allocations, an interesting constraint considered in literature is to restrict the number of possible values agents can have for the goods. In this context, \cite{amanatidis2021maximum} showed that EFX allocations exist for bi-valued instances, where each agent assigns one of two possible values to each good. This was extended to Pareto optimal EFX allocations by \cite{garg2023computing}. Existence of EFX allocations was shown by \cite{gorantlaetalFairallocation} for multi-set of goods, with the multiplicity as a parameter.

\section{Notations and Preliminaries}\label{sec:prelims}

We are given a fair division instance \(\mathcal{I}=\langle \N, \M, \V \rangle\), where \(\N\) is a set of \(n\) \emph{agents}, \(\M\) a set of \(m\) indivisible \emph{goods} and \(\V=\{v_i\}_{i \in \N}\) the \emph{valuation profile}, with each \(v_i:2^\M\to\R\) denoting agent \(i\)'s utility for any given set of goods. Throughout this paper we assume the valuations to be \emph{additive}. That is, $v_i(S) = \sum_{g \in S} v_i(g)$, for all \(i\in \N\), \(g\in \M\).

We use the term \emph{bundle} to denote a subset of goods. An \emph{allocation} \(X=\langle X_1,X_2,\ldots,X_n\rangle\) is a tuple of mutually disjoint bundles such that the bundle \(X_j\) is assigned to agent~\(j\). An allocation is said to be \emph{complete} if \(\bigcup_{j=1}^n X_j = \M\). An allocation that is not complete is called a \emph{partial allocation}. Given an allocation \(X\), The largest cardinality of any bundle in $X$ is called the {\em size} of $X$. Further, for a partial allocation \(X\), the set of unallocated goods \(\M\setminus\bigcup_{j=1}^n X_j\) is referred to as the \emph{pool}, denoted by \(\Pl(X)\).

Consider an agent \(a \in \N\), a good \(g \in \M\), and two subsets \(S, T \subseteq \M\). Let \(v_a\) denote the valuation function of agent \(a\). For brevity, we write \(v_a(g)\) as shorthand for \(v_a(\{g\})\), and use \(S \setminus g\) and \(S \cup g\) to denote \(S \setminus \{g\}\) and \(S \cup \{g\}\), respectively. We write \(S >_a T\) to mean that agent \(a\) strictly prefers \(S\) to \(T\), i.e., \(v_a(S) > v_a(T)\), and similarly use \(<_a\), \(\geq_a\), \(\leq_a\), and \(=_a\) for the corresponding comparisons.

\subsection{Fairness and Efficiency Notions: }\label{subsec:fairness_notions}
The fairness criteria relevant to our setting are introduced below. We start with the standard \emph{envy-freeness} notion, and then present some of the relaxations of it, namely  \(\EFXa\)  and \(\EFX{}\).

\begin{definition}[Envy, Envy-Freeness,  \(\EFXa\) ]
    Given an allocation \(X\), a constant \(\alpha\in (0,1]\), and two agents \(a,b\in \N\), we say: 
    \begin{itemize}
        \item Agent \(a\) does not \(\alpha\)-envy agent \(b\) or, equivalently, agent \(a\) is \(\alpha\)-envy-free towards agent \(b\) if \(v_a(X_a) \geq \alpha\cdot v_a(X_b)\).
        \item Agent \(a\) does not \(\alpha\)-envy agent \(b\) \emph{up to any good} or, equivalently, agent \(a\) is  \(\EFXa\)  towards agent \(b\) if \(v_a(X_a) \geq \alpha\cdot v_a(X_b\setminus h)\) for all \(h\in X_b\).
    \end{itemize}
    An allocation \(X\) is  \(\EFXa\)  if every agent is  \(\EFXa\)  towards every other agent. When \(\alpha=1\), we simply say that the allocation is \(\EFX{}\).
\end{definition}

Another relaxation of envy-freeness is \(\EFX{}\) with charity, which allows some goods to remain unallocated, as long as no agent envies the set of unallocated goods. More formally, it is defined as follows:

\begin{definition}[$\EFXa$ with Charity, \cite{chaudhuryetalLittleCharity21}]
    A partial allocation $X$ is said to be $\EFXa$ with charity $\mathcal{P}(X)$, if $X$ is an $\EFXa$ allocation in which no agent envies the set $\mathcal{P}(X)$. That is, for every agent $a\in \N$, $v_a(X_a)\ge v_a(\mathcal{P}(X))$.
\end{definition}

In the context of EFX allocations, \citet{chaudhuryetalEFXExistsJ.ACM24} showed that one can assume without loss of generality that the input instances are \emph{non-degenerate}, meaning that no agent values two different bundles equally. This assumption simplifies the analysis and is often used in the literature. Throughout this paper, we too assume the instances to be non-degenerate. We recall this definition below.

\begin{definition}[Non-degenerate Instance, \cite{chaudhuryetalEFXExistsJ.ACM24,akramietalEFXSimpler25}]\label{def:non-degeneracy}
    An instance \(\langle \N, \M, \V \rangle\) of fair allocation is called non-degenerate if no agent values two different bundles equally. That is, for any agent \(a\in \N\) and any two bundles \(S, T\subseteq \M\), if \(v_a(S)=v_a(T)\), then \(S=T\).
\end{definition}

They show that for any positive integer \(n\), if every non-degenerate instance of fair division on \(n\) agents admits EFX allocations, then every the instances, including degenerate ones, admit EFX allocations. We note that this result also holds for  \(\EFXa\)  allocations. For the sake of completeness, we provide a proof of this result in Appendix~\ref{sec:non-degeneracy-assumption}.

We next define the \emph{Pareto optimality} criterion, which is a standard efficiency notion, as well as a measure of progress of iterative procedures in the fair allocation literature.
\begin{definition}[Pareto Domination and Optimality]
    Given two allocations \(X\) and \(Y\), we say that \(X\) \emph{Pareto dominates} \(Y\), denoted by \(X\succ Y\), if for every agent \(a\in \N\), \(v_a(X_a) \geq v_a(Y_a)\) and there exists at least one agent \(b\in \N\) such that \(v_b(X_b) > v_b(Y_b)\). An allocation is said to be \emph{Pareto optimal} if it is not Pareto dominated by any other allocation.
\end{definition}

\paragraph{Envy Graphs: }
An important tool in the study of fair allocation of indivisible goods is the  notion of \emph{envy graphs}, which captures the envy relationships among agents in a given allocation. We now define various types of envy graphs used in this paper. In these definitions, an allocation may be complete or partial.

\begin{definition}[Envy Graph \(G\)]
    Given an allocation \(X\), the envy graph \(G(X) = (\N, E(X))\), is a directed graph on the set of agents. There exists a directed edge \((a,b)\) if and only if agent \(a\) envies agent \(b\), that is, \(E(X)=\{\,(a,b)\mid v_a(X_a) < v_a(X_b)\}\).
\end{definition}

\begin{definition}[Reduced Envy Graph \(G_r\) \cite{amanatidisetalPushingFrontier24a}]
    Given an allocation \(X\), the \emph{reduced envy graph on \(X\)}, denoted by \(G_\mathrm{r}(X) = (\N, E_\mathrm{r}(X))\), is a sub-graph of the envy graph \(G(X)\), obtained by removing each edge \((a, b)\), such that \(|X_a|>1\), \(|X_b|=1\), and \(a\) is \(\frac{2}{3}\)-envy-free towards \(b\). 
    That is \(E_\mathrm{r}(X) = E(X) \setminus \{\,(a, b) \mid |X_a|>1, |X_b|=1, \text{ and } v_a(X_a) \ge \frac{2}{3} v_a(X_b)\}\).
\end{definition}

\begin{definition}[Enhanced Envy Graph \(G_e\) \cite{amanatidisetalPushingFrontier24a}]
    Given an allocation \(X\), the \emph{enhanced envy graph on \(X\)}, \(G_\mathrm{e}(X) = (\N, E_\mathrm{e}(X))\), is a super-graph of the reduced envy graph \(G_\mathrm{r}(X)\), obtained by adding edges of the form \((a, s)\), where \(s\) is a source (i.e., vertex of in-degree 0) in \(G_\mathrm{r}(X)\), \(|X_a|=1\), \(|X_s|>1\), and \(v_a(X_s) \geq \frac{2}{3} v_a(X_a)\). These edges are referred to as the \emph{red edges}. 
\end{definition}

Given an allocation, the \textsc{Envy-Cycle-Elimination} (ECE) procedure of \citet{liptonetalapproximatelyfair04} iteratively finds a directed cycle in the envy graph and cyclically permutes the bundles along the edges of the cycle. This strictly increases the utility of agents in on the cycle and does not decrease anyone's utility, resulting in a Pareto dominating allocation. The process repeats until the envy graph becomes acyclic. Importantly, if the initial allocation is \(\alpha\)-EFX, then every intermediate and the final allocation produced by ECE is also  \(\EFXa\) \cite{plautroughgardenAlmostEnvyFreeness20}, and the final envy graph is acyclic. Moreover, \citet{amanatidisetalPushingFrontier24a} show that the ECE procedure maintains \(\EFXT\) when run on either the extended or the reduced envy graphs. In Section~\ref{sec:corrected-proof-ece}, we identify a small error in the proof of termination of the ECE procedure in \cite{amanatidisetalPushingFrontier24a} and provide a corrected proof.

\paragraph{Critical Goods: }\citet{markakissantorinaiosImprovedEFX23} showed that, for any \(\beta\in(0,1)\) and any \(\alpha\in(0,1)\), given a partial  \(\EFXa\) allocation in which no agent has a good in the pool that she values by atleast a \(\beta\)-factor of her own bundle, (referred to as a {\em critical good}, defined below), then completing this partial allocation using the \textsc{Envy-Cycle-Elimination} procedure (allocate a good to the source and eleminate cycle) yields a complete \(\min\left(\alpha,\frac{1}{\beta +1}\right)\)-EFX allocation. We recall this result here.

\begin{definition}[Critical Good, \cite{markakissantorinaiosImprovedEFX23}]
    Given a partial allocation \(X\), a good \(g\in \Pl(X)\) is called \(\beta\)-\emph{critical} good for an agent \(a\in \N\) if \(a\) prefers \(g\) by a factor of at least \(\beta\). That is, \(v_a(g) > \beta\cdot v_a(X_a)\).
\end{definition}

\begin{lemma}[\cite{markakissantorinaiosImprovedEFX23}]\label{lemma:markakis-santorinaios}
    Given a partial  \(\EFXa\)  allocation \(X\) in which no agent has a \(\beta\)-critical good in the pool, there exists a complete \(\EFX{\min\left(\alpha,\frac{1}{\beta +1}\right)}\) allocation \(Y\) such that \(Y\succ X\).
\end{lemma}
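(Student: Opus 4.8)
The plan is to run the \textsc{Envy-Cycle-Elimination} procedure while feeding pool goods to sources, and argue that (a) the process terminates with a complete allocation that Pareto dominates $X$, and (b) the final allocation is $\EFX{\min(\alpha,\frac{1}{\beta+1})}$. First I would set up the iterative procedure: starting from $X$, as long as the pool is nonempty, run ECE on the current envy graph until it becomes acyclic (this only produces Pareto-improving steps and, by the cited results of \citet{plautroughgardenAlmostEnvyFreeness20}, preserves $\EFXa$-ness), then pick a source $s$ (a vertex of in-degree $0$, which exists because the graph is now a DAG) and give it an arbitrary good $g$ from the pool. I would track two things across iterations: the number of unallocated goods strictly decreases each time we allocate to a source, and the ECE cycle-permutation steps in between terminate because each strictly increases the sum of utilities (or, more carefully, each moves the allocation up in a potential that cannot increase forever since there are finitely many allocations and non-degeneracy rules out ties). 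So the overall procedure halts with $\Pl = \emptyset$, i.e., a complete allocation $Y$, and $Y \succ X$ since at least one good got moved from the pool to some agent who strictly gained, while nobody ever lost utility.

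The heart of the argument is the approximation guarantee, and this is where I expect the main obstacle. I need to show every agent $a$ is $\EFX{\gamma}$ towards every other agent $b$ in the final allocation, where $\gamma=\min(\alpha,\frac{1}{\beta+1})$. The delicate case is an agent $s$ who received a pool good $g$ at some point: at the moment $g$ was allocated, $s$ was a source with no incoming envy edges, so $s$ envied nobody, hence $s$ is trivially $\EFX{\gamma}$ towards everyone right after receiving $g$ — but I must check $s$ doesn't become problematically envious \emph{later}, and that other agents remain $\EFX{\gamma}$ towards $s$'s enlarged bundle. The key quantitative input is the no-$\beta$-critical-good hypothesis: at the moment just before $g$ is handed to source $s$, the allocation is $\EFXa$ and has no $\beta$-critical good in the pool, so in particular $v_a(g) \le \beta\cdot v_a(X'_a)$ for the current bundle $X'_a$ of every agent $a$ — wait, I should be careful whether the "no critical good" property is preserved as bundles change; the cleanest route (which I believe is the one in \citet{markakissantorinaiosImprovedEFX23}) is to argue that we only ever add \emph{one} extra good to a source's bundle relative to something that was already $\EFXa$, and use that a source bundle $X'_s$ consisted of a single good before (or, if $|X'_s|\ge 1$, that the critical-good bound controls the marginal). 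Concretely: for any other agent $a$ and the bundle $X'_s \cup g$ of the source after receiving $g$, and any $h \in X'_s\cup g$, I want $v_a(X'_a) \ge \gamma\, v_a((X'_s\cup g)\setminus h)$. When $h=g$ this is the old $\EFXa$ (hence $\ge \alpha \ge \gamma$) guarantee; when $h\ne g$, I bound $v_a((X'_s\cup g)\setminus h) = v_a(X'_s\setminus h) + v_a(g)$, use $\EFXa$ to get $v_a(X'_s\setminus h)\le v_a(X'_a)/\alpha$ and the non-critical bound $v_a(g)\le \beta v_a(X'_a)$ — actually to land at $\frac{1}{\beta+1}$ I think the relevant source has a singleton bundle (sources in these arguments typically get reset), so $X'_s\setminus h = \emptyset$ and I just need $v_a(X'_a)\ge \frac{1}{\beta+1} v_a(\{g\})$ which would follow if $v_a(X'_a)\ge \frac{1}{\beta+1}(v_a(X'_a\cup g)) $... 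I would need to chase the exact invariant from the source reference rather than reconstruct it from scratch, and reconcile it with the $\alpha$ already present.

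To organize the proof I would (i) state an invariant maintained throughout: the current allocation is $\EFXa$, no pool good is $\beta$-critical for any agent, and — the useful extra fact — whenever a vertex is a source in the (reduced/ordinary) envy graph its bundle has a special structure allowing the marginal argument above; (ii) show ECE cycle-swaps and source-augmentations both preserve this invariant (cycle-swaps because utilities only go up, so nothing becomes critical and $\EFXa$ is preserved by Plaut–Roughgarden; augmentations because we only touch a source, which was envied by nobody, and the good we add was non-critical); (iii) conclude that on termination the complete allocation $Y$ satisfies $\EFXa$ everywhere \emph{and} the sharper $\EFX{1/(\beta+1)}$ wherever an agent's bundle was grown by a pool good, giving $\EFX{\min(\alpha,1/(\beta+1))}$ overall; (iv) note $Y\succ X$ from the monotone-improvement observations above. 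The main obstacle, as flagged, is pinning down exactly which bundles can be sources and hence exactly where the $\frac{1}{\beta+1}$ factor (as opposed to $\alpha$) is the binding one — this is a careful but not deep case analysis, and it is essentially the content of \citet{markakissantorinaiosImprovedEFX23}'s original argument, which I would follow.
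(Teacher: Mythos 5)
There is a genuine gap at exactly the point you flagged: the quantitative step that produces the $\frac{1}{\beta+1}$ factor. Your proposed bound for an agent $a$ looking at the augmented source bundle $X'_s\cup\{g\}$ uses the $\EFXa$ guarantee, $v_a(X'_s\setminus h)\le v_a(X'_a)/\alpha$, together with non-criticality $v_a(g)\le\beta\,v_a(X'_a)$. That only yields $v_a\bigl((X'_s\cup g)\setminus h\bigr)\le\bigl(\tfrac{1}{\alpha}+\beta\bigr)v_a(X'_a)$, i.e.\ a factor $\frac{\alpha}{1+\alpha\beta}$, which is strictly smaller than $\min\bigl(\alpha,\frac{1}{\beta+1}\bigr)$ whenever $\alpha<1$, so it does not prove the lemma. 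Your fallback guess --- that the relevant source has a singleton bundle so $X'_s\setminus h=\emptyset$ --- is false in general (in this very paper the sources fed by the completion step have bundles of size two), and it is also unnecessary. The missing idea is that a source $s$ of the envy graph is \emph{unenvied outright}: $v_a(X'_a)\ge v_a(X'_s)$ for every agent $a$, with no $\alpha$ loss. Combining this with $v_a(g)\le\beta\,v_a(X'_a)$ gives, for any $h\in X'_s\cup\{g\}$, $v_a\bigl((X'_s\cup g)\setminus h\bigr)\le v_a(X'_s)+v_a(g)\le(1+\beta)\,v_a(X'_a)$, which is exactly the $\frac{1}{\beta+1}$ guarantee toward every bundle that ever gets augmented; untouched bundles keep the $\alpha$ guarantee, hence $\min\bigl(\alpha,\frac{1}{\beta+1}\bigr)$ overall. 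You also need (and almost state) the observation that resolves your own worry about preservation: since ECE swaps and source augmentations never decrease any agent's utility, the inequality $v_a(g)\le\beta\,v_a(X_a)\le\beta\,v_a(X'_a)$ persists for every pool good throughout the completion, so non-criticality is maintained without further argument.

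Two smaller points. First, your sentence ``$s$ was a source with no incoming envy edges, so $s$ envied nobody'' confuses edge directions: a source has no \emph{incoming} edges, meaning nobody envies $s$; that is precisely the fact needed above, whereas $s$'s own guarantee toward others is maintained simply because her utility only increases. Second, the termination and $Y\succ X$ parts of your plan are fine (cycle resolutions strictly increase the utilities of agents on the cycle, augmentations strictly increase the source's utility under non-degeneracy, and the pool shrinks), so once the envy-toward-sources argument above replaces the $\alpha$-based bound, the proof goes through along the lines you sketched.
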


In this work, one of our primary interests is in finding complete \(\EFXT\) allocations. Therefore, unless stated otherwise, we will assume \(\alpha = \frac{2}{3}\) and \(\beta = \frac{1}{2}\). Consequently, when we refer to a \emph{critical good}, we mean a \(\frac{1}{2}\)-critical good by default. 

\subsection{Instances with \(k\) Types of Agents}
Our main focus is on instances with \(k\) distinct valuations, called as the \(k\)-type instances. Such instances are considered in literature, e.g. in \cite{maharaExtensionAdditive24, hvetalEFXExists24}. We recall some properties of such instances from \cite{hvetalEFXExists24}. An instance \(\langle \N, \M, \V \rangle\) of fair allocation is called a {\em \(k\)-type instance} if the agents in \(\N\) can be partitioned into \(k\) parts, such that all agents in the same part have the same valuation function. That is, \(\N=\N^1\uplus\N^2\cdots\N^k\),  where \(\N^i=\{a_1^i, a_2^i,\ldots, a_{p_i}^i\}\) represents the set of all agents who have the identical valuation function \(v_i\). We call \(\N^i\) a \emph{group} of agents.

\paragraph{Leading Agents and the Ordering Invariant:} Given an allocation \(X\) for a \(k\)-type instance, we denote the bundle assigned to agent \(a_i^t \in \N^t\) as \(X_i^t\). Furthermore, we assume, without loss of generality,  that within each group, the bundles are sorted in non-decreasing order of their valuations. That is, for any two agents \(a_j^t, a_\ell^t\in \N^t\), if \(j<\ell\), then \(v_t(X_j^t) \le v_t(X_\ell^t)\). This is referred to as the \emph{ordering invariant}. As a consequence, for any \(t\in [k]\), agent \(a_1^t\) is the agent with the least valued bundle in \(\N^t\), and is referred to as the \emph{leading agent} of the group \(\N^t\).

Given a \(k\)-type instance, we denote the set of leading agents as \(\mathcal{L} = \{a_1^1, a_1^2, \ldots, a_1^k\}\). The leading agent of group \(\N^t\) is denoted by \(a_1^t\). The following proposition captures some properties of \(k\)-type instances and partial allocations, which will be useful in our proofs.

\begin{proposition}[Properties of $k$-types Instances and Partial Allocations \cite{hvetalEFXExists24}]\label{prop:k-types-envy-properties}
Let $\langle \N, \M, \V \rangle$ be a $k$-types instance, and let $X$ be a partial allocation in which every agent receives at least one good. Then the following hold:
\begin{enumerate}
    \item The envy graph $G(X)$ has at most $k$ sources, and every source is a leading agent of its group.
    \item For any group $\N^t$ and any non-leading agent $a_j^t \in \N^t$ ($j > 1$), if $a_j^t$ envies a bundle $S \subseteq \M$, then the leading agent $a_1^t$ also envies $S$.
    \end{enumerate}
\end{proposition}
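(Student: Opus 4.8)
The plan is to prove the two parts essentially independently, each by reducing to the corresponding statement about the ordinary envy graph via the ordering invariant. For part~2, which is the more elementary of the two, I would start with a non-leading agent $a_j^t \in \N^t$ that envies some bundle $S \subseteq \M$, i.e.\ $v_t(X_j^t) < v_t(S)$. By the ordering invariant, since $j > 1$ we have $v_t(X_1^t) \le v_t(X_j^t)$, and because all agents in $\N^t$ share the valuation $v_t$, this gives $v_t(X_1^t) \le v_t(X_j^t) < v_t(S)$, so $a_1^t$ also envies $S$. The only subtlety is whether $S$ is allowed to be one of the group's own bundles (in which case strictness needs the non-degeneracy assumption to rule out $v_t(X_1^t) = v_t(X_j^t)$ without $X_1^t = X_j^t$), but the inequality chain is strict at the last step regardless, so this goes through cleanly.

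For part~1, I would argue that every source of $G(X)$ is a leading agent, and then conclude the bound on the number of sources. Suppose $s$ is a source of $G(X)$; then $s$ is envied by nobody, in particular $s$ is envied by no agent in its own group. Write $s = a_j^t$ for some group $\N^t$. If $j > 1$, then by the ordering invariant $v_t(X_1^t) \le v_t(X_j^t)$; combined with non-degeneracy (the bundles $X_1^t$ and $X_j^t$ are distinct as they belong to distinct agents in a valid allocation, provided $s \ne a_1^t$), we get the strict inequality $v_t(X_1^t) < v_t(X_j^t)$, which says $a_1^t$ envies $a_j^t = s$ — contradicting that $s$ is a source. Hence $j = 1$, i.e.\ $s$ is the leading agent of $\N^t$. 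Since there are exactly $k$ groups and each contributes at most one leading agent, $G(X)$ has at most $k$ sources.

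The step I expect to need the most care is the appeal to non-degeneracy in part~1: I must ensure that two different agents in the same group genuinely hold \emph{distinct} bundles, so that $v_t(X_1^t) = v_t(X_j^t)$ would force $X_1^t = X_j^t$ and hence $a_1^t = a_j^t$. This is immediate because an allocation is a tuple of mutually disjoint bundles and each agent in $X$ receives at least one good by hypothesis, so distinct agents have disjoint, nonempty, hence distinct, bundles. With that observation in place the ordering invariant upgrades to a strict inequality exactly when the two agents differ, which is all that is needed. Everything else is a direct unwinding of the definitions of the envy graph and the leading agent.
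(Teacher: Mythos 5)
Your proof is correct: the ordering invariant gives part~2 immediately, and combining it with non-degeneracy (using that distinct agents hold disjoint, nonempty, hence distinct bundles under the hypothesis that everyone gets at least one good) yields the strict inequality needed for part~1. The paper itself only cites this proposition from prior work without reproducing a proof, and your argument is exactly the standard one that the cited result relies on, so there is nothing further to flag.
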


The following corollary is a simple observation that follows from the ordering invariant:
\begin{corollary}\label{cor:critical}
    For any group $\N^t$ and any non-leading agent $a_j^t \in \N^t$ ($j > 1$), if a good $g \in \Pl(X)$ is $\beta$-critical for $a_j^t$, then $g$ is also $\beta$-critical for the leading agent $a_1^t$.
\end{corollary}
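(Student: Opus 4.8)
The plan is to derive this directly from the \emph{ordering invariant}, which is the only structural fact needed. Recall that all agents in the group $\N^t$ share the same valuation function $v_t$, and that the ordering invariant guarantees $v_t(X_1^t) \le v_t(X_j^t)$ for every $j > 1$, since $a_1^t$ is by definition the leading agent (the one holding the least-valued bundle in its group).

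First I would unpack the hypothesis: $g \in \Pl(X)$ being $\beta$-critical for $a_j^t$ means, by the definition of a critical good, that $v_t(g) > \beta \cdot v_t(X_j^t)$ (using that $v_{a_j^t} = v_t$). Next I would combine this with the ordering-invariant inequality $v_t(X_j^t) \ge v_t(X_1^t)$, multiplied through by the nonnegative constant $\beta$, to obtain $v_t(g) > \beta \cdot v_t(X_j^t) \ge \beta \cdot v_t(X_1^t)$. Since $v_{a_1^t} = v_t$ as well, this is exactly the statement that $g$ is $\beta$-critical for the leading agent $a_1^t$, which completes the argument.

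There is essentially no obstacle here: the claim is a one-line consequence of the ordering invariant together with the fact that agents within a group are valuation-identical. The only point worth stating carefully is that the same valuation function $v_t$ is being used on both sides — for $a_j^t$ to evaluate $g$ and its own bundle, and for $a_1^t$ — so that the transitivity of the inequalities goes through without any change of valuation. No case analysis, no appeal to envy graphs, and no use of $X$ being partial versus complete is required.
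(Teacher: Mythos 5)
Your proof is correct and matches the paper's intent exactly: the paper presents this corollary as an immediate consequence of the ordering invariant (without spelling out the chain of inequalities), and your one-line derivation $v_t(g) > \beta\, v_t(X_j^t) \ge \beta\, v_t(X_1^t)$, using that all agents in $\N^t$ share the valuation $v_t$, is precisely that argument.
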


\subsection{Background on rainbow cycle number and result of \cite{chaudhuryImprovingEFXGuarantees.EC.2021}}

Our second result deals with the problem of finding a \(\EFXa\) allocation with bounded charity. \citet{chaudhuryImprovingEFXGuarantees.EC.2021} give a relation between the problem of finding an approximate \(\EFX{}\) allocation with bounded charity and the  rainbow cycle number. Here, we recall some of the relevant results and definitions from \cite{chaudhuryImprovingEFXGuarantees.EC.2021}.

\begin{definition}[Rainbow cycle number \cite{chaudhuryImprovingEFXGuarantees.EC.2021}]\label{def:rainbow}
    Given an integer $d>0$, the rainbow cycle number $R(d)$ is the largest $k$ such that
there exists a $k$-partite graph $G=(V_1\cup V_2 \cup \ldots\cup V_k,E)$ with the following properties:
\begin{itemize}
    \item each part has at most d vertices, i.e., $|V_i|\leq d$, and

    \item every vertex has at least one incoming edge from every part other than the one containing it,
and
\item there exists no cycle $C$ in $G$ that visits each part at most once.
\end{itemize}
\end{definition}
We refer to cycles that visit each part at most once as ``rainbow'' cycles.

The following theorem relates the rainbow cycle number with approximate \(\EFX{}\) allocations with bounded charity:
\begin{theorem}[Theorem 4, \cite{chaudhuryImprovingEFXGuarantees.EC.2021}]
    Let \(h(d)=d\cdot R(d)\) and \(\varepsilon\in (0,\frac{1}{2}]\). Let 
    \(h^{-1}\big(\frac{n}{\varepsilon}\big)\) be the smallest integer such that \(h(d)\geq \frac{n}{\varepsilon}\). Then there is a \((1-\varepsilon)\)-EFX allocation \(X\) and a set of unallocated goods $\mathcal{P}(X)$ such that 
    \(|\mathcal{P}(X)|=O\big(\frac{4n}{\varepsilon\cdot h^{-1}(\frac{2n}{\varepsilon})}\big)\), and no one envies $\mathcal{P}(X)$.
\end{theorem}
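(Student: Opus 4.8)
The plan is to reprove this along the lines of \citet{chaudhuryImprovingEFXGuarantees.EC.2021}: maintain a partial $(1-\varepsilon)$-EFX allocation, measure progress by a lexicographic potential, and show that whenever the pool is ``large'' one can strictly increase the potential by exhibiting a rainbow cycle in a carefully built auxiliary digraph. Concretely, call a partial allocation $X$ \emph{good} if it is $(1-\varepsilon)$-EFX and no agent envies $\Pl(X)$ (after a trivial preprocessing such an allocation exists), and order good allocations by $\Phi(X)=(\,|\M\setminus\Pl(X)|,\ \text{sorted vector of the }v_i(X_i)\,)$ compared lexicographically, the number of allocated goods dominating. Fix a good allocation $X$ that is $\Phi$-maximal among those reachable from the start by (a) envy-cycle elimination and (b) \emph{augmenting moves} that strictly increase $\Phi$. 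Since ECE preserves $(1-\varepsilon)$-EFX and only Pareto-improves (recalled in the preliminaries), after ECE we may assume the envy graph of $X$ is acyclic, so it has a source. It then remains only to bound $|\Pl(X)|$.

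Suppose, for contradiction, that $|\Pl(X)|$ exceeds $c\cdot\frac{4n}{\varepsilon\cdot h^{-1}(2n/\varepsilon)}$ for a suitable absolute constant $c$, and set $d=h^{-1}(2n/\varepsilon)$, so $d\cdot R(d)=h(d)\ge 2n/\varepsilon$. I would build a multipartite digraph $H$ so that: (i) the parts are indexed by (groups of $O(1)$) pool goods, so $H$ has more than $R(d)$ parts; (ii) each part has at most $d$ vertices --- in the part of a good $g$, the vertices correspond to the at most $d$ bundles whose owners value $g$ within a controlled band relative to their own bundle, where the banding comes from the no-envy-of-pool inequalities $v_i(\Pl(X))\le v_i(X_i)$, which cap the number of relevant bands per agent and, via a pigeonhole over the large pool, the number of bundles per part; and (iii) an edge goes from a vertex of part $g'$ to a vertex of part $g$ exactly when the associated single swap --- remove one good from one bundle, insert a pool good, reassign bundles between the two agents --- keeps the allocation $(1-\varepsilon)$-EFX. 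Using acyclicity of the envy graph (and its source) together with the non-existence of $\Phi$-increasing augmenting moves, one verifies the two structural requirements of \Cref{def:rainbow}: every vertex has an incoming edge from every other part. Having more than $R(d)$ parts, $H$ must then contain a rainbow cycle; decoding it gives a cyclic sequence of legal single swaps --- a ``rotation'' --- after which every agent is weakly better off, $(1-\varepsilon)$-EFX is preserved, and the number of allocated goods strictly increases. That is an augmenting move, contradicting $\Phi$-maximality. Hence $|\Pl(X)|\le c\cdot\frac{4n}{\varepsilon\cdot h^{-1}(2n/\varepsilon)}$. Finally, because $R$ (and hence $h$) is non-decreasing and grows by at most a constant factor between consecutive integers, the minimal $d=h^{-1}(2n/\varepsilon)$ has $h(d)=\Theta(n/\varepsilon)$, so $R(d)=\Theta\!\big(n/(\varepsilon d)\big)$ and the bound matches the stated $O\!\big(\frac{4n}{\varepsilon\, h^{-1}(2n/\varepsilon)}\big)$; the allocation $X$ is $(1-\varepsilon)$-EFX and pool-envy-free by construction.

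I expect the main obstacle to be the simultaneous design of the bucketing of the pool and of the edge relation of $H$ so that two competing demands hold at once: (a) in any $\Phi$-maximal good allocation with a large pool, every vertex genuinely receives an incoming edge from every other part --- which forces one to exploit the acyclic envy structure, the source agents, and the precise $\varepsilon$-slack in $(1-\varepsilon)$-EFX --- and (b) any rainbow cycle decodes into a valid rotation that is simultaneously $\Phi$-increasing and $(1-\varepsilon)$-EFX-preserving. Getting the $\varepsilon$-bookkeeping right so that no swap ever crosses the $(1-\varepsilon)$ threshold, while keeping the per-part size at $d$ and the number of parts linear in $|\Pl(X)|$, is the technical heart; the envy-cycle-elimination and potential-function scaffolding around it is routine.
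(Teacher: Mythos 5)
This statement is not proved in the paper at all: it is quoted verbatim as Theorem~4 of \citet{chaudhuryImprovingEFXGuarantees.EC.2021}, so the only fair comparison is with the known argument, whose working parts the paper itself summarizes in Proposition~\ref{prop:CGM-champ-valuable} and reuses in Theorem~\ref{thm:rainbow-types}. Measured against that, your proposal is a high-level plan rather than a proof, and two of its concrete design choices would fail. First, your progress measure is a lexicographic potential in which the number of allocated goods dominates, and you claim that decoding a rainbow cycle yields a rotation after which ``the number of allocated goods strictly increases.'' That is not true for the updates this machinery produces: along a champion edge the champion receives a \emph{witness subset} $S \subseteq X_s \cup \{g\}$, and the discarded goods of $X_s$ go back to the pool, so the number of allocated goods can stay the same or even drop. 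The actual progress measure is Pareto domination of the utility vector (Part~4 of Proposition~\ref{prop:CGM-champ-valuable}, together with non-degeneracy to bound the number of improvements); with your potential the contradiction at the $\Phi$-maximal allocation simply does not follow. Second, your edge relation (``a single swap: remove one good from one bundle, insert a pool good, reassign bundles between the two agents, exactly when this keeps $\EFXe$'') is not the champion relation, and for it you have no argument that every vertex receives an incoming edge from every other part --- which is the property that makes Definition~\ref{def:rainbow} applicable. In the known proof this property comes from the existence of a heavy-champion of $X_s \cup \{g\}$ for every source $s$ and every relevant pool good $g$, plus the fact that a champion must find $g$ valuable; you acknowledge this verification as an ``obstacle'' rather than supplying it.

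The part-size bound is similarly left open. You invoke ``banding'' of the no-envy-of-pool inequalities and a ``pigeonhole over the large pool'' to cap each part at $d$ vertices, but the mechanism that actually works is different and sharper: since no agent (heavily) envies $\mathcal{P}(X)$, each agent finds at most $2/\varepsilon$ pool goods valuable (Part~3 of Proposition~\ref{prop:CGM-champ-valuable}); goods valuable to more than $d$ agents are therefore at most $2n/(\varepsilon d)$ in number and are set aside as high-demand goods, and only the low-demand goods index the parts, each part having at most $d$ vertices by definition. Your sketch neither makes this split nor provides a substitute argument, yet the final bound $O\bigl(\frac{n}{\varepsilon\, h^{-1}(2n/\varepsilon)}\bigr)$ is exactly the sum of the high-demand count and $R(d)$, so without it the arithmetic at the end has nothing to rest on. In short, the skeleton (partial $\EFXe$ allocation, envy-cycle elimination, rainbow cycles forcing progress) is the right one, but the specific potential, the edge relation, and the part-size argument you propose are either incorrect or missing precisely at the steps you defer, so the proposal does not constitute a proof of the theorem.
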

\citet{chaudhuryImprovingEFXGuarantees.EC.2021} show that $R(d)\in O(d^4)$ and thus show the existence of a $\EFXe$ allocation with $O((\frac{n}{\varepsilon})^{4/5}$ charity.
The following theorem from \cite{akramietalEFXSimpler25} gives an almost tight upper bound on the rainbow cycle number:
\begin{theorem}[Theorem 3, \cite{akramietalEFXSimpler25}]
Given any integer $d>0$, the rainbow cycle number $R(d)\in O(d\log d)$.
\end{theorem}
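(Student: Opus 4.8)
The plan is to get the bound from a short probabilistic argument built on the fact that \emph{rainbow-cycle-freeness forces every transversal to induce a DAG}. So fix any graph $G=(V_1\cup\cdots\cup V_k,E)$ of the type in \Cref{def:rainbow}: it is $k$-partite, each $|V_i|\le d$, every vertex has an incoming edge from every part other than its own, and $G$ has no rainbow cycle. It suffices to show $k=O(d\log d)$.

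First I would record the key structural observation: for \emph{any} transversal $S=\{s_1,\dots,s_k\}$ (one vertex $s_i$ from each $V_i$), the induced subgraph $G[S]$ is acyclic. Indeed, $G[S]$ contains at most one vertex per part, so any directed cycle inside it visits each part at most once, i.e.\ is a rainbow cycle in $G$ — impossible. A nonempty DAG has at least one source, so $G[S]$ always has a vertex with in-degree $0$ inside $G[S]$.

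Next I would make $S$ random, picking each $s_i$ independently and uniformly from $V_i$, and count sources in expectation. Since $G[S]$ always has a source, linearity of expectation gives $\sum_{i=1}^{k}\Pr[\,s_i\text{ is a source of }G[S]\,]=\mathbb{E}\big[\#\text{sources of }G[S]\big]\ge 1$. The point is that each summand is exponentially small in $k/d$: conditioned on $s_i=v$, the vertex $v$ is a source of $G[S]$ exactly when no $s_j$ with $j\ne i$ is an in-neighbour of $v$ in $G$; by hypothesis $v$ has at least one in-neighbour inside each $V_j$, so the chance that $s_j$ lands on an in-neighbour of $v$ is $\ge 1/|V_j|\ge 1/d$, and by independence $\Pr[\,v\text{ is a source}\mid s_i=v\,]\le(1-1/d)^{k-1}$. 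Averaging over $v\in V_i$ and summing over $i$ yields $1\le k(1-1/d)^{k-1}\le k\,e^{-(k-1)/d}$, hence $k-1\le d\ln k$, which a routine calculation turns into $k=O(d\log d)$ (for the finitely many small $d$ one checks $R(d)$ is bounded directly).

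The whole argument is short, so there is no deep obstacle; the one place that needs the right idea is \emph{what to count}. Counting \emph{sinks} is the natural first attempt and it fails: the hypothesis pins down in-degrees but says nothing about out-degrees, so a vertex could have out-degree $0$, be a sink with probability $1$, and the analogous estimate degenerates to $1\le k$. The fix is to count \emph{sources} and exploit precisely the defining property of the instance — a guaranteed in-neighbour from every other part — which is exactly what pushes $\Pr[\text{source}]$ down to $(1-1/d)^{k-1}$ and caps $k$ at $O(d\log d)$.
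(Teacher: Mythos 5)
Your proposal is correct: the random-transversal argument (no rainbow cycle forces every transversal to induce a DAG, hence a source, while the guaranteed in-neighbour in every other part makes each chosen vertex a source with probability at most $(1-1/d)^{k-1}$, giving $1\le k(1-1/d)^{k-1}$ and so $k=O(d\log d)$) is sound. Note that the paper does not prove this statement itself — it simply cites it as Theorem~3 of \cite{akramietalEFXSimpler25} — and your argument is essentially the probabilistic proof given in that cited work, so there is nothing further to reconcile.
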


This implies the existence of $(1-\varepsilon)$-EFX allocations with $\tilde{O}((\frac{n}{\varepsilon})^{1/2})$ charity. 
\section{The \(\ppa\) Algorithm of \cite{amanatidisetalPushingFrontier24a}}\label{sec:ppa}

Our algorithm for computing a \(\EFXT\) allocation builds on some of the techniques introduced in \cite{amanatidisetalPushingFrontier24a}. In this section, we briefly describe the algorithm and its properties.

Given an instance of fair division, if there are at least as many agents as the number of goods, then an allocation of size at most one, is trivially EFX. Therefore, one can safely assume that there are more goods than agents. That is, \(m>n\). To begin with, consider a partial allocation with exactly one good per agent. If there is an envy-cycle in the allocation, resolve it using the ECE procedure, and make the envy graph acyclic. This allocation is called a \emph{seed} allocation. \citet{amanatidisetalPushingFrontier24a} proposed an way to compute a \emph{Property Preserving Partial Allocation}, via an algorithm referred to as {\em the \(\ppa\) algorithm} (Algorithm~\ref{alg:3PA_plus}). Some of the details regarding the subroutines of the algorithm is deferred to Appendix~\ref{sec:ppa-algorithm}. The algorithm takes a seed allocation as input and returns a (possibly partial) allocation that satisfies a specific set of properties, listed below:

Desired properties of a partial allocation \(X\):
    \begin{enumerate}
        \item Every agent \(a \in \N\) with \(|X_a|=1\) is EFX towards any other agent. \label{prop:a}
        \item Every agent \(a \in \N\) is \(2/3\)-EFX towards any other agent. \label{prop:b}
        \item Every agent \(a \in \N\) with \(|X_a|>1\) does not have any critical goods, i.e., for every agent \(a \in \N\) and any good \(g \in \mathcal{P}(X)\), \(|X_a|>1 \Rightarrow v_a(g)\leq \frac12 v_a(X_a)\).\label{prop:d}
        \item Any agent \(a\) with \(|X_a|=1\) has at most one critical good \(g_a\), and she values that good at most \(\frac23\) of the value of her current bundle, i.e., for every agent \(a\in\N\), there is at most one good \(g_a \in \mathcal{P}(X)\) such that \(v_a(g_a)> \frac12 v_a(X_a)\) and it holds that \(v_a(g_a) \le  \frac23 v_a(X_a)\).\label{prop:e}      
    \end{enumerate}

\begin{algorithm*}
    
    \DontPrintSemicolon
    \SetNoFillComment
    \LinesNotNumbered 
    \caption{\textsc{\sc Property-Preserving Partial Allocation (3PA+)} \cite{amanatidisetalPushingFrontier24a}} \label{alg:3PA_plus}
    \SetKwComment{Comment}{/* }{ */}
    \SetKw{Continue}{continue}
    \SetKw{Break}{break}
    \SetKw{Step}{Step}
    \SetKwData{Kw}{}
    \KwIn{An instance \((\N,\M,\V)\), and a partial allocation \(X\) of size at most \(2\) which satisfies Properties~\ref{prop:a} and \ref{prop:b}.}
    \KwOut{A Property-Preserving Partial Allocation \(X^1\) of size at most \(2\) such that either \(X^1\) is complete, or \(G_e(X^1)\) has at least one source and every source in \(G_e(X)\) has exactly two goods. \vspace{3pt}}
    
    \While{\(\mathcal{P}(X) \neq \emptyset\)}
    {
    \nl \label{step1}
    \uIf{there is \(a \in \N\) with \(|X_a|=1\) and a good \(g \in \mathcal{P}(X)\) such that \(v_a(g) > v_a(X_a)\)}{
    \(\mathcal{P}(X) \gets (\mathcal{P}(X)\cup \{X_a\})\setminus \{g\}\) and \(X_a \gets \{g\}\)\; \tcp{If an agent with 1 good prefers 1 good from the pool, swap them.}}

    \nl \label{step2}
    \uElseIf{ \(\exists a \in \N\) with \(|X_a|=2\) and a good \(g \in \mathcal{P}(X)\) such that \(v_a(g) > \frac{3}{2}v_a(X_a)\)}{
    \(\mathcal{P}(X) \gets (\mathcal{P}(X)\cup \{X_a\})\setminus \{g\}\) and \(X_a \gets \{g\}\) \;
    \tcp{Else if an agent with 2 goods prefers 1 good from the pool by more than \(3/2\), swap her bundle with that good.}
    }
    
    \nl  \label{step3}
    \uElseIf{ \(\exists a \in \N\) with \(|X_a|=1\) and  \(g_1,g_2 \in \mathcal{P}(X)\) with \(v_a(\{g_1,g_2\}) > \frac{2}{3}v_a(X_a)\)}{
    \(\mathcal{P}(X) \gets (\mathcal{P}(X)\cup \{X_a\})\setminus \{g_1,g_2\}\) and \(X_a \gets \{g_1,g_2\}\)\;
    \tcp{Else if an agent with 1 good prefers 2 goods from the pool by more than \(2/3\), swap that 1 good with the two goods from the pool.}
    }
    
    \nl \label{step4}
    \uElseIf{\(\exists a \in \N\) with \(|X_a|=2\), and goods \(g \in \mathcal{P}(X)\) and \(g' \in X_a\) s.t \(v_a(g) > v_a(g')\)}
    { \(\mathcal{P}(X) \gets (\mathcal{P}(X) \cup \{g'\}) \setminus \{g\}\) and \(X_a \gets (X_a \cup \{g\}) \setminus \{g'\}\).\;
    \tcp{Else if an agent with 2 good prefers 1 good from the pool to one of her own goods, swap that good with the one good from the pool.}
    }
    
    \nl \label{step5}
    \uElseIf{the reduced envy graph \(G_r(X)\) has cycles}{\(X \gets\textsc{AllCyclesResolution}(X,G_r)\)\;
    \tcp{Else if the reduced envy graph has cycles, resolve them by swapping the bundles.}}

    \nl \label{step6}
    \uElseIf{in the reduced envy graph \(G_r\) there is a \emph{source} \(s\) with \(|X_s|=1\)}
    { \(\mathcal{P}(X) \gets \mathcal{P}(X)\setminus \{g^*\}\), \(X'_s \gets X_s \cup \{g^*\}\), where \(g^* \in \argmax_{g \in \mathcal{P}(X)}v_a(g)\).\;
    \tcp{Else if there is a source in the reduced envy graph with a single good, add her most valuable good from the pool to her bundle.}
    }
    
    \nl  \label{step7}
    \uElseIf{\(|\{g \in \mathcal{P}(X): \exists i \text{ such that } |X_a|=1 \text{ and } v_a(g) > \frac{2}{3}v_a(X_a)\}| = 1\)}
    {\(X \gets \textsc{SingletonPool}(X)\)\;
    \tcp{Else if there is a single unallocated good and some agent with \(1\) good prefers it strictly more than \(2/3\), run the \textsc{SingletonPool} subroutine to allocate it.}}
    
    \nl \label{step8}
    \uElseIf{the enhanced envy graph \(G_e(X)\) has cycles}{\(X \gets \textsc{AllCyclesResolution}(X,G_e)\)\;
    \tcp{Else if the enhanced envy graph has cycles, resolve them by swapping the bundles.}
    }

    \nl \label{step9}
    \uElseIf{There exists a path \( \Pi =(s,\ldots,a)\) in \(G_e(X)\) starting at a source \(s\) of \(G_r(X)\), such that \(v_a(X_a)<v_a(\{g,g'\})\) for some \(g\in \mathcal{P}(X)\) and some \(g'\in X_s\)}
    {\(X \gets \textsc{PathResolution*}(X, G_e, \Pi)\)\;
    \tcp{Else if there exists a path from a source \(s\) to some agent \(a\) in \(G_e(X)\), such that \(a\) prefers a good from the pool and a good from \(X_s\) to her own bundle, swap the bundles along the envy path and give those two goods to \(a\).}
    }
    
    \nl \label{step10}
    \Else {\Break\;}
    }
    \Return \(X\) \; 
    
\end{algorithm*}

The algorithm consists of a while loop that iteratively applies a sequence of nine steps. Each step is executed only if the  partial allocation satisfies its associated condition. The loop terminates when none of the nine conditions are met, at which point, the algorithm returns a partial $\frac{2}{3}$-EFX allocation satisfying all the properties listed earlier, proved in \cite{amanatidisetalPushingFrontier24a}. They also prove the following lemma:

\begin{lemma}[\cite{amanatidisetalPushingFrontier24a}]\label{lemma:3PA_plus}
    Given an input allocation \(X\) in which each agent has exactly one good and the envy graph is acyclic, The \(\ppa\) algorithm (Algorithm~\ref{alg:3PA_plus}) outputs a partial allocation \(Y\) of size  at most \(2\) that satisfies Properties~\ref{prop:a} to \ref{prop:e}. Furthermore, if \(Y\) is not a complete allocation, then \(G_e(Y)\) has at least one source, and every source \(s\) in \(G_e(Y)\) has \(|{Y}_s|=2\).
\end{lemma}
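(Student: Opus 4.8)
The plan is to prove the lemma in three parts: a pair of invariants maintained by the \texttt{while}-loop, termination of the loop, and a description of the allocation it returns. Throughout I use that the loop exits either when $\Pl(X)=\emptyset$ (so the output is complete) or when all nine guards are simultaneously false (Step~\ref{step10}).

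\emph{Invariants.} First I would show, by induction on the number of loop iterations, that every allocation arising in the loop has size at most $2$, gives at least one good to every agent, and satisfies Properties~\ref{prop:a} and~\ref{prop:b}. The base case is immediate: the input gives exactly one good to each agent, so every bundle minus any single good is empty and every agent is vacuously EFX, hence $\frac23$-EFX, towards everyone. For the inductive step I would handle Steps~\ref{step1}--\ref{step9} one at a time. The size bound and the ``$\ge 1$ good'' property are easy: Steps~\ref{step1},~\ref{step2},~\ref{step4} keep a bundle's size fixed or shrink a $2$-bundle to a singleton; Steps~\ref{step3},~\ref{step6} turn a singleton into a $2$-bundle; Steps~\ref{step5},~\ref{step8} only permute bundles along a cycle; and Steps~\ref{step7},~\ref{step9} invoke \textsc{SingletonPool} and \textsc{PathResolution*}, which are built to preserve them (details in Appendix~\ref{sec:ppa-algorithm}). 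Preservation of Properties~\ref{prop:a} and~\ref{prop:b} is the routine-but-lengthy part: for each step one combines its guard with the induction hypothesis to check that every agent whose bundle changes stays EFX, resp.\ $\frac23$-EFX, towards everyone, and that no other agent's status degrades. This is exactly the case analysis of \cite{amanatidisetalPushingFrontier24a}, which I would reproduce together with the corrections flagged in Appendix~\ref{sec:error-in-3pa}.

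\emph{Structure of the output, assuming termination.} When the loop exits via Step~\ref{step10}, the returned $Y$ falsifies all nine guards. If Property~\ref{prop:d} failed, some agent $a$ with $|Y_a|=2$ would have a pool good $g$ with $v_a(g)>\frac12 v_a(Y_a)\ge\min_{g'\in Y_a}v_a(g')$, so Step~\ref{step4} would fire---contradiction. If Property~\ref{prop:e} failed, then either a singleton agent has two $\frac12$-critical pool goods, whose union she values by more than $\frac23$ of her bundle (triggering Step~\ref{step3}), or some singleton agent has a single pool good she values by more than $\frac23$ of her bundle, in which case Step~\ref{step3} fires (pairing it with any other pool good) when $|\Pl(Y)|\ge 2$ and Step~\ref{step7} fires when $|\Pl(Y)|=1$; every branch is a contradiction. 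Finally, if $Y$ is partial then Step~\ref{step8}'s guard is false, so $G_e(Y)$ is acyclic and, being a finite digraph, has a source $s$; since $G_r(Y)\subseteq G_e(Y)$, $s$ is also a source of $G_r(Y)$, so if $|Y_s|=1$ then Step~\ref{step6} would fire---hence $|Y_s|=2$, as claimed.

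\emph{Termination, the main obstacle.} I would use the lexicographic potential $\Phi(X)=\bigl(|\M|-|\Pl(X)|,\ \mathbf w(X)\bigr)$, where $\mathbf w(X)$ lists $(v_i(X_i))_{i\in\N}$ in non-increasing order and is compared lexicographically; $\Phi$ is bounded and takes finitely many values, so it suffices that it strictly increases every iteration. Steps~\ref{step1},~\ref{step2},~\ref{step4} leave $|\Pl(X)|$ fixed but strictly raise the value of exactly one agent while changing nobody else's, and replacing one entry of a sorted vector by a larger value strictly increases it lexicographically; Step~\ref{step5} resolves a cycle of the \emph{reduced} envy graph, which is a genuine envy cycle of $G(X)$, so every agent on it strictly gains and $\mathbf w(X)$ again strictly increases; Steps~\ref{step3},~\ref{step6}, and \textsc{SingletonPool} in Step~\ref{step7} strictly decrease $|\Pl(X)|$. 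The delicate case---and exactly where the argument of \cite{amanatidisetalPushingFrontier24a} has a gap, corrected in Section~\ref{sec:corrected-proof-ece}---is the cycle and path resolutions on the \emph{enhanced} envy graph (Steps~\ref{step8} and~\ref{step9}): since $G_r(X)$ is acyclic at that point, any such structure traverses a \emph{red} edge $(a,s)$, and an agent crossing a red edge only receives a bundle with $v_a(X_s)\ge\frac23 v_a(X_a)$, so she may strictly \emph{lose} value; moreover Step~\ref{step9} need not change $|\Pl(X)|$ at all. Hence $\Phi$ as stated need not increase on these steps, and termination of envy-cycle elimination on the enhanced graph must be argued separately---e.g.\ by refining the potential with a term that records the progress specific to a red-edge resolution (a singleton agent becoming a $2$-bundle agent, and a source of $G_r$ losing its status as a size-$2$ source) and bounding the number of consecutive enhanced-graph steps that can occur between two strict increases of the earlier coordinates of $\Phi$. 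Making this refinement precise is the crux of the proof.
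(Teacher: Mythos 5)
Your first two parts are fine and essentially follow the paper's route: the paper itself does not re-derive the step-by-step case analysis for Properties~\ref{prop:a}--\ref{prop:b} (it imports it from \cite{amanatidisetalPushingFrontier24a}, modulo the erratum about running Step~\ref{step9} on \(G_e\) rather than \(G_r\)), and your guard-based deductions of Properties~\ref{prop:d}--\ref{prop:e} and of the source structure of \(G_e(Y)\) at termination are correct. The problem is the third part. Termination of the enhanced-graph steps is exactly the point where the original argument breaks (resolving a cycle of \(G_e\) can \emph{increase} the number of edges, and an agent crossing a red edge can lose value), and your proposal stops at ``refine the potential with a term that records the progress specific to a red-edge resolution,'' without exhibiting such a potential or bounding the number of consecutive enhanced-graph steps. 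Since you yourself identify this as the crux, the lemma is not established by what you wrote. The paper closes this gap concretely (Section~\ref{sec:corrected-proof-ece}, Definition~\ref{def:potential}): take \(\phi(X)=\prod_{a\in\N}w(|X_a|)\,v_a(X_a)\) with \(w(1)=1\) and \(w(x)=\frac32\) for \(x\ge 2\). Along a resolved cycle, a red edge \((a,b)\) has \(|X_a|=1\), \(|X_b|\ge 2\) and \(v_a(X_b)\ge\frac23 v_a(X_a)\), so the factor \(\frac32\) in the weight exactly compensates the possible \(\frac23\) loss and \(a\)'s weighted contribution does not drop; a non-red edge from a size-two bundle into a singleton forces \(v_a(X_b)>\frac32 v_a(X_a)\) by the definition of \(G_r\)/\(G_e\), compensating the weight drop; every cycle contains at least one non-red edge, whose tail strictly gains. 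Hence \(\phi\) strictly increases at each cycle resolution and is bounded by \(\max_a\bigl(\frac32 v_a(\M)\bigr)^n\), giving (pseudo-polynomial) termination of the enhanced-graph eliminations; this multiplicative, size-weighted potential is the missing ingredient your additive/lexicographic sketch gestures at but does not supply.

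A secondary slip in the part you do claim: Step~\ref{step2} exchanges a two-good bundle for a single pool good, so \(|\Pl(X)|\) \emph{increases} by one and your leading coordinate \(|\M|-|\Pl(X)|\) strictly decreases there; likewise Step~\ref{step9} leaves the pool size unchanged while possibly decreasing the value of agents who receive bundles along red edges. So the lexicographic potential \(\Phi\) as you defined it fails not only on Steps~\ref{step8}--\ref{step9} but already on Step~\ref{step2}, and the ordering of its coordinates would have to be rethought even for the ``easy'' cases.
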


After computing such a partial allocation \(X\), then the next step is to allocate all the critical goods that are in the pool, to build another partial \(\EFXT\) allocation, say \(Y\). Finally, the allocation is completed iteratively allocating the remaining goods to the source and resolving the envy-cycles using the ECE procedure. Due to result of \citet{markakissantorinaiosImprovedEFX23}, this procedure preserves the \(\EFXT\) property.

In the \(\ppa\) algorithm, one of the steps is to execute the the ECE procedure on the extended envy graph \(G_e\). Since this is shown to preserve the Properties~\ref{prop:a} to \ref{prop:e}, the following corollary holds.
\begin{corollary}\label{cor:bundles_along_edges}
    Consider a \(\EFXT\) allocation \(X\). If \((a,b)\) is an edge in the graph \(G_e(X)\), then in any allocation with the same bundles as in \(X\), if agent \(a\) has the bundle \(X_b\), then she is \(\EFXT\) towards all the agents. Furthermore, if \(|X_b|=1\), then with \(X_b\), agent \(a\) is \(\EFX{}\) towards all.
\end{corollary}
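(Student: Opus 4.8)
The plan is to unfold the definition of the enhanced envy graph and split on what kind of edge $(a,b)$ is. Since $G_e(X)$ is obtained from the reduced envy graph $G_r(X)$ by adding only \emph{red} edges, and $G_r(X)$ is a subgraph of the ordinary envy graph $G(X)$, every edge $(a,b)\in G_e(X)$ is of exactly one of two kinds: \textbf{(i)}~$(a,b)\in G_r(X)\subseteq G(X)$, in which case $a$ envies $b$ in $X$, i.e.\ $v_a(X_a)<v_a(X_b)$; or \textbf{(ii)}~$(a,b)$ is a red edge, in which case $|X_a|=1$, $|X_b|>1$, and $v_a(X_b)\ge\frac23 v_a(X_a)$. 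I will use two facts about $X$ (which, in the setting in which this corollary is applied, also satisfies Properties~\ref{prop:a}--\ref{prop:e}): first, since $X$ is $\EFXT$, for every bundle $S\in\{X_1,\dots,X_n\}$ and every $h\in S$ we have $v_a(X_a)\ge\frac23 v_a(S\setminus h)$; second, by Property~\ref{prop:a}, if $|X_a|=1$ then in fact $v_a(X_a)\ge v_a(S\setminus h)$ for all such $S,h$. Finally, passing to ``any allocation with the same bundles'' only changes who holds each bundle, so it suffices to compare $X_b$ (now held by $a$) against each bundle in $\{X_1,\dots,X_n\}\setminus\{X_b\}$.

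\textbf{Case (i).} Fix $S\in\{X_1,\dots,X_n\}\setminus\{X_b\}$ and $h\in S$. Then $v_a(X_b)>v_a(X_a)\ge\frac23 v_a(S\setminus h)$, so $a$ holding $X_b$ is $\EFXT$ towards everyone. For the ``furthermore'' part, assume $|X_b|=1$ (this forces Case (i), since red edges point to bundles of size $>1$); we must strengthen the bound to $v_a(X_b)\ge v_a(S\setminus h)$. If $|S|\le 1$ this is immediate, as $v_a(S\setminus h)=0$. If $|S|\ge 2$, split on $|X_a|$: if $|X_a|=1$, Property~\ref{prop:a} gives $v_a(X_a)\ge v_a(S\setminus h)$, hence $v_a(X_b)>v_a(X_a)\ge v_a(S\setminus h)$; if $|X_a|>1$, then the edge $(a,b)$ with $|X_a|>1$ and $|X_b|=1$ survives into $G_r(X)$, which by the definition of $G_r$ means it was not removed, i.e.\ $v_a(X_a)<\frac23 v_a(X_b)$, and combining with $v_a(X_a)\ge\frac23 v_a(S\setminus h)$ gives $v_a(S\setminus h)\le\frac32 v_a(X_a)<v_a(X_b)$. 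In either sub-case $a$ holding $X_b$ is $\EFX{}$ towards everyone.

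\textbf{Case (ii).} Here $|X_a|=1$, so Property~\ref{prop:a} applies: for every $S\in\{X_1,\dots,X_n\}\setminus\{X_b\}$ and $h\in S$ we have $v_a(X_a)\ge v_a(S\setminus h)$, and combining with the red-edge inequality $v_a(X_b)\ge\frac23 v_a(X_a)$ yields $v_a(X_b)\ge\frac23 v_a(S\setminus h)$. Hence $a$ holding $X_b$ is $\EFXT$ towards everyone; the ``furthermore'' claim is vacuous since $|X_b|>1$. The crux is this Case (ii): along a red edge $a$ need not envy $b$ at all and her valuation may strictly decrease on moving to $X_b$, so the bare $\EFXT$ property of $X$ does not suffice — one has to exploit that red edges emanate only from singleton-bundle agents and invoke the \emph{exact} EFX guarantee of Property~\ref{prop:a}. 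A secondary subtlety, the $|X_a|>1$, $|X_b|=1$ sub-case of the ``furthermore'' claim, is to notice that an edge's mere survival into $G_r$ encodes the strict inequality $v_a(X_a)<\frac23 v_a(X_b)$, which is exactly the slack needed to upgrade a $\frac23$-factor bound to an exact one.
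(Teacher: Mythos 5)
Your proof is correct, and it is more self-contained than what the paper does: the paper simply derives this corollary from the result of \cite{amanatidisetalPushingFrontier24a} that the \textsc{Envy-Cycle-Elimination} procedure run on \(G_e\) preserves Properties~\ref{prop:a}--\ref{prop:e} (i.e.\ that shifting bundles along edges of \(G_e\) keeps the receiving agent \(\EFXT\), and \(\EFX{}\) when the received bundle is a singleton), whereas you unfold the definitions of \(G_r\), \(G_e\) and the red edges and re-derive the guarantee directly by a case split on the edge type. The comparison favors your write-up on one point: you make explicit that the bare hypothesis ``\(X\) is \(\EFXT\)'' in the corollary's statement is not enough for the red-edge case (where one only has \(v_a(X_b)\ge\frac23 v_a(X_a)\), and chaining with \(\frac23\)-EFX would give a useless \(\frac49\) bound), and that Property~\ref{prop:a} — exact EFX for singleton-bundle agents — is the implicitly assumed extra ingredient; this is indeed what the cited preservation lemma relies on, and the corollary is only ever invoked on allocations arising in \(\ppa\)/\(\ppatypes\) where Properties~\ref{prop:a} and \ref{prop:b} hold. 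Your handling of the ``furthermore'' subcase \(|X_a|>1\), \(|X_b|=1\) via the observation that survival of the edge into \(G_r\) encodes \(v_a(X_a)<\frac23 v_a(X_b)\) is also exactly the right mechanism and matches the intended argument.
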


\section{\(\EFXT\) for Four Types of Agents} \label{sec:4types}

In this section, we show that every instance with at most four types of agents admits a \(\EFXT\) allocation. 

Firstly, we modify the \(\ppa\) algorithm to accommodate for four types of agents by introducing additional steps, Steps~\ref{step9.1}-\ref{step9.5}, inserted immediately after Step~\ref{step9}. The modification is presented as Algorithm~\ref{alg:3PA_plus_type}. In these steps, we invoke a new subroutine, called as the \textsc{Pseudo-Cycle-Resolution} (Procedure~\ref{alg:pseudo-path-resolution}), detailed below.

\begin{algorithm}
    \DontPrintSemicolon
    \SetNoFillComment
    \LinesNotNumbered 
    \caption{\textsc{\sc Additional step to get \(\text{3PA}^+\)-types} \cite{amanatidisetalPushingFrontier24a}} \label{alg:3PA_plus_type}
    \SetKwComment{Comment}{/* }{ */}
    \SetKw{Continue}{continue}
    \SetKw{Break}{break}
    \SetKw{Step}{Step}
    \SetKwData{Kw}{}
    \KwIn{As in \(\text{3PA}^+\)}
    \KwOut{Same as \(\text{3PA}^+\)}
    \SetKwProg{Dots}{\ldots}{ \ldots}{\ldots}
    \vspace{0.5cm}
        \Dots{(same as \textnormal{\(\text{3PA}^+\)})\,\medskip}
    {

        \nlset{9.1}\label{step9.1}
        \uIf{There is a single source, say $d_1$ wlog, \textbf{\emph{and}} \(|\D|\ge 2\),   }{
            \nlset{9.2}\label{step9.2}
            \uIf{\(|X_{a_1}|=|X_{b_1}|=|X_{c_1}|=1\) \textbf{\emph{and}} \({(}|X_{d_2}|=2 \textbf{\emph{ or }} (d_1,d_2)\notin G_e(X){)}\)}{

                \nlset{9.3} \Kw{\texttt{Step 9.3}}\label{step9.3}
                \uIf{some leading agent in $\A\cup\B\cup\C$, say $a_1$, envies $d_2$}{
                    Let $\pi$ be a leading-path from $d_1$ to $a_1$ \tcp*{Such a path exists because of Lemma~\ref{lem:leading-path}}
                    \textsc{Pseudo-Cycle-Resolution}($\pi, d_2, X_{d_2}$)\;
                }
                \;
                \nlset{9.4} \Kw{\texttt{Step 9.4}}\label{step9.5}
                \uIf{A leading agent in $\A\cup\B\cup\C$, say $a_1$ wlog, has $v_a(X_{a_1})<v_a(\{g,g'\})$, where $g\in X_{d_2}$ and $g'\in \mathcal{P}(X)$}{

                    Let $\pi$ be a leading-path from $d_1$ to $a_1$ \tcp*{Such a path exists because of Lemma~\ref{lem:leading-path}}
                    \textsc{Pseudo-Cycle-Resolution}($\pi, d_2$, $\{g, g'\}$)\;
                    \(\mathcal{P}({X}) \gets (\mathcal{P}(X) \cup X_{d_2}) \setminus \{g,g'\}\)
                }

            }
        }
    }
    \vspace{0.5cm}
    \dots\\
    \Return \(X\) \; 
\end{algorithm}

\SetAlgorithmName{PROCEDURE}{Subroutine}

\begin{algorithm}
    \DontPrintSemicolon
    \caption{\textsc{Pseudo-Cycle-Resolution} (\(\pi\), \(z^*\), S)} \label{alg:pseudo-path-resolution}
    \SetKwComment{Comment}{/* }{ */}
    \KwIn{A path $\pi = (z_1, z_2, \ldots, z_\ell)$ in \(G_e(X)\), an agent $z^*$, a bundle \(S\), and an allocation $X$}
    \KwOut{Updated allocation $Y$}
    Initialize $Y \gets X$\;
    \For{$j = 2$ \KwTo $\ell-1$}{
        $Y_{z_j} \gets X_{z_{j+1}}$\;
    }
    $Y_{z^*} \gets X_{{z}_2}$\; 
    $Y_{{z}_\ell} \gets S$\;
    
    Reorder bundles among agents within each group in $Y$ to maintain the ordering invariant\;
    \Return{$Y$}
\end{algorithm}
\SetAlgorithmName{ALGORITHM}{Algorithm}

\paragraph{Intuition for additional steps:} 

In \cite{amanatidisetalPushingFrontier24a}, they first compute a partial \(\EFXT\) allocation \(X\) satisfying properties \ref{prop:a} to \ref{prop:e}, using the \(\ppa\) algorithm. The next is to allocate critical goods, and then apply Lemma~\ref{lemma:markakis-santorinaios} to get a complete  \(\EFXT\)  allocation. They consider at most $7$ agents. A good that is critical for only one agent is allocated to a source and envy-cycles, if any, are resolved. Since each agent has at most one critical good, and source has no critical goods, there can be at most three goods that are critical for more than one agent, called {\em contested critical goods}. They show that these goods can be allocated to the source(s) in the enhanced envy graph without violating the  \(\EFXT\)  property. This also uses another crucial fact that, when there are three contested critical goods, there is only one source, and moreover, every agent except the source has a critical good.

In our setting, we have an arbitrary number of agents with at most four different valuations. Hence each group can have at most one critical good, except the group containing a source. Thus there can be at most three critical goods. If the enhanced envy graph has at least two sources, the critical items can be allotted as in \cite{amanatidisetalPushingFrontier24a}. However, when there is only one source, and three critical goods (one per group), then even other agents in the same group as the source have no critical goods. Thus the strategy of \cite{amanatidisetalPushingFrontier24a} of allocating all the critical goods to the source does not work. This necessitates additional steps (Steps \ref{step9.1} to \ref{step9.5}) in the $\ppa$ algorithm. We call our algorithm the $\ppatypes$.

\begin{algorithm}
    \DontPrintSemicolon
    \caption{\(\textsc{FewTypesAllocate}\)(\(\langle\N,\M,\V \rangle\))} \label{alg:few_types}
    \SetKwComment{Comment}{/* }{ */}
    \KwData{An additive instance with at most \(4\) types of agents.}
	\KwResult{A \(2/3\)-EFX allocation \(X\).\vspace{3pt}}
    Let \(X^0\) be an arbitrary partial allocation where each bundle has cardinality \(1\) and \(G(X^0)\) is acyclic.\;
    \(X \gets \ppatypes(\langle\N,\M,\V \rangle, X^0)\)\;
    \(\mathcal{C} \gets \{g\in \Pl(X) \mid  g \text{ is critical for some agent }\}\)\;

    \BlankLine
	\If{\label{alg:case1}\(|\mathcal{C}| = 2\) \text{ and there are at least two sources } \(s_1, s_2\) \text{ in } \(G_e(X)\)}{
		Suppose \(\mathcal{C} = \{g_1, g_2\}\)\;
		\(Y_{s_1} \gets X_{s_1} \cup \{g_1\}\)\;
		\(Y_{s_2} \gets X_{s_2} \cup \{g_2\}\)\;
		\(Y_i \gets X_i\) for all \(i \in \N \setminus \{s_1, s_2\}\)\;
	}

	\ElseIf{\label{alg:case2}(\ \(|\mathcal{C}| = 2\) \textbf{\emph{and}} there is only one source in \(G_e(X)\), say $d_1$\ ) \textbf{\emph{or}} \(|\mathcal{C}| = 1\)}{
		
		\(Y_{d_1} \gets X_{d_1} \cup \mathcal{C}\)\;
		\(Y_i \gets X_i\) for all \(i \in \N \setminus \{d_1\}\)\;
	}
    \BlankLine
	\Else(\hfill\tcp*[f]{That is, \(|\mathcal{C}| = 3\)}) {	\label{alg:case3}
		Let \(d_1\) be a source in \(G_e(X)\)\tcp*{At least one leading agent who is a source in \(G_e(X)\)}
		\(h \gets \argmin_{g\in \mathcal{C}} v_{d}(g)\)\tcp*{least valued good om $\mathcal{C}$ according to $v_d$}
		
		\If(\tcp*[f]{\(d_1\) is the only agent in \(\D\) }){\(|\D| = 1 \) }{
				\(Y_{d_1} \gets X_{d_1} \cup \mathcal{C}\)\;
		}

		\Else(\hfill\tcp*[f]{That is, \(\D = \{d_1, d_2, \ldots\}\)}){
				\If{\(v_{d}(X_{d_2}) < \frac{2}{3} v_{d}(X_{d_1} \cup \mathcal{C} \setminus \{g\})\) for some \(g \in X_{d_1} \cup \mathcal{C}\)}{
					\(Y_{d_2} \gets X_{d_2} \cup \{g\}\)\;
					\(Y_{d_1} \gets X_{d_1} \cup (\mathcal{C} \setminus \{g\})\)\;
				}
				\Else{
					\(Y_{d_1} \gets X_{d_1} \cup \mathcal{C}\)\;
					\(Y_{d_2} \gets X_{d_2}\)\;
				}
		}
		\(Y_i \gets X_i\) for any other \(a\in \N\setminus \{d_1,d_2\}\)\;
	}

    \(X \gets \textsc{EnvyCycleElimination}(Y, G(Y))\)
    \tcp*{Complete the allocation}
    \Return \(X\)
\end{algorithm}

In the additional steps, whenever it is not possible to allot all three critical items to the source, one of the items needs to be allotted to another agent from the same group. But if this agent is already envied by someone, allocating additional item to that agent breaks the \(\EFX{}\) property. This needs a careful resolution of the envies, prior to allotment of critical items, as given in Procedure~\ref{alg:pseudo-path-resolution}.

{\bf Bottleneck for more number of types:} When there are five or more types of agents, there are more than three critical items. Allocating a critical item to a source and then resolving cycles, like that in the subroutine for {\em uncontested critical} in \cite{amanatidisetalPushingFrontier24a} creates potential strong envies from other agents from the same group, as the same item may be critical to more than one agent in the same group.

\paragraph{Pseudo-Cycle-Resolution (Procedure~\ref{alg:pseudo-path-resolution}):} The usual ECE procedure resolves cycles among agents that is formed by envy edges. Pseudo-cycles are formed when an agent (and therefore a leading agent, say $x_1$) from a group, say $\mathbb{X}$, has a path of envy-edges to a leading agent in another group, say , say $y_1$ of $\mathbb{Y}$. However, if no one in $\mathbb{Y}$, and hence $y_1$, in particular, envies $x_1$'s bundle, then there is no envy-cycle. Now, consider the situation when $y_1$ envies a bundle of another agent in group $\mathbb{X}$, say that of $x_i$. Although $x_1,\ldots,y_1, x_i$ is not a cycle, bundles can be exchanged along every edge in this path, preserving \(\EFX{}\) or $\EFXT$ property. Since we invoke this procedure only on allocations of size at most \(2\) in Steps~\ref{step9.3} and \ref{step9.5}, the correctness of this procedure is later shown in Lemma~\ref{lemma:ppa_types}. See \cref{fig:pseudo-path-resolution} for an illustration of the \textsc{Pseudo-Path-Resolution} procedure.

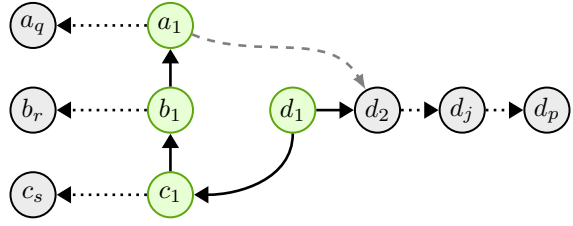
\begin{figure}
    \centering
    \begin{tikzpicture}
    \node[leading_agent] (d1) {\small \(d_1\) };
    \node[agent, right=0.5cm of d1] (d2) {\small \(d_2\) };
    \node[agent, right=0.5cm of d2] (dj) {\small \(d_j\) };
    \node[agent, right=0.5cm of dj] (dp) {\small \(d_p\) };
    \node[leading_agent, left=1cm of d1] (b1) {\small \(b_1\) };
    \node[leading_agent, above=0.5cm of b1] (a1) {\small \(a_1\) };
    \node[leading_agent, below=0.5cm of b1] (c1) {\small \(c_1\) };
    \node[agent, left=1.2cm of a1] (aq) {\small \(a_q\) };
    \node[agent, left=1.2cm of b1] (br) {\small \(b_r\) };
    \node[agent, left=1.2cm of c1] (cs) {\small \(c_s\) };

    \draw[pseudo] (a1) to[out=-22.5, in=120] (d2);
    \draw[etc_edge] (a1) -- (aq);
    \draw[etc_edge] (b1) -- (br);
    \draw[etc_edge] (c1) -- (cs);
    \draw[etc_edge] (d2) -- (dj);
    \draw[etc_edge] (dj) -- (dp);
    \draw[envy] (d1) -- (d2);
    \draw[envy] (d1) to[out=-90, in =0] (c1);
    \draw[envy] (c1) -- (b1);
    \draw[envy] (b1) -- (a1);  
\end{tikzpicture}
    \caption{Illustration of the \textsc{Pseudo-Cycle-Resolution} procedure. The path $\pi = (d_1, c_1, b_1, a_1)$. Agent $a_1$ prefers one good $g$ from $X_{d_j}$ and one good $g'$ from the pool over her own bundle $X_{a_1}$. The \textsc{Pseudo-Path-Resolution} procedure reallocates as follows: $X_{c_1}$ is assigned to $d_j$, $X_{b_1}$ to $c_1$, $X_{a_1}$ to $b_1$, and $\{g, g'\}$ to $a_1$. The bundles are then reordered within each group to maintain the ordering invariant.}
    \label{fig:pseudo-path-resolution}
\end{figure}

In \cite{amanatidisetalPushingFrontier24a}, it is shown that all the steps of the $\ppa$ algorithm maintain properties \ref{prop:a} and \ref{prop:b} in each iteration, and when the $\ppa$ algorithm terminates, properties \ref{prop:a} to \ref{prop:e} are satisfied. We will show that the output of the \(\ppatypes\) also satisfies these properties.

In a run of the \(\ppatypes\) algorithm, Step~\ref{step9.3} or \ref{step9.5} is applied only when the allocation fails all the conditions of Steps ~\ref{step1} to \ref{step9}, and if the enhanced envy graph has a single source, say \(d_1\), and further, the other leading agents have bundles of size \(1\). In this case, we first show that the leading agents have a path from \(d_1\) such that the path contains no non-leading agents.  

\begin{restatable}{lemma}{leadingpath}\label{lem:leading-path}
Let $X$ be a partial allocation of size at most two, satisfying properties \ref{prop:a} to \ref{prop:e}. If $G_e(X)$ is acyclic, with a unique source say $d_1$ wlog, and all other leading agents have bundles of size $1$, then every leading agent has a path from $d_1$ such that the path contains no non-leading agents. 
\end{restatable}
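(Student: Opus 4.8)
The goal is to show that in $G_e(X)$, with $d_1$ the unique source and every other leading agent owning a single good, each leading agent is reachable from $d_1$ by a directed path whose internal vertices are all leading agents. My plan is to run a BFS/DFS-style reachability argument from $d_1$, but to do it cleverly so that whenever we need to ``route through'' a group, we can always route through that group's \emph{leading} agent rather than a non-leading one.

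First I would observe that since $d_1$ is the unique source of $G_e(X)$, every other vertex has positive in-degree, and in particular every vertex is reachable from \emph{some} source — but $d_1$ is the only source, so by the standard fact (every vertex of a finite DAG is reachable from a source) every vertex, hence every leading agent, is reachable from $d_1$ by \emph{some} directed path $\Pi$. The content of the lemma is that this path can be taken to avoid non-leading agents as internal vertices. So the core step is a \emph{path-shortcutting} argument: take any directed path $\Pi = (d_1 = z_0, z_1, \dots, z_\ell = a_1^t)$ in $G_e(X)$, and show that whenever some internal $z_j$ is a non-leading agent, say $z_j = a_r^s \in \N^s$ with $r>1$, we can replace the sub-path around $z_j$ by a detour through the leading agent $a_1^s$ of the same group.

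The key technical claim driving the shortcut is: if $z_{j-1}$ has an edge to the non-leading agent $a_r^s$ in $G_e(X)$, then $z_{j-1}$ also has an edge to the leading agent $a_1^s$ in $G_e(X)$, and similarly $a_1^s$ has an edge to $z_{j+1}$ whenever $a_r^s$ does. For the ``incoming'' direction this should follow from the ordering invariant together with Proposition~\ref{prop:k-types-envy-properties}: the edge into $a_r^s$ is either a genuine envy edge of $G(X)$, a reduced-graph edge, or a red edge, and in each case I want to translate ``$z_{j-1}$ (or the owner of the relevant bundle) prefers $X^s_r$'' into the corresponding statement about $X^s_1$, using $v_s(X^s_1) \le v_s(X^s_r)$ and the fact that all agents in $\N^s$ share the valuation $v_s$. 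Here one must be careful about \emph{which} agent's valuation labels the edge — envy edges $(z_{j-1}, a_r^s)$ are about $v_{z_{j-1}}$, so I should instead argue from the perspective of the group $\N^s$: a non-leading agent with a size-1 bundle being a sink-like internal vertex is constrained, and since $a_r^s$ has in-degree $\ge 1$ while all non-leading agents of $\N^s$ have the \emph{same} bundle size as... no — rather, I would lean on the hypothesis that all leading agents other than $d_1$ have size-1 bundles, plus the structure of $G_e$: red edges only go \emph{into} sources of $G_r$, which are leading agents; so the troublesome internal non-leading vertices are reached only via ordinary/reduced edges, where I can invoke Proposition~\ref{prop:k-types-envy-properties}(2) after transferring attention to the leading agent of $z_{j-1}$'s argument. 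A cleaner route: induct on the number of non-leading internal vertices; pick the \emph{last} non-leading internal vertex $z_j = a_r^s$; then $z_{j+1}, \dots, z_\ell$ are all leading, and I use Corollary~\ref{cor:critical}/Proposition~\ref{prop:k-types-envy-properties} to replace $z_j$ by $a_1^s$ and re-connect — if $a_1^s$ already appears later on $\Pi$ we just truncate, otherwise we splice. Repeat until no non-leading internal vertices remain.

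The main obstacle I anticipate is exactly the direction-of-edge bookkeeping in the enhanced envy graph: the three edge types (ordinary envy edges of $G(X)$, reduced edges of $G_r$, red edges of $G_e$) have different semantics, and red edges in particular point from a size-1 agent to a source of $G_r$ that has a size-2 bundle, so the shortcutting must respect that red edges can only be the \emph{last} edge into a leading agent and cannot be ``rerouted'' freely. I would handle this by first arguing that on a minimal-length path from $d_1$ to $a_1^t$, red edges can appear only as the final edge (since their head is a source of $G_r$, which would otherwise force a shorter path), and then doing the group-leader substitution only on the $G_r$-portion of the path, where Proposition~\ref{prop:k-types-envy-properties}(2) applies cleanly because there every edge $(x,y)$ with $y$ non-leading lifts to an edge $(x, \text{leader}(y))$ by the ordering invariant. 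Once that lifting lemma is in hand, the induction on the number of non-leading internal vertices closes the proof.
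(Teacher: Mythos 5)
Your plan hinges on a shortcut claim that is false in the direction you actually need. Lifting the \emph{tail} of an edge to its group leader is sound: if a non-leading agent $a_r^s$ envies a bundle, then so does $a_1^s$ (Proposition~\ref{prop:k-types-envy-properties}(2)), because both use the common valuation $v_s$. But lifting the \emph{head} is not: an edge $(z_{j-1},a_r^s)$ in $G_e(X)$ means $z_{j-1}$ prefers $X_r^s$ according to $v_{z_{j-1}}$, and the ordering invariant $v_s(X_1^s)\le v_s(X_r^s)$ says nothing about $v_{z_{j-1}}(X_1^s)$; agent $z_{j-1}$ may value the leader's bundle far less and have no edge to $a_1^s$ at all. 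You notice this tension mid-proposal, but the fallback does not close the gap: after deleting the last non-leading internal vertex $a_r^s$ and reconnecting its out-edge to $a_1^s$, you still must connect $d_1$ to $a_1^s$ by a path, and "splice" presupposes exactly the reachability-through-leading-agents statement being proved (or, if you splice an arbitrary path guaranteed by the unique-source DAG fact, your induction measure --- the number of non-leading internal vertices --- is not shown to decrease). A further, more minor, confusion: red edges point into sources of $G_r$ with bundles of size two, so under the lemma's hypothesis they can never enter a leading agent other than $d_1$; the "red edges only as the final edge" discussion is therefore not the right dichotomy.

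The paper's proof avoids path surgery entirely and only ever lifts tails. Fix any leading agent $a_1\neq d_1$: it is not a source of $G_e(X)$, and since its bundle is a singleton it cannot be the head of a red edge, so it has an incoming \emph{envy} edge from some agent $u_j$; by Proposition~\ref{prop:k-types-envy-properties}(2) the leader $u_1$ of that group also envies $a_1$, and because $u_1$ holds a singleton bundle this envy edge survives into $G_e(X)$. Hence in the subgraph of $G_e(X)$ induced on the leading agents, every vertex other than $d_1$ has in-degree at least one; this induced graph is acyclic, so following in-edges backwards from any leading agent must terminate at its unique source $d_1$, giving the desired all-leading path. If you want to salvage your approach, replace the head-lifting step by this in-edge/tail-lifting argument --- at which point the path-shortcutting machinery becomes unnecessary.
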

\begin{proof}
    Recall that \(\mathcal{L}\) is the set of all leading agents. Consider an arbitrary leading agent other than \(d_1\), say \(a_1 \in \mathcal{L} \setminus \{d_1\}\). Since \(a_1\) is not a source in \(G_e(X)\), it must have an incoming edge from some agent, say \(u_j\), from some group \(\mathbb{U}\). Furthermore, since \(a_1\) has a singleton bundle, all incoming edges must be envy edges (not red edges). Therefore $u_j$ cannot be in the same group as $a_1$. By Proposition~\ref{prop:k-types-envy-properties}, if \(u_j\) envies \(a_1\), then the corresponding leading agent \(u_1\) also envies \(a_1\). Since \(u_1\) also has a singleton bundle, the edge \((u_1, a_1)\) is present in \(G_e(X)\). Thus, every leading agent in \(\mathcal{L} \setminus \{d_1\}\) has an incoming envy edge from some leading agent in \(\mathcal{L}\).

    Now, consider the subgraph \(H\) of \(G_e(X)\) induced by \(\mathcal{L}\). The graph \(H\) is a single-source acyclic graph: every leading agent in \(\mathcal{L} \setminus \{d_1\}\) has an incoming edge, and since \(G_e(X)\) is acyclic, so is \(H\). Therefore, every vertex in \(H\) is reachable from the source \(d_1\) via a path consisting only of leading agents. This completes the proof.

\end{proof}

Now, to prove that the \(\ppatypes\) algorithm satisfies properties \ref{prop:a} to \ref{prop:e}, it suffices to show that Steps~\ref{step9.1} to \ref{step9.5}, preserve these properties.

\begin{restatable}{lemma}{Ppatypes}
    \label{lemma:ppa_types}
    Consider a \(4\)-types instance. If properties \ref{prop:a} and \ref{prop:b} are satisfied by an allocation before the execution of Step~\ref{step9.1} of $\ppatypes$ algorithm, then they are satisfied after the execution of Steps~\ref{step9.1} to \ref{step9.5}.
\end{restatable}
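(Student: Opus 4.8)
The plan is to verify that Steps~\ref{step9.1}--\ref{step9.5} preserve Properties~\ref{prop:a} and~\ref{prop:b} by examining exactly what the \textsc{Pseudo-Cycle-Resolution} subroutine does to each agent's bundle. Since the only agents whose bundles change are those on the leading-path $\pi = (z_1=d_1, z_2, \ldots, z_\ell=a_1)$ together with $d_2$, it suffices to check the $\EFXT$ (and $\EFX{}$ where applicable) guarantee for these agents in the new allocation $Y$, and to check that no \emph{other} agent's guarantee towards them is broken. I would handle Step~\ref{step9.3} and Step~\ref{step9.5} together, since they differ only in the bundle $S$ handed to $a_1$ at the end ($X_{d_2}$ versus a pair $\{g,g'\}$ drawn partly from $X_{d_2}$ and partly from the pool).

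First, observe that along $\pi$ each agent $z_j$ (for $2 \le j \le \ell-1$) receives $X_{z_{j+1}}$, i.e., the bundle of the agent it envied on the path; by \cref{cor:bundles_along_edges}, since $(z_j, z_{j+1}) \in G_e(X)$, agent $z_j$ becomes $\EFXT$ towards everyone with that bundle, and $\EFX{}$ if $|X_{z_{j+1}}|=1$ (which holds here because all leading agents other than $d_1$ have singleton bundles). Next, $d_2$ receives $X_{z_2}$; here I would use that $(d_1, z_2)$ is an edge of $G_e(X)$ and that $d_2$ is in the same group as $d_1$ with $v_d(X_{d_2}) \ge v_d(X_{d_1})$ by the ordering invariant — combined with the fact that $d_1$ is $\EFXT$ (indeed, being a source, EFX) towards $z_2$, agent $d_2$ inherits at least as good a guarantee, so $d_2$ is $\EFXT$ towards all. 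Finally, $a_1=z_\ell$ receives $S$: in Step~\ref{step9.3} this is $X_{d_2}$, and the triggering condition is precisely that $a_1$ envies $d_2$, i.e. $v_a(X_{d_2}) > v_a(X_{a_1})$, so $a_1$ strictly improves and — since $X_{a_1}$ was a singleton and $X_{d_2}$ has size at most $2$ — I need to argue $a_1$ is $\EFXT$ towards all after receiving a $2$-good bundle; in Step~\ref{step9.5} the condition $v_a(X_{a_1}) < v_a(\{g,g'\})$ gives the same strict improvement directly. The reordering-within-groups step at the end of the procedure only relabels agents and can only help (it preserves the multiset of bundles), so it does not affect whether the properties hold.

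The main obstacle I anticipate is the reverse direction: ensuring that agents \emph{not} on $\pi$ remain $\EFXT$ towards the agents whose bundles changed, and in particular towards $a_1$, who now holds a $2$-good bundle. For a non-leading agent $a_j$ in some group, by \cref{prop:k-types-envy-properties}(2) any envy it has is dominated by its leading agent's envy, so it suffices to check leading agents. A leading agent $u_1$ that was $\EFXT$ towards the \emph{old} holder of a bundle $B$ must now be checked against the \emph{new} holder of $B$ — but since \textsc{Pseudo-Cycle-Resolution} merely permutes an existing set of bundles among agents (except that $X_{z_\ell}$ leaves the active set and $S$ enters it, with the displaced goods returned to the pool in Step~\ref{step9.5}), most of these checks reduce to guarantees that already held. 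The genuinely new object is the bundle $S$ given to $a_1$. For Step~\ref{step9.3}, $S = X_{d_2}$ was already an allocated bundle, so any agent's $\EFXT$ guarantee towards $X_{d_2}$ is unchanged; for Step~\ref{step9.5}, $S=\{g,g'\}$ with $g \in X_{d_2}$ and $g' \in \mathcal{P}(X)$, and here I would lean on the fact that, by the exit conditions of Steps~\ref{step1}--\ref{step9} (which all fail when we reach Step~\ref{step9.1}), no agent with a singleton bundle values a single pool good by more than its own bundle and no agent overvalues a two-good pool subset by the relevant factor — these are exactly the invariants Properties~\ref{prop:d} and~\ref{prop:e} and the failure of the earlier steps encode — so the small pieces $g'$ (and $g$, which was part of a size-$\le 2$ bundle) cannot create a $\EFXT$ violation for anyone. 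Assembling these case checks carefully, and in particular bookkeeping which bundle ends up with which agent after the group-reordering, is the routine-but-delicate part; once done, Properties~\ref{prop:a} and~\ref{prop:b} follow.
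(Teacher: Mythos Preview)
Your overall decomposition is right, but there are two genuine gaps.

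\textbf{The argument for $d_2$ is misdirected and too weak.} You write that ``$d_1$ is $\EFXT$ (indeed, being a source, EFX) towards $z_2$'' and that $d_2$ therefore ``inherits at least as good a guarantee''. This is backwards: being a source means nobody envies $d_1$, not that $d_1$ is EFX towards others --- in fact $d_1$ \emph{envies} $z_2$ along the edge $(d_1,z_2)$. More importantly, since $z_2$ is a leading agent other than $d_1$, we have $|X_{z_2}|=1$, so after $d_2$ receives $X_{z_2}$ Property~\ref{prop:a} demands full EFX, not merely $\EFXT$. The paper's argument is that the edge $(d_1,z_2)$ in $G_e(X)$, with $|X_{d_1}|=2$ and $|X_{z_2}|=1$, cannot be a red edge, so it must survive the reduction step; this forces $v_d(X_{z_2})>\tfrac{3}{2}v_d(X_{d_1})$. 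That $\tfrac{3}{2}$ factor, combined with $d_1$'s $\EFXT$ guarantee towards every old bundle, is precisely what upgrades $d_2$'s guarantee from $\EFXT$ to full EFX. Your proposal never isolates this factor.

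\textbf{The Step~\ref{step9.5} analysis of the new bundle $\{g,g'\}$ misses the key ingredients.} For a leading agent $u_1\notin\{d_1,d_2\}$ you invoke failure of Steps~\ref{step1}--\ref{step9} to control the pool good $g'$, which is fine; but you justify control of $g\in X_{d_2}$ only by saying ``$g$ was part of a size-$\le 2$ bundle''. That is not enough: the actual reason $v_u(g)\le v_u(X_{u_1})$ is that Step~\ref{step9.3} failed, so $u_1$ does not envy $X_{d_2}$. And for $d_1$ (who keeps $X_{d_1}$) and for $d_2$ (who now has a singleton and needs full EFX) toward the new bundle $\{g,g'\}$, one must case-split on the disjunction in Condition~\ref{step9.2}, namely $|X_{d_2}|=2$ versus $(d_1,d_2)\notin G_e(X)$. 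Each branch supplies a different inequality (e.g.\ when $|X_{d_2}|=1$ and $(d_1,d_2)\notin G_e(X)$ one gets $v_d(X_{d_2})\le\tfrac{3}{2}v_d(X_{d_1})$). You never invoke Condition~\ref{step9.2} at all, so your sketch has no handle on $d_1$'s or $d_2$'s guarantee toward $a_1$ in Step~\ref{step9.5}.
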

\begin{proof}
    Since Steps~\ref{step9.1} to \ref{step9.5} are the last steps in the sequence in $\ppatypes$, if any of these steps is executed on \(X\), then $X$ must have failed all the conditions from Step~\ref{step1} to Step~\ref{step9}. Step~\ref{step9.3} only shuffles bundles, whereas Step~\ref{step9.5} creates a new bundle of size $2$. So no bundle has size more than $2$. 
    Further, as Step~\ref{step9.3} shuffles existing bundles, every agent not involved in the pseudo-cycle remains \(\EFX{}\)  (or $\EFXT$, as the case may be, before the execution of these steps) towards every other agent, including those on the pseudo-cycle. 

    Without loss of generality, consider a path \(\pi\) in the extended envy graph, from a source, say \(d_1\) to a leading agent, say \(a_1\). Because of Lemma~\ref{lem:leading-path}, we can assume that this path involves only leading agents. If $b_1$ and/or $c_1$ are on $\pi$, they get an envied bundle of the same size i.e. $1$ and hence trivially remain \(\EFX{}\)  towards all the agents apart from $a_1$, who might have received a newly formed bundle. Also, as agent $a_1$ gets a higher valued bundle than $X_{a_1}$, she remains \(\EFX{}\)  towards others. Thus, to complete the proof, we need to show the following:
    \begin{enumerate}
        \item\label{itm:a1} For any agent \(x\) in \(\N\setminus d_2\), if \(x\) has a singleton bundle, then she is \(\EFX{}\)  towards $a_1$, and if \(x\) has a bundle of size $2$, then she is $\EFXT$ towards $a_1$.
        \item\label{itm:d2} Agent $d_2$ is \(\EFX{}\)  towards every agent. 
    \end{enumerate}

    {\em Proof of \ref{itm:a1}:} If Step~\ref{step9.3} is executed, then this is trivial, as no new bundle is formed. So consider the case when Step~\ref{step9.5} is executed. That is, agent \(a_1\) gets a good \(g'\) from \(X_{d_1}\) and a good \(g''\) from the pool. Since this step is executed only if Step~\ref{step9.3} fails, no agent except possibly $d_1$, envies $d_2$. Therefore, no agent in \(\N\setminus \{d_1,d_2\}\) envies the good \(g'\). Furthermore, no agent values a good in the pool more than their own bundle, as Steps~\ref{step1} \ref{step4} are not met. So for any agent in \(\N\setminus \{d_1,d_2\}\), any one of the two goods in $a_1$'s new bundle is less valuable than her own bundle. So all of them are \(\EFX{}\)  towards $a_1$. 
    
    If Step~\ref{step9.5} is executed with $|X_{d_2}|=2$, then by Property~\ref{prop:b}, $d_1$ is $\EFXT$ towards $d_2$ before this step. Hence for each item $g$ in $X_{d_2}$, we have $v_d(X_{d_1})\geq \frac{2}{3}v_d(g)$. Moreover, since Step~\ref{step4} fails, for any item $g'$ in $\mathcal{P}(X)$, we have $v_d(g')\leq \frac{1}{2}v_d(X_{d_1})$. Therefore $d_1$ is $\EFXT$ towards $a_1$. If Step~\ref{step9.5} is executed with $|X_{d_2}|=1$, then by Condition~\ref{step9.2}, $v_d(X_{d_2})=v_d(g)\leq \frac{3}{2}v_d(X_{d_1})$. Hence again $d_1$ is $\EFXT$ towards $a_1$. This completes Part~\ref{itm:a1}.

    {\em Proof of \ref{itm:d2}:} The new bundle that $d_2$ gets is from a leading agent, say $u_1$. Hence it is a singleton bundle by Condition~\ref{step9.2}. So, to satisfy all the properties, $d_2$ must be \(\EFX{}\)  towards everyone else. Since $d_1$ is a source and Step~\ref{step6} fails, we have $|X_{d_1}|=2$. As the envy edge from $d_1$ to $u_1$ exists in $G_e(X)$, it must be the case that $v_d(X_{u_1})>\frac{3}{2}v_d(X_{d_1})$. Therefore, for any bundle $X_{v_j}$ in $X$, $d_1$ is $\EFXT$ towards $X_{v_j}$ implies that $d_2$, with the new bundle $Y_{d_2}=X_{u_1}$ is \(\EFX{}\)  towards $v_j$. It remains to show that $d_2$ is \(\EFX{}\)  towards $a_1$. 

    If Step~\ref{step9.3} is executed with $|X_{d_2}|=2$, then $d_1$ is $\EFXT$ towards $d_2$ before this step, and $v_d(X_{u_1})>\frac{3}{2}v_d(X_{d_1})$ together imply that $d_2$ is \(\EFX{}\)  towards $Y_{a_1}=X_{d_2}$. If Step~\ref{step9.3} is executed with $|X_{d_2}|=1$, then by Condition~\ref{step9.2}, $(d_1,d_2)$ is not present in $G_e(X)$. However, $(d_1,u_1)$ is present in $G_e(X)$. This implies that $d_2$ finds the new bundle $Y_{d_2}=X_{u_1}$ more valuable than her old bundle $X_{d_2}$, and hence is envy-free towards $a_1$.

    Now consider Step~\ref{step9.5}, and hence $Y_{a_1}=\{g,g'\}$. By Condition~\ref{step9.2}, Step~\ref{step9.5} is executed only if $|X_{d_2}|=2$ or $(d_1,d_2)$ edge doesn't exist in $G_e(X)$. 
    
    Let $|X_{d_2}|=1$, and hence $(d_1,d_2)$ is not present in $G_e(X)$. Since Step~\ref{step1} fails, $v_d(g')\leq v_d(X_{d_2})<v_d(Y_{d_2})$. Also, $v_d(g)=v_d(X_{d_2})<v_d(Y_{d_2})$. So $d_2$ is \(\EFX{}\)  towards $a_1$.

    Let $|X_{d_2}|=2$. Then $d_1$ is $\EFXT$ towards $a_1$ with the new bundle $Y_{a_1}=\{g,g'\}$, and $v_d(Y_{d_2})>\frac{3}{2}v_d(X_{d_1})$ implies that $d_2$ is \(\EFX{}\)  towards $a_1$.
\end{proof}

In the \textsc{FewTypesAllocate} algorithm, once an allocation \(X\) is obtained from the \(\ppatypes\) algorithm, the critical goods in \(\Pl(X)\) are allocated. Finally the non-critical goods using the procedure of \citet{markakissantorinaiosImprovedEFX23}. Before we proceed to the proof of Theorem~\ref{thm:4types}, we establish some properties of the allocation \(X\) obtained from the \(\ppatypes\) algorithm.

\begin{restatable}{lemma}{twothird}\label{lemma:2/3}
    Let \(X\) be a partial allocation on a four-types instance, obtained from the \(\ppatypes\) algorithm. Let \(s\) be a source in the enhanced envy graph \(G_e(X)\). If any agent \(a_i^\ell\) in a group \(\N^\ell\) has a critical good in \(\Pl(X)\), then for every agent \(a_j^\ell\in \N^\ell\), the value of the bundle \(X_s\) is less than two third the value of their own bundle. That is,  \(\ v_\ell({X}_s) < \frac{2}{3}v_\ell({X}_j^\ell)\).
\end{restatable}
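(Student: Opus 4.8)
\emph{Proof idea.} The plan is to reduce the statement to a single inequality about the leading agent of the group, and then read off a contradiction directly from the defining clause of the enhanced envy graph.

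First I would reduce to the leading agent. By Corollary~\ref{cor:critical}, if some $a_i^\ell \in \N^\ell$ has a critical good $g \in \Pl(X)$, then $g$ is also critical for the leading agent $a_1^\ell$. Since $a_1^\ell$ then has a critical good, Property~\ref{prop:d} forces $|X_1^\ell| = 1$ (the $\ppatypes$ algorithm keeps every agent holding at least one good, as is implicit throughout, e.g.\ in Proposition~\ref{prop:k-types-envy-properties}, so $|X_1^\ell|\ge 1$, and having a critical good rules out $|X_1^\ell|>1$). By the ordering invariant, $v_\ell(X_1^\ell) \le v_\ell(X_j^\ell)$ for every $a_j^\ell \in \N^\ell$, so it suffices to prove the single inequality $v_\ell(X_s) < \tfrac23 v_\ell(X_1^\ell)$.

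Next I would extract structural facts from the termination of the algorithm. Since $g \in \Pl(X)$, the pool is nonempty, so $\ppatypes$ did not terminate by exhausting the pool; it terminated at the final \textbf{break} (Step~\ref{step10}), meaning every condition of Steps~\ref{step1}--\ref{step9} and the added Steps~\ref{step9.1}--\ref{step9.5} fails on $X$. In particular the condition of Step~\ref{step6} fails, so $G_r(X)$ has no source with a singleton bundle. As $|X_1^\ell|=1$, this shows $a_1^\ell$ is not a source of $G_r(X)$, hence (since $G_r(X)\subseteq G_e(X)$) not a source of $G_e(X)$ either, so $s \neq a_1^\ell$. Moreover $s$, being a source of $G_e(X)\supseteq G_r(X)$, is a source of $G_r(X)$, so by the same failure of Step~\ref{step6} its bundle is not a singleton, and since all bundles have size at most two, $|X_s|=2$.

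The crux is then a one-line argument. Suppose toward a contradiction that $v_\ell(X_s) \ge \tfrac23 v_\ell(X_1^\ell)$. Then $a_1^\ell$ meets exactly the conditions under which a red edge is added in the enhanced envy graph: $|X_1^\ell|=1$, $|X_s|=2>1$, $s$ is a source of $G_r(X)$, $a_1^\ell \neq s$, and $v_\ell(X_s)\ge \tfrac23 v_\ell(X_1^\ell)$. Hence $(a_1^\ell, s)\in E_e(X)$, so $s$ has positive in-degree in $G_e(X)$, contradicting that $s$ is a source of $G_e(X)$. Therefore $v_\ell(X_s) < \tfrac23 v_\ell(X_1^\ell) \le \tfrac23 v_\ell(X_j^\ell)$ for every $j$. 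I expect no genuine obstacle here: the only care needed is the bookkeeping --- correctly deducing $|X_s|=2$, $|X_1^\ell|=1$, and $s\neq a_1^\ell$ from the step-failures --- after which the red-edge clause in the definition of $G_e$ does all the work.
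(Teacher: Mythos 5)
Your proposal is correct and follows essentially the same route as the paper's proof: reduce to the leading agent via Corollary~\ref{cor:critical} and the ordering invariant, get \(|X_1^\ell|=1\) from Property~\ref{prop:d} and \(|X_s|=2\) with \(s\) a source of \(G_r(X)\), then derive the contradiction from the red-edge clause of \(G_e\). The only (harmless) difference is that you re-derive \(|X_s|=2\) from the failure of Step~\ref{step6}, whereas the paper simply invokes Lemma~\ref{lemma:3PA_plus}.
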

\begin{proof}
    Suppose some agent \(a_j^\ell\) has a critical good \(g\) in \(\Pl(X)\). Then, by Proposition~\ref{prop:k-types-envy-properties}, \(g\) is also critical for the leading agent \(a_1^\ell\). As \(X\) is obtained from the \(\ppatypes\) algorithm, from Property~\ref{prop:d}, we know that \(|{X}_1^\ell|=1\). 

    Now, since \(s\) is a source in \(G_e(X)\), Applying Lemma~\ref{lemma:3PA_plus}, we know that \(|{X}_s|=2\). Since \(G_e(X)\) is a super-graph of \(G_r(X)\), the node \(s\) must also be a source in \(G_r(X)\). If \(\ v_\ell({X}_s) \ge \frac{2}{3}v_\ell({X}_1^\ell)\), then there must have been a red edge from \(a_1^\ell\) to \(s\) in \(G_e(X)\). This contradicts the fact that \(s\) is a source in \(G_e(X)\). Therefore, \(\ v_\ell({X}_s) < \frac{2}{3}v_\ell({X}_1^\ell)\). Since \(\forall a_j^\ell\in \N^\ell\), \(v_\ell(X_j^\ell) \ge v_\ell(X_1^\ell)\), we have \(\ v_\ell({X}_s) < \frac{2}{3}v_\ell({X}_j^\ell)\).
\end{proof}

Furthermore, the output of the \(\ppa\) algorithm on a four-types instance satisfies the following lemma.

\begin{lemma}\label{lemma:same-good-critical}
    Let \(X\) be an allocation obtained from the \(\ppa\) algorithm. Let \(g,g'\in \Pl(X)\) be the critical goods for agents \(a\) and \(b\) respectively. If \(a\) and \(b\) belong to the same group, then \(g\) and \(g'\) are one and the same.
\end{lemma}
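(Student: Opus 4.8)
The plan is to funnel both critical goods to the leading agent of the group via the ordering invariant, and then invoke the structural properties that the output of the $\ppa$ algorithm is guaranteed to satisfy (Lemma~\ref{lemma:3PA_plus}), namely Properties~\ref{prop:d} and \ref{prop:e}.

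Let $\N^\ell$ be the common group of $a$ and $b$, with valuation $v_\ell$ and leading agent $a_1^\ell$. First I would observe that, by \Cref{cor:critical} (which is the stated corollary when $a$ is a non-leading agent, and is trivial when $a$ already is the leading agent), the good $g$, being critical for $a$, is also $\tfrac12$-critical for $a_1^\ell$; by the same reasoning $g'$ is $\tfrac12$-critical for $a_1^\ell$ as well. Hence it suffices to show that $a_1^\ell$ cannot have two distinct critical goods in $\Pl(X)$.

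Next, since $g$ is a critical good for $a_1^\ell$ and $X$ satisfies Property~\ref{prop:d}, the leading agent must have a singleton bundle, $|X_1^\ell|=1$ (an agent with a bundle of size more than one has no critical goods). Now Property~\ref{prop:e} applies to $a_1^\ell$: an agent with a singleton bundle has at most one critical good in the pool. Since both $g$ and $g'$ are critical for $a_1^\ell$, they must coincide, $g=g'$, which is exactly the claim.

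The argument is short, and the only point that needs care is the bookkeeping of which guarantees the output of $\ppa$ carries: Lemma~\ref{lemma:3PA_plus} gives Properties~\ref{prop:a}--\ref{prop:e}, and the allocation returned keeps every agent holding at least one good, so that \Cref{prop:k-types-envy-properties} and hence \Cref{cor:critical} are applicable; also ``critical'' means $\tfrac12$-critical by the standing convention. I do not expect a genuine obstacle here: the content is entirely in noting that the ordering invariant routes every agent's critical good to the leading agent, where the per-agent bound of Property~\ref{prop:e} forces uniqueness.
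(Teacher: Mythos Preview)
Your proposal is correct and follows essentially the same argument as the paper: route both critical goods to the leading agent via \Cref{cor:critical}, then use Properties~\ref{prop:d} and~\ref{prop:e} (via Lemma~\ref{lemma:3PA_plus}) to conclude the leading agent has at most one critical good. Your write-up is in fact slightly more careful in citing \Cref{cor:critical} rather than \Cref{prop:k-types-envy-properties} for the transfer of criticality.
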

\begin{proof}
    Suppose \(a\) and \(b\) belong to the same group, say \(\N^\ell\). Then, by Proposition~\ref{prop:k-types-envy-properties}, \(g\) and \(g'\) are also critical for the leading agent \(a_1^\ell\). Since \(X\) is obtained from the \(\ppa\) algorithm, from Properties~\ref{prop:d} and \ref{prop:e}, we know that \(a_1^\ell\) can have at most one critical good. Therefore, the goods \(g\) and \(g'\) are one and the same.
\end{proof}

Finally, assuming the termination of the algorithm, Theorem~\ref{thm:4types} shows that the \textsc{FewTypesAllocate} algorithm (Algorithm~\ref{alg:few_types}) computes a \(\EFXT\) allocation on an instance with at most four types of agents. The termination is proved later in Lemma~\ref{lem:termination}.

\begin{restatable}{theorem}{FourTypesThm}\label{thm:4types}
    Given an instance \(\langle \N,\M,\V\rangle\) of at most four types of agents, when the Algorithm~\ref{alg:few_types} \textsc{FewTypesAllocate} terminates, it outputs a \(\EFXT\) allocation.
\end{restatable}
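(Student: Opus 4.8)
### Proof plan

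The plan is to argue that when \textsc{FewTypesAllocate} terminates, the returned allocation is complete and $\EFXT$, by tracking the $\EFXT$ property through the three phases of the algorithm: (i) the call to $\ppatypes$, (ii) the one-shot allocation of the critical set $\mathcal{C}$ producing the intermediate allocation $Y$, and (iii) the final \textsc{EnvyCycleElimination} on $G(Y)$. Completeness will follow once I show that after phase (ii) no good in the pool is critical for any agent: then \Cref{lemma:markakis-santorinaios} (with $\alpha=\frac23$, $\beta=\frac12$, so $\min(\alpha,\frac1{\beta+1})=\frac23$) guarantees that \textsc{EnvyCycleElimination} completes the allocation while preserving $\EFXT$ and Pareto-dominating $Y$.

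First I would record, via \Cref{lemma:ppa_types} (together with the already-established correctness of the original $\ppa$ steps in \cite{amanatidisetalPushingFrontier24a}), that the output $X$ of $\ppatypes$ satisfies Properties~\ref{prop:a}--\ref{prop:e}; in particular $X$ is $\EFXT$, has size at most $2$, every critical good lies in $\Pl(X)$, and by Properties~\ref{prop:d}--\ref{prop:e} each agent with a size-$2$ bundle has no critical good while each agent with a singleton bundle has at most one critical good valued at most $\frac23$ of her bundle. Using \Cref{lemma:same-good-critical} and \Cref{cor:critical}, the set $\mathcal{C}$ of distinct critical goods contains at most one good per group other than the group(s) containing a source of $G_e(X)$; since a source has a size-$2$ bundle (\Cref{lemma:3PA_plus}) and hence no critical good, and since there are at most four groups, $|\mathcal{C}|\le 3$ — and if $|\mathcal{C}|=3$ there is a unique source, say $d_1\in\D$, and it is the only group without a critical good. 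This is exactly the case split of the algorithm: $|\mathcal{C}|\le 1$ or ($|\mathcal{C}|=2$ with $\ge 2$ sources) or ($|\mathcal{C}|=2$ with one source) or $|\mathcal{C}|=3$.

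The core of the proof is to verify, case by case, that the intermediate allocation $Y$ is $\EFXT$ and that $\Pl(Y)$ has no critical good for anyone. For the cases where a critical good $g_i$ is appended to a source $s_i$ (Lines~\ref{alg:case1}, \ref{alg:case2}, and the $|\D|=1$ or ``Else'' branches of Line~\ref{alg:case3}): since $|X_s|=2$, the new bundle $X_s\cup\{g\}$ has size $3$, so $s$ need only be $\frac23$-envy-free toward others; for this I use that $s$ was a source (so envied nobody and is trivially fine) and, crucially, for every other agent $a$ in a group that contributed a critical good, \Cref{lemma:2/3} gives $v_a(X_s) < \frac23 v_a(X_a)$, and since the appended goods are themselves critical — each worth more than $\frac12$ of the relevant leading agent's singleton bundle but at most $\frac23$ of it by Property~\ref{prop:e} — a short calculation bounds $v_a(X_s\cup\mathcal{C}\setminus\{g\})$ against $\frac32 v_a(X_a)$, giving $\EFXT$. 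For the split branch of Line~\ref{alg:case3} (where one good $g$ is diverted to $d_2$ because the guard $v_d(X_{d_2})<\frac23 v_d(X_{d_1}\cup\mathcal{C}\setminus\{g\})$ holds), I must check that $d_1$ with $X_{d_1}\cup(\mathcal{C}\setminus\{g\})$ (size $3$) is $\frac23$-EFX toward $d_2$ — which is precisely the guard — and that $d_2$ with one extra good $g$ (size $2$) is $\EFX$ toward everyone, for which I need $g$ to be small for $d$ relative to $X_{d_2}$; this uses that $g$ is \emph{not} critical for $d$ (the $\D$-group has no critical good when $|\mathcal{C}|=3$), so $v_d(g)\le\frac12 v_d(X_{d_2})$ if $|X_{d_2}|>1$, or a direct comparison via the failed Steps~\ref{step1}--\ref{step4} of $\ppatypes$ if $|X_{d_2}|=1$. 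In all cases I also confirm $\Pl(Y)=\Pl(X)\setminus\mathcal{C}$ contains no critical good: all critical goods were in $\mathcal{C}$ by definition, and moving goods \emph{into} bundles only increases bundle values, so no new criticalities appear. In the $|\mathcal{C}|=3$ split branch a good $g'\in X_{d_1}$ may be returned to the pool via \textsc{Pseudo-Cycle-Resolution} inside $\ppatypes$, but that happens during phase (i) and is covered by the $\ppatypes$ correctness; the final phase-(ii) pool is $\Pl(X)\setminus\mathcal{C}$.

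Finally, with $Y$ shown to be $\EFXT$ and free of critical goods in its pool, \Cref{lemma:markakis-santorinaios} yields a complete $\EFXT$ allocation $X \succ Y$ output by \textsc{EnvyCycleElimination}, completing the proof. The main obstacle I anticipate is the case analysis of Line~\ref{alg:case3} when $|\mathcal{C}|=3$: one must simultaneously control three critical goods of three distinct groups landing on a single source's size-$2$ bundle, and the combinatorial bookkeeping of which leading agent each critical good ``belongs'' to — together with the interplay between \Cref{lemma:2/3}, Property~\ref{prop:e}'s $\frac23$-cap, and the guard defining the split — is where the inequalities are tight and must be handled with care. A secondary subtlety is ensuring the ordering invariant is restored after each reallocation so that ``leading agent'' statements and \Cref{prop:k-types-envy-properties} remain applicable; this is routine but must be stated.
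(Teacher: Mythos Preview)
Your high-level outline --- run $\ppatypes$, case-split on $|\mathcal{C}|$ and the number of sources, verify the intermediate $Y$ is $\EFXT$ with no critical goods remaining in $\Pl(Y)$, then complete via \Cref{lemma:markakis-santorinaios} --- matches the paper. The gap is that the ``short calculation'' you promise for the $|\mathcal{C}|=3$ cases cannot close using only Property~\ref{prop:e} and \Cref{lemma:2/3}; the paper's argument relies essentially on the \emph{failure} of particular $\ppatypes$ steps, which you do not invoke.

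Concretely, in the $|\D|=1$ sub-case, Property~\ref{prop:e} bounds only $a_1$'s \emph{own} critical good by $\tfrac23 v_a(X_{a_1})$; the other two goods in $\mathcal{C}$ are merely non-critical for $a_1$ (each $\le\tfrac12$), and the naive sum with \Cref{lemma:2/3} gives $\tfrac23+\tfrac23+\tfrac12+\tfrac12=\tfrac73$, too large. The paper uses failure of Step~\ref{step3} (any two pool goods together are $\le\tfrac23 v_a(X_{a_1})$ for a singleton agent), which combined with $v_a(g_1)>\tfrac12 v_a(X_{a_1})$ forces the other two below $\tfrac16$, yielding the tight $\tfrac23+\tfrac23+\tfrac16=\tfrac32$. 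In the $|\D|\ge 2$ split sub-case you have the direction reversed: the hard part is not that $d_2$ is $\EFX{}$ toward others (her bundle only grew), but that the \emph{other} leading agents $a_1,b_1,c_1$ are $\EFX{}$ toward $Y_{d_2}=X_{d_2}\cup\{g_1\}$. This is precisely what the new Steps~\ref{step9.3} and~\ref{step9.5} were added for --- their failure says no such leading agent envies $X_{d_2}$ and none prefers a pair $\{g,g'\}$ with $g\in X_{d_2}$, $g'\in\Pl(X)$ to her own bundle, giving $v_a(X_{a_1})\ge v_a(Y_{d_2}\setminus h)$ for every $h\in Y_{d_2}$. You never mention Steps~\ref{step9.3}--\ref{step9.5}. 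A similar omission affects the single-source $|\mathcal{C}|=2$ case, where agents \emph{without} a critical good are handled via failure of Step~\ref{step9}. Finally, the guard you cite says $d_2$ is \emph{not} $\EFXT$ toward $X_{d_1}\cup\mathcal{C}$ and merely \emph{triggers} the split; showing $d_1$ is $\EFXT$ toward $d_2$ afterward requires a separate argument (that the diverted good is $v_d$-minimal in $Y_{d_2}$), and note $|Y_{d_1}|=4$, not $3$.
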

\begin{proof}
    From \cref{lemma:3PA_plus} and \cref{lemma:ppa_types}, we know that the \(\ppatypes\) algorithm (Algorithm~\ref{alg:3PA_plus_type}) returns a partial allocation \(X\) that satisfies Properties~\ref{prop:a} - \ref{prop:e}. Furthermore, \(G_e(X)\) has at least one source, and every source has exactly two goods. If \(X\) is a complete allocation, then Algorithm~\ref{alg:few_types} outputs that allocation, thus we are done.
    
    Consider the case when $X$ is not a complete allocation. Then, \(\mathcal{C}\) be the set of goods in \(\Pl(X)\) that are critical for at least one agent. That is, \(\mathcal{C} = \{g\in \Pl(X) \mid g \text{ is critical for some agent }\}\).

    From Lemma~\ref{lemma:same-good-critical}, we know that each group can have at most one critical good. Therefore, there can be at most four critical goods in total. However, from Lemma~\ref{lemma:3PA_plus}, we know that \(G_e(X)\) has at least one source, and the source cannot have a critical good. From Proposition~\ref{prop:k-types-envy-properties}, we know that the source must be a leading agent of some group, say \(\N^1\). Then, because of Proposition~\ref{prop:k-types-envy-properties}, no agent from the group \(\N^1\) has any critical goods. Therefore, there can be at most \(3\) critical goods, at most one for each of the remaining groups.

    We consider three cases based on the number of sources in \(G_e(X)\) and the number of critical goods: 

    \noindent\paragraph{Case 1: $|\mathcal{C}|\leq2$, and $G_e(X)$ has two sources:} Let $\mathcal{C}=\{g_1,g_2\}$, and let \(s_1\) and \(s_2\) be the sources in $G_e(X)$. This is considered in Line~\ref{alg:case1} of Algorithm~\ref{alg:3PA_plus_type}.  Here,  \(Y\) is constructed from \(X\) by allocating \(g_1\) to \(s_1\) and \(g_2\) to \(s_2\).

    Consider an agent \(u\in\N\) who has a critical good, say \(g_1\in \mathcal{C}\) under the partial allocation \(X\). Notice that \({u}\) cannot be a source vertex, due to Property~\ref{prop:e}. Since \(X\) is the output of the \(\ppatypes\) algorithm, by applying Property~\ref{prop:e}, we have \(v_{u }(g_1)\le \frac{2}{3}v_{u}({X}_{u})\) and \(v_{u }(g_2)\le \frac{2}{3}v_{u}({X}_{u})\). Also, by Lemma~\ref{lemma:2/3}, \(v_{u}({X}_{s_j}) < \frac{2}{3}v_{u}({X}_{u})\) for \(j \in \{1, 2\}\). Therefore, for \(j\in \{1,2\}\), we have 
    \begin{align}
        v_{u}(Y_{s_j}) &= v_{u}({X}_{s_j} \cup g_j)\notag\\
                             &< \frac{2}{3}v_{u}({X}_{u}) + \frac{2}{3}v_{u}({X}_{u})\notag\\
                             &< \frac{3}{2}v_{u}({X}_{u})\notag\\
                             &= \frac{3}{2}v_{u}(Y_{u})\notag
    \end{align}
      
    This agent \(u\) is \(\EFXT\) towards both \(s_1\) and \(s_2\). Since all other agents retain their old bundles, agent \(u\) continues to be \(\EFXT\) towards them.
    
    Now, consider an agent \(u\ \in \N\) who does not have a critical good under the partial allocation \(X\). Then, it is clear that for each \(j\in\{1,2\}\), we have \(v_{u}(g_j) \le \frac{1}{2}v_{u}({X}_{u}) \le \frac{1}{2}v_{u}(Y_{u})\). Furthermore, since \(s_1\) and \(s_2\) are sources in \(G_e(X)\), we have \(v_{u}(X_{s_j}) < v_{u}({X}_{u})\), for each \(j\in \{1,2\}\). Therefore, 
    
    \begin{align}
        v_a(Y_{s_j}) &= v_{u}({X}_{s_j} \cup g_j)\notag\\
                             &< v_{u}({X}_{u}) + \frac{1}{2}v_a({X}_{u})\notag\\
                             &\le \frac{3}{2}v_{u}(Y_{u})\notag
    \end{align}

    Therefore, even in this case, agent \({u}\) remains \(\EFXT\) towards all the agents. Since the bundles of \(s_1\) and \(s_2\) in \(Y\) are strictly better than those in \(X\), the agents \(s_1\) and \(s_2\) continue to be \(\EFXT\) towards all the other agents. Thus, \(Y\) is a \(\EFXT\) allocation. The same argument holds for $|\mathcal{C}|=1$.

    \noindent\paragraph{Case 2: $|\mathcal{C}|=2$, and only one source in $G_e(X)$: } 

    Now, consider the case when there are two critical goods in \(\mathcal{C}\) and only one source, say \(d_1\) in \(G_e(X)\). Here, the whole of \(\mathcal{C}\) is allocated to \(d_1\). We show this allocation \(Y\) to be \(\EFXT\). Let $\mathcal{C}=\{g_1,g_2\}$.
    Let \(u\) be an agent who has a critical good, say \(g_1\in \mathcal{C}\). From Properties \ref{prop:d} and \ref{prop:e}, we know that \(|X_{u}|=1\) and \(|X_{d_1}|=2\). Since \(d_1\) is a source in \(G_e(X)\), there is no red edge from \(u\) to \(d_1\), and hence \(v_{u}(X_{d_1}) < \frac{2}{3}v_{u}(X_{u})\). Furthermore, since \(X\) is obtained from the \(\ppatypes\) algorithm, Step~\ref{step3} fails on \(X\). That is, for any two goods $h_1,h_2\in\mathcal{P}(X)$, \(v_u(\{h_1,h_2\})\leq \frac{2}{3}v_{u}(X_{u})\). So, \(v_{u}(\mathcal{C}) \le \frac{2}{3}v_{u}(X_{u})\). From these two inequalities, we have \( v_{u}(X_{d_1} \cup \mathcal{C}) \le \frac{4}{3}v_{u}(X_{u}) < \frac{3}{2}v_{u}(X_{u})\). Therefore, agent \(u\) is \(\EFXT\) towards \(d_1\).

    Let \(u\) be an agent who does not have a critical good in \(\mathcal{C}\). Since \(d_1\) is a source, from \cref{lemma:ppa_types}, we know that \(|X_{d_1}|=2\). Let \(\mathcal{C}=\{g_1,g_2\}\) and \(X_{d_1}=\{g',g''\}\). Since \(G_e(X)\) is acyclic and \(d_1\) is the only source, there must exist a path from \(d_1\) to \(u\). However, since \(X\) was obtained from \(\ppatypes\), Step~\ref{step9} fails on \(X\). So, we have that for any two goods \(h_1,h_2\in X_{d_1}\cup \mathcal{C}\), \(v_{u}(X_{u})\geq v_{u}(\{h_1,h_2\})\). Therefore, for any three goods, say \(\{h_1,h_2,h_3\}\) from \(X_{d_1}\cup \mathcal{C}\), we get \(v_{u}(X_{u})\geq \frac{2}{3}v_{u}(\{h_1,h_2,h_3\})\). That is, for any good \(h\in Y_{d_1}=X_{d_1}\cup\mathcal{C}\), 
    \begin{align}
        v_{u}(X_{u})\geq \frac{2}{3}v_{u}(Y_{d_1}\setminus h)
    \end{align}

    That is, agent \(u\) is \(\EFXT\) towards \(d_1\). Since agent \(d_1\) got a more valuable bundle in \(Y\) than in \(X\), agent \(d_1\) is also \(\EFXT\) towards all the other agents. Thus, \(Y\) is a \(\EFXT\) allocation.

    \noindent\paragraph{Case 3: $|\mathcal{C}|=3$:} Finally, consider the case when there are three critical goods, say \(g_1\), \(g_2\), and \(g_3\). Without loss of generality, let \(d_1\) be the source in \(G_e(X)\). We know that \(d_1\), and hence any agent in group $\D$ has no critical goods. Furthermore, from Lemma~\ref{lemma:same-good-critical}, we know that each group can have at most one critical good. Therefore, the three critical goods must belong to three different groups. Consequently, $d_1$ must be the only source in $G_e(X)$, and every leading agent in \(\A\), \(\B\), and \(\mathcal{C}\) has a critical good. Hence they all have a singleton bundle. Having made this observation, we now consider two sub-cases based on the number of agents in group \(\D\).

    \noindent\paragraph{Case 3.1: $|\D|=1$:} If there is only one agent in group \(\D\), that is, \(d_1\), then we allocate all the three critical goods to \(d_1\). Because of \cref{prop:k-types-envy-properties}, to show that an agent, say \(a_j\in\A\) is \(\EFXT\) towards \(d_1\), it suffices to show that the corresponding leading agent, \(a_1\in \A\) is \(\EFXT\) towards \(d_1\). We will show that each of the agents \(a_1\), \(b_1\), and \(c_1\) is \(\EFXT\) towards \(d_1\). Consider agent \(a_1\). The proof is similar for agents \(b_1\) and \(c_1\). Let \(\mathcal{C}=\{g_1,g_2,g_3\}\) and \(g_1\) be the critical good for agent \(a_1\). Since \(X\) fails Step~\ref{step3}, we know that for any two goods in \(\mathcal{C}\), say $\{g,g'\}$, \(v_a(\{g,g'\})\leq \frac{2}{3}v_a(X_{a_1})\). Therefore, we have \(v_a(g_1,g_2) \le \frac{2}{3}v_a(X_{a_1})\). However, since \(g_1\) is critical for \(a_1\), we have \(v_a(g_1) > \frac{1}{2}v_a(X_{a_1})\). Combining these two inequalities, we get that \(v_a(g_2) < \frac{1}{6}v_a(X_{a_1})\). Therefore, we have:
    \begin{align}
        v_a(Y_{d_1}) &= v_a(X_{d_1} \cup \mathcal{C})\notag\\
                             &= v_a(X_{d_1}) + v_a(\{g_1,g_3\}) + v_a(g_2) \notag\\
                             &\le v_a(X_{d_1}) +  \frac{2}{3}v_a(X_{a_1}) + \frac{1}{6}v_a(X_{a_1})\tag{Since \(v_a(\{g_1,g_3\}) \le \frac{2}{3}v_a(X_{a_1})\) }\\
                             &\le \frac{2}{3}v_a(X_{a_1}) +  \frac{2}{3}v_a(X_{a_1}) + \frac{1}{6}v_a(X_{a_1})\tag{Because there is no red edge from \(a_1\) to \(d_1\)}\\
                             &= \frac{3}{2}v_a(X_{a_1})\\
                             &= \frac{3}{2}v_a(Y_{a_1})\notag
    \end{align}
    Therefore, agent \(a_1\) is \(\EFXT\) towards \(d_1\). Similarly \(b_1\) and \(c_1\) are also \(\EFXT\) towards \(d_1\). Therefore, all the agents are \(\EFXT\) towards \(d_1\), and continue to be \(\EFXT\) towards each other. Since the bundle of \(d_1\) in \(Y\) is strictly better than that in \(X\), \(d_1\) is also \(\EFXT\) towards all the other agents. Thus, \(Y\) is a \(\EFXT\) allocation.

    \noindent\paragraph{Case 3.2: $|\D|\geq 2$:} If there are at least two agents in group \(\D\), say \(d_1\) and \(d_2\), then we consider two sub-cases based on whether \(d_2\) is \(\EFXT\) towards the bundle \(X_{d_1} \cup \mathcal{C}\) or not.
    If \(d_2\) is \(\EFXT\) towards the bundle \(X_{d_1} \cup \mathcal{C}\), then we allocate all of \(\mathcal{C}\) to \(d_1\). From Case~3.1, we know that the agents in \(\A\), \(\B\), and \(\C\) are \(\EFXT\) towards \(d_1\). Since \(d_2\) is \(\EFXT\) towards \(d_1\), all the agents in \(\D\setminus \{d_1,d_2\}\) are also \(\EFXT\) towards \(d_1\), as the bundle of \(d_2\) in \(Y\) is strictly worse than that of any other agent in \(\D\setminus \{d_1,d_2\}\). Thus, \(Y\) is a \(\EFXT\) allocation.

    \paragraph{When \(d_2\) is not \(\EFXT\) towards \(X_{d_1} \cup \mathcal{C}\):} W.l.o.g let \(g_1\) be the least valuable good in \(\mathcal{C}\), according to \(v_d\). Then Algorithm~\ref{alg:few_types} allocates \(g_1\) to \(d_2\) and the remaining two critical goods, \(g_2\) and \(g_3\) to \(d_1\). We will show that this allocation \(Y\) is \(\EFXT\).

    Firstly, agents in groups \(\A\), \(\B\), and \(\C\) are \(\EFXT\) towards \(X_{d_1}\cup \mathcal{C}\) as shown in Case~3.1. Therefore, they are \(\EFXT\) towards the smaller bundle \(X_{d_1}\cup \{g_2,g_3\}\), which is now allocated \(d_1\). 
    Furthermore, from the proof of  Case~2, agents who have no critical good in \(\mathcal{C}\) are \(\EFXT\) towards \(d_1\), when \(d_1\) receives at most two goods from \(\mathcal{C}\). Hence all the agents in $\D$ are $\EFXT$ towards $d_1$. Agent \(d_1\) is \(\EFXT\) towards all of \(\N\setminus \{d_2\}\), as they retain their old bundles, and \(d_1\) receives a more valuable bundle in \(Y\) than in \(X\). It remains to prove that every agent is $\EFXT$ towards $d_2$.

    We claim that agent \(d_1\) is  \(\EFXT\) towards \(d_2\) as well. Observe that \(g_1\) is the least valuable good in \(X_{d_2} \cup \{g_1\}\).  If \(|X_{d_2}|=1\), then this observation follows from the fact that \(g_1\) is not critical for agent \(d_2\). Otherwise, if \(|X_{d_2}|=2\), then Step~\ref{step4} of \(\ppatypes\) algorithm guarantees this property. Furthermore, Step~\ref{step4} also guarantees that according to \(v_d\), \(g_1\) is less valuable than each of the goods in \(X_{d_1}\), as \(|X_{d_1}|=2\). Since \(g_1\) was chosen by \(d_2\) as the least preferred good from \(\mathcal{C}\), good \(g_1\) is also the least valuable good in \(X_{d_1} \cup \mathcal{C}\). Now, for the sake of contradiction, suppose \(d_1\) is not \(\EFXT\) towards \(d_2\). That is,
    \begin{align}
        v_d(X_{d_1} \cup \{g_2,g_3\}) < \frac{2}{3}v_d(X_{d_2})\tag{Removal of the smallest good \(g_1\) from \(Y_{d_2}\)}
    \end{align}
    But then $v_d((X_{d_1}\cup \mathcal{C})\setminus g_1)<\frac{2}{3}v_d(X_{d_2})$, contradicting the assumption that \(d_2\) is \emph{not} \(\EFXT\) towards \(X_{d_1} \cup \mathcal{C}\).

    We now show that all the other agents are \(\EFXT\) towards agent \(d_2\). For any agent $d_j, j>2$ in $\D$, if such an agent exists, $v_d(X_{d_j})\geq v_d(X_{d_2})=v_d(Y_{d_2}\setminus g_1)\geq Y_{d_2}\setminus g$ for any $g$ in $Y_{d_2}$, since $g_1$ is the least valuable good in $Y_{d_2}$. Hence each agent in $\D$ is $\EFXT$ towards $d_2$. Now consider agents from other groups. It suffices to show that the leading agents are $\EFXT$ towards $d_2$.
    
     Consider any leading agent, say \(a_1\), different from \(d_1\). Let \(g_1\) be the critical good for \(a_1\). Since \(X\) is obtained from the \(\ppatypes\) algorithm, \(X\) fails Steps~\ref{step9.1} to \ref{step9.5}. 
     \begin{enumerate}
         \item If Step~\ref{step9.1} failed, we have $|\D|=1$, and Case~3.1 above would apply. 
         \item Since $|\mathcal{C}|=3$, all leading agents except $d_1$ have critical items and hence singleton bundles. So Step~\ref{step9.2} can fail only if $|X_{d_2}|=1$ and $(d_1,d_2)\in G_e(X)$. We claim that, in this situation, $d_2$ must be $\EFXT$ towards $d_1$. Since $(d_1,d_2)\in G_e(X)$, with $|X_{d_1}|=2, |X_{d_2}|=1$, it must be the case that $v_d(X_{d_2})>\frac{3}{2}v_d(X_{d_1})$. But then, since Step~\ref{step3} fails, $v_d(X_{d_2})\geq \frac{2}{3}v_d(\{g,g'\})$ for any $g,g'\in \mathcal{P}(X)$. Therefore $v_d((X_{d_1}\cup\mathcal{C})\setminus\{g_1\})\leq \frac{2}{3}v_d(X_{d_2})+\frac{2}{3}v_d(X_{d_2})<\frac{3}{2}v_d(X_{d_2})$. Hence Step~\ref{step9.2} does not fail.
        \item Steps~\ref{step9.3}, \ref{step9.5} fail, which implies that no leading agent envies $X_{d_2}$ and no leading agent prefers an item from $d_2$ and an item from $\mathcal{P}(X)$ over their own bundle. Therefore, for any leading agent $u_1$, $v_u(X_{u_1})\geq v_u(Y_{d_2}\setminus g)$ for any $g\in Y_{d_2}$. 
     \end{enumerate}
     Thus $Y$ is a $\EFXT$ allocation.

\end{proof}

\subsection{Termination of \(\ppatypes\)}\label{sec:termination}

This section shows that the algorithm \(\ppatypes\) terminates. To do so, we first define the notion of \emph{configuration}:

\begin{definition}\label{def:configuration}
     Given an instance of at most four types of agents, a \emph{configuration} of an allocation \(X\) is a tuple \(\langle X_{a_1},\ldots,X_{a_{|\A|}},X_{b_1},\ldots,X_{c_1},\ldots,X_{d_1},\ldots,X_{d_{|\D|}} \rangle\), where the bundles of all the agents in each group are listed in an increasing order of value according to the valuation of the respective group, and the order on groups is arbitrary but fixed.
\end{definition}

\begin{lemma}\label{lem:termination}
Let $T$ be the total number of configurations. The algorithm $\ppatypes$ terminates in at most $T^2$ steps.
\end{lemma}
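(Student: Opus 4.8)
The plan is to exhibit a \emph{potential function} on allocations that is monotone along the execution of $\ppatypes$, so that no configuration can repeat, and to argue that within a single configuration only boundedly many steps can be taken. Since there are $T$ configurations in total, this gives the $T^2$ bound. The natural potential is the sorted vector of bundle values, one coordinate per agent, ordered within each group by the ordering invariant, and compared lexicographically (or via Pareto domination refined to a total order). Concretely I would track, for each step of the algorithm, the tuple of valuations $\langle v_t(X_j^t)\rangle$ across all agents, and show it strictly increases in the appropriate sense each time the while-loop body executes.

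The key steps, in order, are: (1) Recall from \cite{amanatidisetalPushingFrontier24a} that Steps~\ref{step1}--\ref{step9} of the original $\ppa$ algorithm each produce a Pareto-dominating allocation (every swap, cycle resolution, path resolution, or source augmentation strictly increases some agent's value and decreases no one's), so the sorted valuation tuple strictly increases in each such step; this handles all the inherited steps. (2) Show that the new Steps~\ref{step9.3} and \ref{step9.5} also strictly increase the potential: in Step~\ref{step9.3} the \textsc{Pseudo-Cycle-Resolution} shuffles bundles along an envy path (each agent $z_j$ on the path gets a bundle it envied, hence strictly more value) while $d_2$ receives $X_{z_2}$ and $a_1$ receives a strictly better bundle $X_{d_2}$ — so every agent that changes bundles gains, and by Lemma~\ref{lemma:ppa_types} the relevant properties are preserved, so this is again a Pareto improvement; in Step~\ref{step9.5}, $a_1$ receives $\{g,g'\}$ which by the step's guard strictly exceeds $v_a(X_{a_1})$, and the agents along the path each gain, and $d_2$ receives $X_{z_2}$ which (by Condition~\ref{step9.2}, since either $|X_{d_2}|=2$ forces a strong-envy edge or $(d_1,d_2)\notin G_e(X)$) is strictly more valuable than $X_{d_2}$ — so again the potential strictly increases. (3) Since the potential strictly increases at every iteration and takes values in a finite set determined by configurations, no configuration recurs; hence the algorithm performs at most $T$ "configuration changes." (4) Finally, bound the number of consecutive iterations spent in a single configuration. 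Steps that move goods in or out of the pool (Steps~\ref{step1}--\ref{step4}, \ref{step6}, \ref{step9}, \ref{step9.5}) strictly change the partition into allocated/pool and in fact increase the potential while the configuration is fixed only finitely often; the cycle-resolution steps (Steps~\ref{step5}, \ref{step8}) do not change the \emph{multiset} of bundles but strictly reduce the number of envy edges or strictly increase the potential, and cannot cycle. A clean way to close this: observe that within a fixed configuration the multiset of bundle values is fixed, so the potential is constant, which contradicts strict monotonicity unless the configuration actually changes each iteration — giving at most one iteration per configuration visit, hence at most $T$ steps. To get the stated $T^2$ I would instead argue more loosely: each of the at most $T$ distinct configurations is visited, and between visits at most $T$ intermediate steps occur, or simply appeal to the strict-increase argument to bound everything by $T$ and note $T \le T^2$.

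The main obstacle I anticipate is step (4): pinning down that cycle-resolution steps (Steps~\ref{step5}, \ref{step8}) and the reordering performed at the end of \textsc{Pseudo-Cycle-Resolution} genuinely advance the potential or strictly change the configuration, rather than merely permuting bundles among agents and leaving the sorted valuation tuple unchanged. Resolving an envy cycle strictly increases every participating agent's value, so the \emph{sorted} tuple does strictly increase — the subtlety is only that the reordering step relabels agents, so one must be careful to define the potential on the \emph{configuration} (the sorted tuple) rather than on the raw indexed allocation, and then the cycle resolution is seen to map one configuration to a strictly larger one. Once the potential is correctly defined on configurations and shown strictly monotone across every loop iteration (using the Pareto-improvement facts from \cite{amanatidisetalPushingFrontier24a} for the old steps and the per-agent value increases above for the new steps), the count of at most $T$ iterations — hence certainly at most $T^2$ — is immediate.
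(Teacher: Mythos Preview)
Your argument rests on the claim that every iteration of the while loop yields a Pareto-dominating allocation, so a potential on the sorted valuation tuple strictly increases. This is false, both for the inherited steps and for the new ones. Among the inherited steps, Step~\ref{step3} swaps a singleton $X_a$ for two pool goods valued only $>\tfrac{2}{3}v_a(X_a)$, and Step~\ref{step8} resolves cycles in $G_e$ where, along a red edge $(a,s)$, agent $a$ receives $X_s$ with merely $v_a(X_s)\ge\tfrac{2}{3}v_a(X_a)$; neither is a Pareto improvement. For the new steps, your assertion that $d_2$ gains is wrong: $d_2$ receives the singleton $X_{z_2}$ of a leading agent on the path, and when $|X_{d_2}|=2$ all one knows is $v_d(X_{z_2})>\tfrac{3}{2}v_d(X_{d_1})$, which compares to $d_1$'s bundle, not $d_2$'s. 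The paper says this explicitly: the non-leading agent in the pseudo-cycle ``may get a lower valued bundle.'' So the sorted valuation tuple need not increase, and your monotone-potential argument collapses.

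The paper's proof is structured differently and does not attempt a global potential. It uses the fact from \cite{amanatidisetalPushingFrontier24a} that, as long as only Steps~\ref{step1}--\ref{step9} execute, no agent ever receives the same bundle twice; hence between two consecutive invocations of Step~\ref{step9.3} or~\ref{step9.5} there are at most $T$ iterations. It then shows, via two auxiliary claims, that the configuration immediately \emph{before} an execution of Step~\ref{step9.3}/\ref{step9.5} never recurs: the minimum value in any group can drop by at most a factor $\tfrac{2}{3}$ over the entire run, while a singleton can enter a group only along an edge that forces its value to exceed $\tfrac{3}{2}$ of the current group minimum --- so the singleton $\{g\}$ that $a_1$ held can never be received by group $\A$ again. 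Thus Step~\ref{step9.3}/\ref{step9.5} fires at most $T$ times, and between firings at most $T$ steps occur, giving $T^2$.
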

\begin{proof}
        By \cite{amanatidisetalPushingFrontier24a}, no agent gets the same bundle twice as long as Steps \ref{step1} to \ref{step9} of $\ppatypes$ algorithm are executed. This may not hold after Steps \ref{step9.3} and \ref{step9.5}, as the non-leading agent involved in a pseudo-cycle (referred to as agent $d_2$ in the algorithm) may get a lower valued bundle. Moreover, this agent may loose a 2-size bundle to a 1-size bundle. Thus an agent may get the same bundle more than once, but we show that the configuration before Step~\ref{step9.3} or Step~\ref{step9.5} never repeats again. This is proved in Claim~\ref{claim:repeat} below. Assuming this claim, it follows that Step~\ref{step9.3} or \ref{step9.5} can be executed at most $T$ times. 
\end{proof}
We first prove that the value of the minimum valued bundle in any group cannot keep reducing arbitrarily.

\begin{claim}\label{claim:reduction}
    Let $\mu$ be the minimum value of any bundle in a particular group at some point. Then the minimum value in that group at any later point is at least $\frac{2}{3}\mu$.
\end{claim}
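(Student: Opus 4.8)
\textbf{Proof plan for Claim~\ref{claim:reduction}.}

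The plan is to track the minimum-valued bundle of a fixed group $\N^\ell$ and argue that it can only shrink via one specific mechanism, namely when $\N^\ell$ plays the role of the ``$d$-group'' in Step~\ref{step9.3} or Step~\ref{step9.5} and its leading agent's bundle is handed to the non-leading agent $d_2$. In every other step of the $\ppatypes$ algorithm, I would first observe that the minimum value in a group never decreases: Steps~\ref{step1}--\ref{step4} only swap a pool good into a bundle when the agent strictly prefers the result, Step~\ref{step5} and Step~\ref{step8} are envy-cycle resolutions that weakly increase every agent's value (hence the sorted-ascending vector of a group is coordinatewise nondecreasing), Step~\ref{step6} adds a good to a source, and Step~\ref{step7} (\textsc{SingletonPool}) and Step~\ref{step9} (\textsc{PathResolution*}) likewise only move agents to weakly better bundles along envy paths. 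So the only place a group's minimum can drop is a \textsc{Pseudo-Cycle-Resolution} call in which a member of that group is the distinguished agent $d_2$ receiving $X_{z_2}$, the old bundle of the source $d_1$ of the same group.

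Next I would quantify that single drop. When Step~\ref{step9.3} or Step~\ref{step9.5} fires with group $\N^\ell$ as the $d$-group, the source $d_1$ has $|X_{d_1}|=2$ (because Step~\ref{step6} failed), and $d_2$ receives $Y_{d_2}=X_{d_1}$, a two-good bundle of value $v_\ell(X_{d_1})$ to the group. Condition~\ref{step9.2} requires that either $|X_{d_2}|=2$ or $(d_1,d_2)\notin G_e(X)$; in the latter case $d_2$ does not envy $d_1$, so $v_\ell(X_{d_2})\ge v_\ell(X_{d_1})$ already and $d_2$'s value does not decrease at all. In the former case $|X_{d_2}|=2$ and, by Property~\ref{prop:b}, $d_1$ is $\frac{2}{3}$-EFX towards $d_2$, i.e.\ $v_\ell(X_{d_1})\ge \frac{2}{3}v_\ell(X_{d_2}\setminus h)$ for each $h\in X_{d_2}$; since $X_{d_2}$ has only two goods, this gives $v_\ell(X_{d_1})\ge \frac{2}{3}\max_{h\in X_{d_2}}v_\ell(X_{d_2}\setminus h) \ge \frac{1}{3}v_\ell(X_{d_2})$ — but in fact the sharper reading is that the value $d_2$ now holds, $v_\ell(X_{d_1})$, is at least $\frac{2}{3}$ of $v_\ell(X_{d_2})$ minus its smallest good; combining with non-degeneracy and additivity I would conclude $v_\ell(Y_{d_2}) = v_\ell(X_{d_1}) \ge \frac{2}{3}\,v_\ell(X_{d_2})$. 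Either way, the bundle that $d_2$ loses was of value $v_\ell(X_{d_2})$ and the one it gains is of value at least $\frac{2}{3}v_\ell(X_{d_2})$, and no other bundle of the group decreases; hence the group minimum goes from $\mu$ to at least $\frac{2}{3}\mu$.

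Finally I would assemble these observations into the statement as written. After one such pseudo-cycle resolution the minimum is $\ge \frac{2}{3}\mu$; I must check it never degrades further below $\frac{2}{3}\mu$ afterwards, which could be a concern if pseudo-cycles chained. Here I would appeal to Claim~\ref{claim:repeat} (the configuration present immediately before a Step~\ref{step9.3}/\ref{step9.5} execution never recurs), together with the fact that between two consecutive pseudo-cycle resolutions every step is value-nondecreasing per group as argued above — so once the minimum has risen back to at least the old $\mu$ (which it must, since otherwise the configuration could not have changed enough to avoid repetition), a fresh drop is measured against a value $\ge \mu$, and the bound $\frac{2}{3}\mu$ is maintained. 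The main obstacle I anticipate is exactly this last point: making rigorous that pseudo-cycle drops cannot compound multiplicatively, which is why the claim is phrased as a uniform $\frac{2}{3}\mu$ lower bound rather than an immediate consequence of monotonicity, and why it is stated separately and used as a lemma toward termination in Claim~\ref{claim:repeat}.
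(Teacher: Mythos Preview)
Your plan misidentifies where value can drop. You assert that Steps~\ref{step1}--\ref{step9} are all weakly value-increasing, but this is false: in Step~\ref{step3} an agent with a singleton bundle swaps it for two pool goods worth merely more than $\tfrac{2}{3}$ of her old bundle (not necessarily more than the bundle itself), and in Steps~\ref{step8} and~\ref{step9} the enhanced graph $G_e$ contains red edges, along which a singleton-holder moves to a size-two bundle worth only at least $\tfrac{2}{3}$ of her old one. These singleton-to-size-two transitions are exactly the paper's case~(i) and are the real source of the $\tfrac{2}{3}$ factor in the claim.

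Conversely, your analysis of \textsc{Pseudo-Cycle-Resolution} rests on a misreading of Procedure~\ref{alg:pseudo-path-resolution}: $d_2$ does not receive $X_{d_1}$ but rather $X_{z_2}$, the singleton bundle of the second leading agent on $\pi$. Because $(d_1,z_2)\in G_e(X)$ with $|X_{d_1}|=2$ and $|X_{z_2}|=1$, this edge lies in $G_r$, which forces $v_d(X_{z_2})>\tfrac{3}{2}\,v_d(X_{d_1})$; hence $d_2$'s new value exceeds the current group minimum, and the minimum in $\D$ is \emph{unchanged} by this step --- this is the paper's case~(ii). (Your attempted inequality $v_\ell(X_{d_1})\ge \tfrac{2}{3}v_\ell(X_{d_2})$ from Property~\ref{prop:b} is also unjustified; $\tfrac{2}{3}$-EFX with a two-good $X_{d_2}$ only yields $v_\ell(X_{d_1})\ge \tfrac{1}{3}v_\ell(X_{d_2})$.) Finally, your appeal to Claim~\ref{claim:repeat} to rule out compounding is circular, since that claim is proved \emph{using} the present one. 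The paper's non-compounding argument is instead the invariant that every singleton bundle in the group stays worth at least $\mu$: any transition from a size-two bundle back to a singleton multiplies value by more than $\tfrac{3}{2}$, undoing a prior $\tfrac{2}{3}$ loss.
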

\begin{proof}
    An agent exchanges a higher valued bundle for a lower valued one only in two cases: (i) a singleton bundle is exchanged for a 2-size bundle, in which case, the value of the new bundle is at least $\frac{2}{3}$ of the old bundle i.e. $\frac{2}{3}\mu$ or (ii) a non-leading agent (called $d_2$ in Steps~\ref{step9.1}-\ref{step9.5}) exchanges a 2-size bundle $X_{d_2}$ and gets a singleton bundle $Y_{d_2}$. In case (i) above, every other bundle, and hence every singleton bundle, in the group has value at least $\mu$. Therefore it cannot become lesser than $\frac{2}{3}v$. In case (ii), we have $v_d(Y_{d_2})>\frac{3}{2}v_d(X_{d_1})$ i.e. more than $\frac{3}{2}$ of the value of the leading agent's bundle. Thus the minimum value in the group does not change.
\end{proof}
\begin{claim}\label{claim:repeat}
    Let $\sigma$ be a configuration before Step~\ref{step9.3} or \ref{step9.5} is executed. Then $\sigma$ is never visited again during the course of $\ppatypes$ algorithm.
\end{claim}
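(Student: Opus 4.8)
The plan is to argue by a potential/monotonicity argument that the execution of Step~\ref{step9.3} or Step~\ref{step9.5} produces a configuration that is strictly "better" than $\sigma$ in a well-founded ordering, so that $\sigma$ can never be revisited. First I would pin down precisely what changes when one of these steps fires. By Lemma~\ref{lem:leading-path} the pseudo-cycle is a path $\pi = (d_1, \ldots, a_1)$ through leading agents only, and the non-leading agent $d_2$ of group $\D$ receives the singleton bundle $X_{z_2}$ (where $z_2$ is the first leading agent after $d_1$ on $\pi$), while $a_1$ receives either $X_{d_2}$ (Step~\ref{step9.3}) or a freshly built $2$-size bundle $\{g,g'\}$ (Step~\ref{step9.5}); every leading agent strictly along $\pi$ receives a strictly more valuable bundle, and $a_1$ strictly improves as well. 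The only agent whose value can drop is $d_2$.

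Next I would set up the ordering. Consider the vector of \emph{minimum bundle values}, one coordinate per group, i.e. $(v_a(X_{a_1}), v_b(X_{b_1}), v_c(X_{c_1}), v_d(X_{d_1}))$ evaluated at the leading agent of each group. The key point is that a Step~\ref{step9.3}/\ref{step9.5} execution strictly increases the minimum value of group $\D$: the source $d_1$ is a leading agent with $|X_{d_1}|=2$ (Steps~\ref{step5},~\ref{step6} fail), the edge $(d_1, z_2)$ is in $G_e(X)$ with $|X_{z_2}|=1$, which forces $v_d(X_{z_2}) > \tfrac32 v_d(X_{d_1})$; hence $d_2$, and in fact the whole group $\D$ after reordering, now has minimum value strictly exceeding $\tfrac32 v_d(X_{d_1})$, which was the old group minimum. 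Meanwhile the minimum values of the other groups do not decrease, since every leading agent of $\A,\B,\C$ appearing on $\pi$ gets a strictly better bundle and those not on $\pi$ keep theirs. Thus the minimum-value vector increases in the coordinate of $\D$ and is non-decreasing elsewhere.

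From here I would close the argument. Between two successive executions of Step~\ref{step9.3}/\ref{step9.5}, only Steps~\ref{step1}--\ref{step9} run, and by the analysis of \cite{amanatidisetalPushingFrontier24a} these never decrease any agent's value, hence never decrease any group minimum; so the minimum-value vector is non-decreasing throughout the run, and strictly increases (in the $\D$-coordinate) at each Step~\ref{step9.3}/\ref{step9.5} step. The configuration $\sigma$ records, among other things, the leading-agent bundles of each group and therefore these minimum values; if $\sigma$ were visited again, the $\D$-coordinate would have to return to its value at $\sigma$, contradicting strict monotonicity. Hence $\sigma$ is never revisited, and since there are only $T$ configurations the step fires at most $T$ times, which is what Lemma~\ref{lem:termination} needs. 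The main obstacle I anticipate is making the monotonicity of the $\D$-coordinate fully airtight across the two subcases $|X_{d_2}| = 1$ and $|X_{d_2}| = 2$ of Condition~\ref{step9.2} and across the ordering-invariant reindexing inside Procedure~\ref{alg:pseudo-path-resolution}; one must check that the good $g$ newly handed to $a_1$ out of $X_{d_1}$ in Step~\ref{step9.5}, together with the pool good $g'$, does not somehow let a later step undo the gain in group $\D$ — but since later steps only increase values, this cannot happen, and the bound $v_d(X_{z_2}) > \tfrac32 v_d(X_{d_1})$ gives the needed slack. I would also invoke Claim~\ref{claim:reduction} if a finer bookkeeping of how low a bundle in a group can sink is needed, though the monotonicity of the leading-agent (minimum) value already suffices here.
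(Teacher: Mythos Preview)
Your central monotonicity claim is false: the minimum value of group $\D$ does \emph{not} strictly increase when Step~\ref{step9.3} or Step~\ref{step9.5} fires. Look again at Procedure~\ref{alg:pseudo-path-resolution}: the loop runs from $j=2$ to $\ell-1$, and the only other reassignments are to $z^\star=d_2$ and $z_\ell=a_1$. The source $z_1=d_1$ is never touched, so $Y_{d_1}=X_{d_1}$ and the leading agent of $\D$ retains exactly the bundle $X_{d_1}$ with value $v_d(X_{d_1})$. Your inequality $v_d(X_{z_2})>\tfrac32 v_d(X_{d_1})$ is correct and shows $d_2$'s new bundle is large, but the group minimum is still pinned at $v_d(X_{d_1})$; this is stated explicitly in the proof of Claim~\ref{claim:reduction}. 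Hence your potential (the vector of group minima) is merely non-decreasing at these steps, not strictly increasing, and the argument collapses.

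There is a second gap. You assert that between two firings of Step~\ref{step9.3}/\ref{step9.5}, Steps~\ref{step1}--\ref{step9} ``never decrease any agent's value''. This is not what \cite{amanatidisetalPushingFrontier24a} prove and is in fact false: Step~\ref{step3} replaces a singleton by a $2$-bundle worth only at least $\tfrac23$ of it, and traversing a red edge in Step~\ref{step8} likewise hands an agent a bundle worth only $\tfrac23$ of her current one. What \cite{amanatidisetalPushingFrontier24a} do show is that no agent receives the \emph{same bundle} twice during Steps~\ref{step1}--\ref{step9}, which is a different statement. The paper's own proof exploits exactly this distinction: it tracks the group $\A$ of the target agent $a_1$ (whose minimum genuinely jumps strictly above $\mu=v_a(X_{a_1})$ at the step), uses Claim~\ref{claim:reduction} to bound how far that minimum can subsequently sag (never below $\tfrac23\mu$), and then argues that the particular singleton $X_{a_1}=\{g\}$ can never be handed back to anyone in $\A$, since any singleton entering group $\A$ must exceed $\tfrac32$ times the current minimum, hence exceed $\mu=v_a(g)$. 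Your idea of a potential is right in spirit, but you picked the wrong group and the wrong invariant; the argument lives in $\A$, not $\D$, and requires Claim~\ref{claim:reduction} precisely because values \emph{can} drop.
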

\begin{proof}
    Let $\rho$ be the configuration just after Step~\ref{step9.3} or \ref{step9.5} is executed. Consider the agents $a_1$ and $d_2$ as mentioned in Steps~\ref{step9.2}, \ref{step9.3}. Let $X_{a_1}=\{g\}$ in $\sigma$. Then we claim that $a_1$ never gets the bundle $X_{a_1}$ again. 

    After Step~\ref{step9.3} or Step~\ref{step9.5}, in $\rho$, we have $v_a({Y_{a_1}})>v_a({X_{a_1}})$, and $|Y_{a_1}|=2$. From $\sigma$ to $\rho$, the minimum value in group $\A$ strictly increases. If the minimum value in group $\A$ in configuration $\sigma$ is $\mu=v_a(X_{a_1})=v_a(g)$, then by Claim~\ref{claim:reduction}, it never goes below $\frac{2}{3}\mu$. However, a singleton bundle $S$ can be received by anyone in group $\A$ only if it is more than the minimum value in the group by a factor of $\frac{3}{2}$,  i.e., $v_a(S)>\mu$. since this is not the case with $\{g\}$, group $\A$ can never receive $\{g\}$ again and hence $\sigma$ can never repeat.
\end{proof}

\section{\((1-\varepsilon)\)-EFX with \(\tilde{\mathcal{O}}(\sqrt{k/\varepsilon})\) Charity for $k$ types of agents}
This section shows that for an instance with at most \(k\) types of agents, there exists a \((1-\varepsilon)\)-EFX allocation with at most \(\tilde{\mathcal{O}}(\sqrt{k/\varepsilon})\) charity, for any \(\varepsilon\in (0,\frac{1}{2}]\).  For the reminder of this section, fix an arbitrarily chosen \(\varepsilon\in (0,\frac{1}{2}]\).

We first recall some of the definitions and Lemmas from \cite{chaudhuryImprovingEFXGuarantees.EC.2021}:
\begin{definition}[Heavy envy]
    An agent $a$ having valuation $v_a$ and holding a bundle $X_a$ is said to {\em heavily envy} a set of goods $S$ if $v_a(X_a)<(1-\varepsilon)v_a(S)$.
\end{definition}

We also use the notion of {\em valuable good} from \cite{chaudhuryImprovingEFXGuarantees.EC.2021}. A good $g$ is said to be {\em valuable} for an agent $a$ w.r.t. an allocation $X$, if $v_a(g)>\varepsilon v_a(X_a)$.
The notion of champion is defined differently in \cite{chaudhuryImprovingEFXGuarantees.EC.2021}. We call it \CGMchamp\ and recall the definition from \cite{chaudhuryImprovingEFXGuarantees.EC.2021}below:
\begin{definition}[\CGMchamp]
    Given an allocation $X$, we say that an agent $a$ is a \CGMchamp\ of a set of items $T$, if there is a set $S \subseteq T$ such that $v_a(X_a) < (1-\varepsilon)\cdot v_a(S)$ and no agent (including $a$) envies $S$ up to a factor of $(1 -\varepsilon)$, following the removal of a single good from $S$. Thus, for each agent $b$, we
have $(1-\varepsilon)\cdot v_b(S \setminus \{h\}) \leq v_b(X_b)$ for all $h \in S$. If $T=X_c\cup\{g\}$ for some agent $c$ and an unallocated good $g$, then we say that $a$ is a \CGMchamp\ for $c$ w.r.t. $g$. The set $S$ is called a {\em witness subset} for $a$ w.r.t. $T$. 
\end{definition}

We consolidate some observations from \cite{chaudhuryImprovingEFXGuarantees.EC.2021} in the following proposition and give a proof sketch for the sake of completeness

\begin{restatable}{proposition}{CGMchampvaluable}
\label{prop:CGM-champ-valuable}
    Let $X$ be a partial\(\EFXe\) allocation, $G(X)$ be its envy graph, and let \(\mathcal{P}(X)\) be the set of unallocated goods. Then the following hold:
    \begin{enumerate}
        \item\label{itm:CGM-U2} If there is an agent $i \in [n]$ that heavily envies $\mathcal{P}(X)$, then in polynomial time, we can determine a  \(\EFXe\)    allocation $Y \succ X$.
        \item\label{itm:CGM-valuable} If an unallocated good $g$ is not valuable to any agent, then $g$ can be allocated to a source $s$ in $G(X)$, to get another  \(\EFXe\)    allocation $Y$ such that $Y\succ X$. 
        \item\label{itm:CGM-val-goods-bound} If no agent heavily envies the set of unallocated goods, then the number of goods that any agent $a$ finds valuable is at most $\frac{2}{\varepsilon}$.
        \item\label{itm:CGM-U3} If there exists a set of sources $s_1,\ldots,s_t$  in $G(X)$, a set of unallocated goods $g_1, g_2, \ldots, g_t$, and a set of agents $a_1, a_2, \ldots, a_t$, such that each $a_i$ is reachable from $s_i$ in $G(X)$ and $a_i$ is the \CGMchamp\ of $X_{s_{i+1}} \cup \{g_{i+1}\}$ (indices taken modulo $t$), then a  \(\EFXe\)    allocation $Y \succ X$ can be computed in polynomial time.
    \end{enumerate}
\end{restatable}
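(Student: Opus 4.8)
The plan is to prove all four claims from one common reduction plus a short, targeted reallocation in each case. First I would remove envy cycles: if $G(X)$ has a directed cycle, rotating the bundles along it yields an allocation $Y\succ X$ that is still $\EFXe$ \cite{plautroughgardenAlmostEnvyFreeness20,liptonetalapproximatelyfair04}, so henceforth assume $G(X)$ is acyclic; then every agent is reachable from some source of $G(X)$. The recurring ingredient that keeps the $\EFXe$ bookkeeping cheap is the following: if we move bundles backwards along an envy path $s\rightsquigarrow a$ --- each agent on the path receiving the bundle of its successor, which it envies --- then every agent on the path ends up with a strictly preferred bundle, so it automatically stays $\EFXe$ towards every bundle that already occurred in $X$; hence one only has to verify $\EFXe$ against the handful of \emph{newly created} bundles.

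For item~\ref{itm:CGM-U2}, let $S\subseteq\mathcal{P}(X)$ be an inclusion-minimal set of unallocated goods that some agent $a$ heavily envies (it exists and is nonempty, since a heavily envied set has positive value). Minimality forces $v_b(X_b)\ge(1-\varepsilon)v_b(S\setminus h)$ for every agent $b$ and every $h\in S$, so $a$ is in fact a \CGMchamp\ of $S$ with witness $S$ itself. I would then take an envy path from a source $s$ to $a$, reassign $S$ to $a$, shift the bundles backwards along that path as above, and return $X_s$ to the pool. Every agent keeps at least its old value, so $\EFXe$ towards the old bundles survives; the only new bundle is $S$, and the minimality inequality is exactly the statement that no agent $(1-\varepsilon)$-envies $S\setminus h$. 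Since $a$ strictly gains, $Y\succ X$.

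For item~\ref{itm:CGM-valuable}, pick any source $s$ and set $Y_s=X_s\cup\{g\}$, leaving every other bundle unchanged. For $b\ne s$, no agent envies the source $s$, so $v_b(X_b)\ge v_b(X_s)=v_b(Y_s)-v_b(g)\ge v_b(Y_s)-\varepsilon v_b(X_b)$; rearranging gives $v_b(X_b)\ge\tfrac{1}{1+\varepsilon}v_b(Y_s)\ge(1-\varepsilon)v_b(Y_s\setminus h)$ for all $h\in Y_s$, using $(1-\varepsilon)(1+\varepsilon)\le 1$, so $b$ stays $\EFXe$ towards $s$ and is unaffected otherwise, while $s$ only gains. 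By non-degeneracy $v_s(g)>0$, hence $Y\succ X$. For item~\ref{itm:CGM-val-goods-bound}, if no agent heavily envies $\mathcal{P}(X)$ then $v_a(\mathcal{P}(X))\le\tfrac{1}{1-\varepsilon}v_a(X_a)$ for every $a$; since each unallocated good that $a$ finds valuable contributes more than $\varepsilon v_a(X_a)$, the number of them is less than $\tfrac{1}{\varepsilon(1-\varepsilon)}\le\tfrac{2}{\varepsilon}$ because $\varepsilon\le\tfrac12$ (and at most $1/\varepsilon\le 2/\varepsilon$ goods inside $X_a$ are valuable to $a$).

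Item~\ref{itm:CGM-U3} is the substantive one and the step I expect to be the main obstacle. Write $T_i=X_{s_i}\cup\{g_i\}$ and let $S_i\subseteq T_i$ be a witness subset certifying that $a_{i-1}$ is a \CGMchamp\ of $T_i$ (indices mod $t$). The reallocation I have in mind is cyclic: give $S_i$ to $a_{i-1}$, shift bundles backwards along each envy path $\pi_i$ from $s_i$ to $a_i$ (so $a_i$'s old bundle passes to its predecessor, \ldots, $s_i$ receives its successor's bundle and strictly improves), and dump the leftover $T_i\setminus S_i$ into the pool. Every \CGMchamp\ $a_{i-1}$ strictly gains by heavy envy and every path agent strictly gains, so $Y\succ X$; for $\EFXe$ one again uses that all involved values only increase, together with the defining property of $a_{i-1}$ being a \CGMchamp, namely $v_b(X_b)\ge(1-\varepsilon)v_b(S_i\setminus h)$ for every $b$ and $h$, which persists because values only grew. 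The delicate point is making this simultaneous reassignment well defined when the paths $\pi_i$ --- or the distinguished agents $a_i$ --- are not vertex-disjoint; the fix is to first pass to a sub-configuration of minimum total path length still satisfying all the hypotheses, in which any shared vertex would allow splicing a strictly shorter source--\CGMchamp\ cycle, so the $\pi_i$ are internally disjoint and avoid all $s_j$ and $a_j$. With disjointness in hand the bundle bookkeeping is routine and the $\EFXe$ checks reduce to the two-line inequalities used above.
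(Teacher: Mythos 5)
Your proposal is correct and follows essentially the same route as the paper's proof sketch: a witness-subset swap for part~\ref{itm:CGM-U2}, augmenting a source with the non-valuable good for part~\ref{itm:CGM-valuable}, the $\frac{1}{\varepsilon(1-\varepsilon)}\le\frac{2}{\varepsilon}$ count for part~\ref{itm:CGM-val-goods-bound}, and the cyclic resolution along envy paths plus \CGMchamp-edges with witness subsets for part~\ref{itm:CGM-U3}. The only differences are minor: in part~\ref{itm:CGM-U2} the paper simply exchanges $X_a$ with the witness subset $S$ (no source path is needed), and in part~\ref{itm:CGM-U3} your minimality/splicing fix for non-disjoint paths makes explicit a point the paper's sketch leaves implicit (one can equivalently extract a simple cycle from the closed walk).
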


\begin{proof}
    Part~\ref{itm:CGM-U2} holds because if an agent heavily envies $\mathcal{P}(X)$ then there is some agent $a$ who is a \CGMchamp\ of $\mathcal{P}(X)$, and $S$ be the corresponding witness subset. Exchanging $X_a$ with $S$ gives a Pareto dominant allocation.

    Part~\ref{itm:CGM-valuable} holds because no agent envies a source in $G(X)$. Thus, for any agent $a$, if $v_a(g)\leq \varepsilon v_a(X_a)$ then $v_a(X_s\cup\{g\}\leq (1+\varepsilon)v_a(X_a)$, i.e. $(1-\varepsilon)v_a(X_s\cup\{g\})\leq v_a(X_a)$. This is an allocation where $s$ is better off than that in $X$ and every other agent retains the same bundle as in $X$.

    Part~\ref{itm:CGM-val-goods-bound} follows because if an agent $a$ does not heavily envy the set of unallocated goods $\mathcal{P}(X)$, then $v_a(X_a)\geq (1-\varepsilon)v_a(\mathcal{P}(X))$ i.e. $v_a(\mathcal{P}(X))\leq \frac{1}{1-\varepsilon}v_a(X_a)\leq 2v_a(X_a)$ assuming\footnote{This assumption is valid since $\frac{1}{2}$-EFX exists even for complete allocations \cite{plautroughgardenAlmostEnvyFreeness20}.} $\varepsilon\leq \frac{1}{2}$.

    Part~\ref{itm:CGM-U3} follows from the facts that the paths in $G(X)$ from each $s_i$ to $a_i$, together with the \CGMchamp\ edges $(a_i,s_{i+1})$ (taken modulo $t$), form a cycle $C$ such that each agent $b$ gets the bundle of its successor $c$ on $C$ if the edge $(b,c)$ is an envy-edge, and gets the witness subset of $X_c\cup \{g_j\}$ if $(b,c)$ is a \CGMchamp-edge, with $b=a_j$ and $c=s_{j+1}$ for some $j$. Since the witness subsets are not strongly envied by anyone, the statement follows.
\end{proof}
\subsection{Bounded charity in terms of types of agents}
we now show an analogous result where charity is bounded in terms of the number of types of agents instead of the number of agents. This is proves as the following theorem:
\begin{theorem}\label{thm:rainbow-types}
    When there are at most $k$ types of agents with additive valuations, there exists a  \(\EFXe\)    allocation with $\tilde{O}((\frac{k}{\varepsilon})^{1/2})$ charity.
\end{theorem}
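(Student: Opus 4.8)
The plan is to port the rainbow-cycle machinery of \citet{chaudhuryImprovingEFXGuarantees.EC.2021} to the $k$-type setting, so that the charity bound depends on the number of types $k$ rather than on the number of agents $n$. By the non-degeneracy reduction (which, as noted above, also applies to $\EFXe$) we may assume the instance is non-degenerate. We maintain a partial $\EFXe$ allocation $X$ with $G(X)$ acyclic, re-running \textsc{Envy-Cycle-Elimination} (which preserves $\EFXe$) after any modification. Whenever one of the improving operations of \cref{prop:CGM-champ-valuable} is applicable --- Part~\ref{itm:CGM-U2} (some agent heavily envies $\mathcal{P}(X)$), Part~\ref{itm:CGM-valuable} (some pool good is valuable to nobody), or Part~\ref{itm:CGM-U3} (an improving \CGMchamp\ cycle exists) --- we apply it; each yields a Pareto-dominating allocation, and since any Pareto-improving chain is finite, the process halts at a \emph{terminal} allocation $X^\star$ to which none of the three operations applies.

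At $X^\star$ we read off the following. Since nobody heavily envies $\mathcal{P}(X^\star)$, Part~\ref{itm:CGM-val-goods-bound} of \cref{prop:CGM-champ-valuable} gives that every agent finds at most $2/\varepsilon$ unallocated goods valuable; since Part~\ref{itm:CGM-valuable} is inapplicable, every unallocated good is valuable to at least one agent; and by \cref{prop:k-types-envy-properties}, $G(X^\star)$ has at most $k$ sources, each the leading agent of its group. The key fact special to $k$-type instances is a one-line monotonicity: if a good $g$ is valuable to $a_j^t$, i.e.\ $v_t(g) > \varepsilon\, v_t(X_j^t)$, then since the ordering invariant gives $v_t(X_1^t)\le v_t(X_j^t)$ we also have $v_t(g) > \varepsilon\, v_t(X_1^t)$, so $g$ is valuable to the leading agent $a_1^t$. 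Hence every unallocated good is valuable to some \emph{leading} agent, of which there are at most $k$, and each of these has at most $2/\varepsilon$ valuable goods in the pool; this already yields the crude bound $|\mathcal{P}(X^\star)| \le 2k/\varepsilon$.

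For the sharper $\tilde{\mathcal{O}}(\sqrt{k/\varepsilon})$ bound we rerun the counting argument behind Theorem~4 of \cite{chaudhuryImprovingEFXGuarantees.EC.2021} with $n$ replaced by $k$. Set $h(d)=d\cdot R(d)$ and let $d^\star$ be the least integer with $h(d^\star)\ge 2k/\varepsilon$. The goal is $|\mathcal{P}(X^\star)| = O\!\big(k/(\varepsilon d^\star)\big)$; assuming it fails, one extracts from $X^\star$ a multipartite graph in which each part is a batch of at most $d^\star$ unallocated goods all valuable to a common leading agent, every vertex has an incoming edge from every other part (the edge into a good $g$ recording that some agent reachable in $G(X^\star)$ from the source owning $g$'s part is a \CGMchamp\ of that source's bundle together with $g$), and --- because each of the $\le k$ leading agents accounts for at most $2/\varepsilon$ valuable goods --- a pool exceeding the target bound forces more than $R(d^\star)$ parts. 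By \cref{def:rainbow} such a graph must contain a rainbow cycle, which corresponds to an improving configuration of Part~\ref{itm:CGM-U3} of \cref{prop:CGM-champ-valuable}, contradicting the terminality of $X^\star$. Finally, substituting the bound $R(d) = O(d\log d)$ (Theorem~3 of \cite{akramietalEFXSimpler25}) gives $h(d) = O(d^2\log d)$, hence $d^\star = \tilde\Omega(\sqrt{k/\varepsilon})$, and therefore $|\mathcal{P}(X^\star)| = O(k/(\varepsilon d^\star)) = \tilde{\mathcal{O}}(\sqrt{k/\varepsilon})$.

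I expect the main obstacle to be the extraction step: one must carry out \citet{chaudhuryImprovingEFXGuarantees.EC.2021}'s iterative construction of the multipartite \CGMchamp\ graph while ensuring that its number of parts is governed by $k$ and not $n$. The clean half of the type-reduction is that the \emph{valuable-good} property transfers down to leading agents (shown above), so the pool is covered by the valuable sets of only $k$ agents. The delicate half is that \CGMchamp\ status does \emph{not} obviously transfer from a non-leading agent to its leading agent --- a leading agent holds a smaller bundle and may itself strongly envy a candidate witness subset --- so the source-to-champion paths used to realize the rainbow cycle must be arranged so that the relevant parts can still be indexed by leading agents. One also has to verify the routine points that re-running \textsc{Envy-Cycle-Elimination} after each improving move can be done while restoring the ordering invariant within every group, and that \cref{prop:k-types-envy-properties} continues to apply to every intermediate allocation.
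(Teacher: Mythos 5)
You follow the paper's route at the top level: pass to a terminal partial \(\EFXe\) allocation at which none of the improving moves of \cref{prop:CGM-champ-valuable} applies, note that by the ordering invariant a good valuable to any agent is valuable to the leading agent of its group, so the pool is covered by the valuable sets of the at most \(k\) leading agents (each of size at most \(2/\varepsilon\)), and then run the rainbow-cycle machinery of \citet{chaudhuryImprovingEFXGuarantees.EC.2021} with \(k\) in place of \(n\); the crude \(2k/\varepsilon\) bound and the final calculation with \(R(d)=O(d\log d)\) are exactly the paper's. The step you defer, however, is the actual content, and your sketch of it does not work. You build the multipartite graph with pool goods as vertices and parts equal to batches of at most \(d^\star\) goods valuable to a common leading agent. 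With that choice the number of parts is at most roughly \(k\lceil 2/(\varepsilon d^\star)\rceil\), and since \(d^\star R(d^\star)=h(d^\star)\ge 2k/\varepsilon\) gives \(2k/(\varepsilon d^\star)\le R(d^\star)\), a pool exceeding \(O(k/(\varepsilon d^\star))\) does \emph{not} force more than \(R(d^\star)\) parts, so no rainbow cycle is forced and no contradiction is reached. Moreover, even when a rainbow cycle exists in your graph it need not yield the configuration of Part~\ref{itm:CGM-U3}: two of your parts can belong to the same leading agent, so a cycle visiting each part at most once may use the same source twice, and your edge rule does not guarantee the incoming-edge property of \cref{def:rainbow} (an edge into \(g\) needs a tail vertex in the \emph{other} part, i.e.\ a specific good valuable to a specific leading agent, and nothing in your construction supplies it).

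The paper's construction is the transpose, and it needs one ingredient you omit. First split the pool: goods valuable to more than \(d\) leading agents (``high-demand'' goods) number at most \(2k/(\varepsilon d)\) by double counting and are set aside. For the remaining goods \(g_1,\ldots,g_t\), part \(V_i\) consists of copies \((g_i,s(a))\) of the sources assigned to the leading agents \(a\) that find \(g_i\) valuable, so \(|V_i|\le d\) automatically, and an edge into \((g_j,s(b))\) from part \(V_i\) is supplied by a \emph{leading-agent} \CGMchamp{} \(a\) of \(X_{s(b)}\cup\{g_i\}\), whose vertex \((g_i,s(a))\) lies in \(V_i\) since such a champion finds \(g_i\) valuable. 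Here the obstacle you anticipate, that \CGMchamp{} status might not transfer to leading agents, is not actually there: the witness condition in the definition is a property of the set \(S\) alone, and the heavy-envy requirement \(v_a(X_a)<(1-\varepsilon)v_a(S)\) only becomes easier for the leading agent of the champion's group, so champions can always be taken to be leading agents; this is precisely what makes the parts-indexed-by-goods construction, and hence the bound \(R(d)+2k/(\varepsilon d)=\tilde{O}(\sqrt{k/\varepsilon})\), go through. So your skeleton matches the paper, but the decisive counting/extraction step is missing and, as sketched, incorrect.
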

\begin{proof}
    
    Let $\mathcal{N}^1,\ldots,\mathcal{N}^k$ be the groups of agents where, each agent in $\mathcal{N}^j$, $j\in [k]$ has an additive valuation function $v_j$. Let $X$ be a partial  \(\EFXe\) allocation, where each agent gets at least one good, and $\mathcal{P}(X)$ be the set of unallocated goods. By the non-degeneracy assumption, and Proposition~\ref{prop:k-types-envy-properties}, only the leading agents (those with min. valued bundle from each group) can be the sources in $G(X)$. By non-degeneracy assumption, exactly one agent from each group is a leading agent. Let $u_1,\ldots,u_k$ be the leading agents from groups $\mathcal{N}^1,\ldots,\mathcal{N}^k$ respectively.
    
    Recall that, if a good is valuable to any agent from group $\mathcal{N}^i$, then the good is also valuable for the leading agent $u_i$. But, from Parts~\ref{itm:CGM-U2} and \ref{itm:CGM-val-goods-bound} of Proposition~\ref{prop:CGM-champ-valuable}, either a Pareto dominant  \(\EFXe\)    allocation can be obtained or any agent has at most ${2}/{\varepsilon}$ valuable goods in $\mathcal{P}(X)$. Define a good to be a {\em high-demand good} if it is valuable to more than $d$ leading agents, and {\em low-demand good} otherwise, for some $d\leq k$ to be chosen later. Then, from Part~\ref{itm:CGM-val-goods-bound} of Proposition~\ref{prop:CGM-champ-valuable}, the number of high-demand goods is at most $\frac{2k}{\varepsilon\cdot d}$.

    We bound the number of low-demand good using $R(d)$. Define the {\em group champion graph} analogous to that in \cite{chaudhuryImprovingEFXGuarantees.EC.2021}, but only on the leading agents. Thus, to each leading agent $a \in \{u_1,\ldots, u_k\}$, assign a source $s(a)$ such that $a$ is reachable from $s(a)$ in $G(X)$. If $a$ is reachable from multiple sources, assign one arbitrarily. Note that $s(a)$ must also be a leading agent. Let $\{g_1,\ldots,g_t\}$ be the set of low-demand goods. Define a $t$-partite graph $G=(\bigcup_{i=1}^t V_t, E)$ where $V_i$ contains a copy of the source $s(a)$ assigned to each leading agent $a$ that finds $g_i$ valuable. More formally, $V_i=\{(g_i,s(a))\mid a\textrm{ finds }g_i\textrm{ valuable}\}$. Thus, if the same source $s$ is assigned to two leading agents $a,b$ and both $a,b$ find $g_i$ valuable, then $V_i$ contains two copies of $s$ viz. $s(a)$ and $s(b)$. There is an edge $((g_i,s(a)),(g_j,s(b)))$ in $E$ if and only if $a$ is a \CGMchamp\ of $X_{s(b)}\cup \{g_i\}$.

    We know $|V_i|\leq d$ for each  $i\in [t]$, and each source $s$ and each good $g_i$, there is a leading agent $a$ that is a \CGMchamp\ for $s$ w.r.t. $g_i$. The source $s(a)$ assigned $a$ must be in $V_i$ since $a$ finds $g_i$ valuable by Part~\ref{itm:CGM-valuable} of Proposition~\ref{prop:CGM-champ-valuable}. This shows that each vertex $s\in V_j$ has an incoming edge from some vertex in $V_i$ for each $j,i\in [t], j\neq i$. Thus either $t\leq R(d)$ or there is a cycle $C$ in $G$ that visits each $V_i$ at most once, by Definition~\ref{def:rainbow}. By Part~\ref{itm:CGM-U3} of Proposition~\ref{prop:CGM-champ-valuable}, such a cycle gives a Pareto dominant  \(\EFXe\)    allocation. When $t\leq R(d)$, the number of unallocated goods is at most $R(d)+\frac{2k}{\varepsilon\cdot d}=O(\max\{\frac{2k}{\varepsilon\cdot d},R(d)\})$. Since $R(d)=O(d\log d)$, the theorem follows.   
\end{proof}
\section{Conclusion}\label{sec:conclusion}
We extend the existence of partial and approximate EFX to new settings, thereby adding to the state of the art knowledge on EFX. It will be interested to extend the four types $\frac{2}{3}$ EFX to 7 types of agents like that in \cite{amanatidis2021maximum}. 

\newpage
\bibliographystyle{plainnat}
\bibliography{references}

\clearpage
\appendix

{\noindent\textbf{\huge Appendix}}
\bigskip

\setcounter{secnumdepth}{2}

\section{Missing Proofs from Preliminaries}\label{sec:missing-proofs-preliminaries}

\subsection{The Non-Degeneracy Assumption}\label{sec:non-degeneracy-assumption}
In \cite{chaudhuryetalEFXExistsJ.ACM24}, it is shown that if $\EFX{}$ exists for all non-degenerate instances (Definition~\ref{def:non-degeneracy}) with \(n\) agents, then it exists for all instances of \(n\) agents. We extend this result here for approximate \(\EFX{}\), for the sake of completeness.

\begin{lemma}\label{lem:non-degeneracy}
    If for any $\alpha\in(0,1]$, if $\EFXa$ exists for all non-degenerate instances then $\EFXa$ exists for all instances.
\end{lemma}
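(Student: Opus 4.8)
The plan is to adapt the reduction of \cite{chaudhuryetalEFXExistsJ.ACM24} to the approximate setting by a standard perturbation argument. Given an arbitrary (possibly degenerate) instance $\mathcal{I}=\langle \N,\M,\V\rangle$, I would first construct a family of perturbed instances $\mathcal{I}_\delta$ by replacing each valuation $v_i$ with a slightly perturbed additive valuation $v_i^\delta$ obtained by adding generic, arbitrarily small perturbations to the single-good values $v_i(g)$ (for example, $v_i^\delta(g) = v_i(g) + \delta\cdot\epsilon_{i,g}$ for algebraically independent reals $\epsilon_{i,g}$, or more concretely distinct powers of a transcendental). The key elementary fact is that for all sufficiently small $\delta$ the instance $\mathcal{I}_\delta$ is non-degenerate: the set of $\delta$ for which some agent values two distinct bundles equally is the zero set of finitely many nonzero affine functions of $\delta$, hence finite, so we may pick $\delta$ avoiding all of them; and we can simultaneously keep $\delta$ small enough that the perturbation of any bundle's value is at most, say, an additive $\eta$ for any prescribed $\eta>0$.

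Next, by hypothesis each $\mathcal{I}_\delta$ admits an $\EFXa$ allocation $X^\delta$. Since there are only finitely many allocations of $\M$ among $\N$, some single allocation $X$ occurs as $X^\delta$ for a sequence $\delta\to 0$. I would then argue that this $X$ is $\EFXa$ for the original instance $\mathcal{I}$: for any agents $a,b$ and any $h\in X_b$, the $\EFXa$ condition $v_a^\delta(X_a)\ge \alpha\, v_a^\delta(X_b\setminus h)$ holds for arbitrarily small $\delta$, and both sides converge to $v_a(X_a)$ and $\alpha\, v_a(X_b\setminus h)$ respectively as $\delta\to 0$ (the perturbations vanish), so the non-strict inequality $v_a(X_a)\ge \alpha\, v_a(X_b\setminus h)$ passes to the limit. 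Since $\EFXa$ is defined by non-strict inequalities, this is exactly what we need, and hence $X$ is an $\EFXa$ allocation of $\mathcal{I}$.

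The main thing to be careful about — the only real obstacle — is the direction of the perturbation and the closedness of the $\EFXa$ condition. Because $\EFXa$ is a closed (non-strict) condition, limits of $\EFXa$ allocations are $\EFXa$, so perturbing in \emph{any} direction and taking $\delta\to 0$ works; there is no need to perturb valuations downward as one sometimes does for strict fairness notions. One should also note that the perturbed valuations remain additive and non-negative for small $\delta$ (shrinking $\delta$ further if necessary, or noting that adding a tiny positive quantity keeps values in $\R$), so $\mathcal{I}_\delta$ is a legitimate instance of the same type. Everything else is routine: finiteness of the allocation space gives the convergent subsequence, and finiteness of the "bad" $\delta$ set gives non-degeneracy. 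This completes the reduction, and the same argument applies verbatim when $\alpha=1$, recovering the original statement of \cite{chaudhuryetalEFXExistsJ.ACM24}.
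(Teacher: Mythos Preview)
Your argument is correct, but it takes a somewhat different route from the paper. Both proofs perturb the instance to a non-degenerate one and invoke the hypothesis; the difference is in how the $\EFXa$ guarantee is pulled back to the original instance. The paper fixes a \emph{single} explicit perturbation $v'_i(S)=v_i(S)+\varepsilon\sum_{g_j\in S}2^j$ with $\varepsilon$ chosen small relative to the minimum nonzero gap $|\alpha v_i(S)-v_i(T)|$, and then shows by a direct inequality computation that any $\EFXa$ allocation for $v'$ is already $\EFXa$ for $v$. Your approach instead lets the perturbation parameter tend to zero, uses pigeonhole on the finite set of allocations to extract a single allocation that is $\EFXa$ along a sequence $\delta\to 0$, and passes to the limit using that the $\EFXa$ condition is closed (non-strict). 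The paper's version is more constructive (one perturbation, explicit bound), while yours is arguably cleaner and more robust: it requires no gap computation and makes transparent that the only property of $\EFXa$ being used is that it is defined by weak inequalities, so the same argument would work verbatim for any closed fairness notion.
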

\begin{proof}
    Let 
    \[
        \delta \coloneq \min_{i\in\mathcal{N}}\min\limits_{\substack{S,T\subseteq \mathcal{M}\\ v_i(S)\neq v_i(T)}}|\alpha\cdot v_i(S)-v_i(T)|
    \]
    
    And, Let \[0<\varepsilon<\frac{\delta}{2^{m+1}}\]
    
    Replace each valuation function $v_i$ for each  $i\in \mathcal{N}$ by a perturbed valuation function $v'_i$ defined as follows. For each ${S\subseteq \mathcal{M}}$, 
    
    \begin{align}
        v'_i(S) &\coloneq v_i(S)+\varepsilon\sum_{g_j\in S}2^j \notag
    \end{align}

    Non-degeneracy of \(v'\) follows from \cite{chaudhuryetalEFXExistsJ.ACM24}. We prove that if $X$ is an $\EFXa$ allocation according to the perturbed valuation functions $\{v'_i\}_{i\in\mathcal{N}}$ then $X$ is also an  $\EFXa$  allocation according to the given valuation functions $\{v_i\}_{i\in\mathcal{N}}$.

For the sake of contradiction, let $X$ be an  $\EFXa$  allocation according to the perturbed valuation functions but not according to the given valuation functions. Then, there exist bundles $X_i$ and $X_j$ and a good $g_k$ such that
\[\alpha\cdot v'_i(X_j\setminus g_k)\leq v'_i(X_i)\]
but
\[\alpha\cdot v_i(X_j\setminus g_k)\geq v_i(X_i) \]
    However, we have 
      \begin{align}
        \alpha\cdot v'_i(X_j\setminus g_k)-v'_i(X_i) & = \alpha\cdot v_i(X_j\setminus g_k)-v_i(X_i)\notag\\
        &+  \varepsilon\cdot\big(\alpha\cdot\sum_{g_\ell\in X_j\setminus g_k}2^\ell-\sum_{g_\ell\in X_i}2^\ell\big)\notag\\
        & \geq  \delta-\varepsilon\cdot\sum_{g_\ell\in X_i}2^\ell\notag\\
        & \geq  \delta - \varepsilon\cdot (2^{m+1}-1)\notag\\
        & >  0\notag
    \end{align}
Thus $\alpha\cdot v_i(X_j\setminus g_k)> v_i(X_i)\Rightarrow \alpha\cdot v'_i(X_j\setminus g_k)> v'_i(X_i)$, contradicting the assumption that $X$ is  $\EFXa$  with respect to the perturbed valuations.
\end{proof}
\section{More Details on the \(\ppa\) Algorithm}\label{sec:ppa-algorithm}

\citet{amanatidisetalPushingFrontier24a} present the \(\ppa\) algorithm for computing an \(\EFXT\) allocation for up to seven agents. This algorithm consists of a sequence of nice steps, each of which is executed only if the previous step fails. The algorithm terminates when the allocation failes to meet the conditions of all of the steps. The algorithm is described in Algorithm~\ref{alg:3PA_plus}. In Steps~\ref{step5} and \ref{step8}, the algorithm invokes the \textsc{All Cycles Resolution} subroutine, which is described in Algorithm~\ref{alg:all-cycles-resolution}. When an allocation \(X\) and a one of the envy graphs \(G\), \(G_\mathrm{r}\), or \(G_\mathrm{e}\) is given, the subroutine returns an updated allocation \(X'\) such that the envy graph no longer contains any cycles. In Step~\ref{step9}, the algorithm invokes the \textsc{Path Resolution$^*$} subroutine, which in turn calls the \textsc{Path Resolution} subroutine, both of which are described in Algorithms~\ref{alg:path-resolution-and-critical} and \ref{alg:path-resolution}. The subroutine takes a path \(\pi\) in the envy graph and returns an updated allocation \(X'\) such that the envy graph no longer contains the path \(\pi\). In Step~\ref{step7} of the algorithm, the \textsc{Singleton Pool} subroutine is invoked, which is described in Algorithm~\ref{alg:singleton-pool}. The subroutine takes a partial allocation \(X\) where only one good is unallocated and returns an updated allocation \(X'\).

\begin{algorithm}
\DontPrintSemicolon
\caption{$\textsc{PathResolution}(X,\tilde{G},\pi)$} \label{alg:path-resolution}
\SetKwComment{Comment}{/* }{ */}
\KwData{A partial allocation $X$, its graph $\tilde{G}(X)$, and a path $\pi = (i_1, i_2, \ldots, i_\ell)$ in $\tilde{G}(X)$}
\KwResult{An updated set of bundles $X_i$, one for every agent $i \in \{i_1, i_2, \ldots, i_{\ell-1}\}$. \vspace{3pt}} 
\For(\tcp*[f]{go through every $i$ such that $(i,j) \in \pi$ following the path}){$k \gets 1$\  to\  $\ell-1$} 
{
$X_{i_k} \gets X_{i_{k+1}}$ \tcp*[r]{assign to $i_k$ the bundle of the agent $i_{k+1}$ that she envies}
}
\Return $(X_i)_{i \in N: \exists (i, j) \in \Pi}$
\end{algorithm}

\begin{algorithm}
    \DontPrintSemicolon
    \caption{$\textsc{SingletonPool}(X)$} \label{alg:singleton-pool}
    \SetKwComment{Comment}{/* }{ */}
    \KwData{A partial allocation $X$ (of size at most $2$ with a \textit{single} unallocated good $g$, which some agent $i$---who has $|X_i|=1$ and is not a source in $G_r(X)$---values more than $2/3\cdot v_i(X_i)$). }
    \KwResult{A partial allocation $X'$, such that $g \notin \mathcal{P}(X')$ and some other good $g'$ is returned to $\mathcal{P}(X')$.\vspace{3pt}}
    Let $g$ be the only good in $\mathcal{P}(X)$ and let $i \in N$ be such that $|X_i|=1$ and $v_i(g) > \frac{2}{3}v_i(X_i)$\;
    \tcp{when \textsc{SingletonPool} is called, such $g$ and $i$ do exist and $i$ is not a source in $G_r(X)$}
    Let $\pi$ be a path in $G_r(X)$ starting from some source $s$ of $G_r(X)$ and terminating at $i$\;
    \tcp{when \textsc{SingletonPool} is called, no sources in $G_r(X)$ own only one good}
    $X \gets \textsc{PathResolution$^*$}(X, G_r, \pi)$\;
    \Return $X$
\end{algorithm}

\setcounter{algocf}{0}

\begin{algorithm}
    \DontPrintSemicolon
    \caption{${\textsc{PathResolution$^*$}}(X,\tilde{G},\pi)$} \label{alg:path-resolution-and-critical}
    \SetKwComment{Comment}{/* }{ */}
    \KwData{A partial allocation $X$, its graph $\tilde{G}(X)$, and a path $\pi = (s, \ldots, i)$ in\, $\tilde{G}(X)$ starting at a source $s$ of $\tilde{G}(X)$ with $|X_s|=2$}
    \KwResult{An updated partial allocation $X$.\vspace{3pt}}
    Let $g_s \in \argmax_{g\in X_s} v_i(g)$ \tcp*{recall that $s$ is the first and $i$ is the last vertex of $\pi$}
    Let $g_* \in \argmax_{g\in \mathcal{P}(X)} v_i(g)$\;
    $(X_i)_{i \in N: \exists (i, \ell) \in \pi}=\textsc{PathResolution}(X,\tilde{G},\pi)$\;
    $X_i \gets \{g_*,g_s\}$ \tcp*{agent $i$ gets her favorite goods from $X_s$ and $\mathcal{P}(X)$}
    $\mathcal{P}(X) \gets (\mathcal{P}(X) \cup X_s)\setminus \{g_*,g_s\}$\;
    \Return $X$
\end{algorithm}

\section{Error in \cite{amanatidisetalPushingFrontier24a} (Lemma 2.11)}\label{sec:error-in-3pa}

In \cite{amanatidisetalPushingFrontier24a}, the authors define Algorithms~\ref{alg:cycle-resolution} and \ref{alg:all-cycles-resolution}. The input graph \(\tilde{G}\) in both the subroutines can be any one of the three kinds of envy graphs, namely the envy graph \(G\), the reduced envy graph \(G_\mathrm{r}\), or the enhanced envy graph \(G_\mathrm{e}\). It is claimed that the \textsc{All Cycles Resolution} subroutine terminates in polynomial time. 

\begin{lemma}[{\cite[Lemma~2.11]{amanatidisetalPushingFrontier24a}}]\label{obs:all-cycle-resolution-running-time}
The \textsc{All Cycles Resolution} subroutine terminates in pseudo-polynomial time.
\end{lemma}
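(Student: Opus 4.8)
The plan is to show that the ECE-style procedure \textsc{All Cycles Resolution} makes monotone progress on a potential that takes finitely many values, with each step bounded by polynomially many arithmetic operations. The natural potential is the vector of bundle values $(v_1(X_1), \ldots, v_n(X_n))$, or rather its sum $\Phi(X) = \sum_{i \in \N} v_i(X_i)$. Recall from the discussion of the ECE procedure (due to \citet{liptonetalapproximatelyfair04}) that resolving an envy cycle by cyclically shifting bundles along the cycle strictly increases the utility of every agent on the cycle and leaves all other agents' utilities unchanged. Hence each cycle resolution strictly increases $\Phi$.

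\textbf{Key steps.} First I would argue that the same monotonicity holds when $\tilde G$ is the reduced graph $G_\mathrm{r}$ or the enhanced graph $G_\mathrm{e}$, not just the plain envy graph $G$: a cycle in either of these graphs still consists of edges $(a,b)$ along which the receiving agent strictly prefers $X_b$ to $X_a$ — for genuine envy edges this is immediate, and for the extra ``red'' edges of $G_\mathrm{e}$, by construction $v_a(X_s) \ge \tfrac{2}{3} v_a(X_a)$ is \emph{not} enough for a strict increase, so one must check that the cycle resolution in \citet{amanatidisetalPushingFrontier24a} only shifts along edges that \emph{do} give a strict improvement, or else argue the red edge is traversed in a way that still strictly helps the agent who receives the two-good bundle. (This is exactly the kind of subtlety the excerpt flags as an ``error'' elsewhere, so care is needed.) Granting strict monotonicity, $\Phi$ strictly increases with each of the (at most polynomially many) cycle resolutions performed inside one call to \textsc{All Cycles Resolution}. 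Second, I would bound the number of distinct values $\Phi$ can take: since all valuations are integer-valued (or can be scaled to integers) and bounded by $V_{\max} := \max_i v_i(\M)$, we have $0 \le \Phi(X) \le n \cdot V_{\max}$, so $\Phi$ can increase at most $n \cdot V_{\max}$ times. Third, each iteration — finding a cycle in $\tilde G$ (e.g.\ by DFS), performing the cyclic shift, and rebuilding the envy graph — costs $\mathrm{poly}(n,m)$ time. Multiplying, the total running time is $O(n \cdot V_{\max} \cdot \mathrm{poly}(n,m))$, which is pseudo-polynomial (polynomial in the numeric value $V_{\max}$, hence in the magnitude but not the bit-length of the input).

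\textbf{Main obstacle.} The delicate point is the strict-increase claim for the \emph{enhanced} envy graph. A red edge $(a,s)$ is added precisely when $a$ has a singleton bundle, $|X_s| = 2$, and $v_a(X_s) \ge \tfrac23 v_a(X_a)$ — which permits $v_a(X_s) < v_a(X_a)$, i.e.\ $a$ does not actually envy $s$. If a cycle in $G_\mathrm{e}$ routes through such an edge, the naive cyclic shift could \emph{decrease} $\Phi$, breaking the argument. The resolution must be that \textsc{Cycle Resolution} / \textsc{All Cycles Resolution} is designed so that whenever it shifts a bundle onto an agent via a red edge, that agent's value still strictly goes up (because, e.g., $s$ in turn receives $X_a$ and within the cycle every hand-off is along a genuine improvement, with the red edge's ``deficit'' compensated) — or, alternatively, that the correct potential is not $\sum_i v_i(X_i)$ but a lexicographically refined one (e.g., first the multiset of bundle values sorted descending, compared lexicographically). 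I would therefore examine the exact subroutine definition (Algorithms~\ref{alg:cycle-resolution}, \ref{alg:all-cycles-resolution}) to pin down which potential is genuinely monotone, and state the argument in terms of that potential; everything downstream (finite range, polynomial per-step cost, pseudo-polynomial total) then goes through as above. If no single scalar potential works, the fallback is the lexicographic potential on the sorted value vector, whose range is still bounded by $(V_{\max}+1)^n$ — exponential in $n$ but still finite and independent of running the loop forever, and in fact the per-component integrality again caps the number of improvements at $n \cdot V_{\max}$, preserving the pseudo-polynomial bound.
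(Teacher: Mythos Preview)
You have correctly identified the crux of the difficulty --- the red edges of $G_\mathrm{e}$ can cause an agent's value to strictly drop --- but you have not resolved it. Neither of your candidate potentials works. The sum $\Phi(X)=\sum_i v_i(X_i)$ can strictly decrease after resolving a cycle in $G_\mathrm{e}$: along a red edge $(a,s)$ agent $a$'s value may fall by almost $\tfrac13 v_a(X_a)$, and there is no reason the gains elsewhere on the cycle compensate (the gains and losses are measured by different agents' valuations and are incommensurable). Your lexicographic fallback has the same problem: some coordinate can strictly decrease, and nothing forces a lexicographically-earlier coordinate to increase. Saying you ``would examine the exact subroutine definition to pin down which potential is genuinely monotone'' is not a proof; it is precisely the missing idea.

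The paper's fix is to use a \emph{weighted product} potential that couples value with bundle size:
\[
\phi(X)=\prod_{a\in\N} w(|X_a|)\cdot v_a(X_a),\qquad w(1)=1,\ w(x)=\tfrac32\ \text{for }x\ge 2.
\]
This is tailored to the edge types in $G_\mathrm{r}$ and $G_\mathrm{e}$. On a red edge $(a,s)$ one has $|X_a|=1$, $|X_s|\ge 2$, $v_a(X_s)\ge\tfrac23 v_a(X_a)$, so $a$'s factor goes from $v_a(X_a)$ to $\tfrac32 v_a(X_s)\ge v_a(X_a)$: non-decreasing. On an envy edge with $|X_a|>1$, $|X_b|=1$ that survives in $G_\mathrm{r}$ one has $v_a(X_b)>\tfrac32 v_a(X_a)$, so $a$'s factor goes from $\tfrac32 v_a(X_a)$ to $v_a(X_b)$: strictly increasing. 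All other envy edges give a strict increase because $w(|X_b|)\ge w(|X_a|)$ and $v_a(X_b)>v_a(X_a)$. Since red edges go from size-$1$ to size-$\ge 2$ bundles, any cycle must contain at least one non-red edge, so $\phi$ strictly increases at every cycle resolution. Bounding $\phi$ above by $\bigl(\tfrac32\max_a v_a(\M)\bigr)^n$ then yields the pseudo-polynomial termination. The weight $w$ is exactly what your proposal is missing.
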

\begin{proof}{\cite[Proof of Lemma~2.11]{amanatidisetalPushingFrontier24a}}
Note that every time a cycle is resolved within the body of the while loop of the subroutine, the number of edges in the graph \(\tilde{G}(X)\) strictly decreases, compared to its previous number, before \(X\) was updated. This means that the while loop will terminate in at most \(m\) iterations. Each iteration, however, only needs \(O(n)\) steps, if \(\tilde{G}(X)\) is given, or \(O(n^2 + m)\) steps, if \(\tilde{G}(X)\) has to be constructed from scratch.  
\end{proof}

However, the above proof is incorrect. The proof assumes that the number of edges in the graph \(\tilde{G}(X)\) strictly decreases in each iteration of the while loop. This is not true when \(\tilde{G}(X)\) is an enhanced graph \(G_\mathrm{e}(X)\). The number of edges in the graph \(\tilde{G}(X)\) can remain the same or even increase in each iteration of the while loop.

Intuitively, when a cycle is resolved, an agent \(a\) who receives a bundle along a red edge gets a worse bundle. This could create new envy edges from \(a\). The new envy edges towards the agents who have bundles with more than one item do not get deleted in the enhanced graph. Furthermore, there could be a pair of agents \((i,j)\) such that \(j\) is not a source, and \(j\) being not a source is the only reason for \((i,j)\) to not be a red edge. In the cycle resolution, if the bundle of \(j\) gets transferred to some agent \(k\), such that \(k\) becomes a source, then \((i,k)\) becomes a red edge. In this way, there could be more red edges after the cycle resolution.

We construct a formal counter-example as follows. Consider an instance with \(13\) agents \(s\), \(i_1,\ldots,i_3\), \(j_1\ldots,j_4\),\ and \(k_1,\ldots,k_5\). Now, consider valuation functions and an allocation \(X\) satisfying the following conditions.
\begin{enumerate}
   \item \(|X_s|=2\);\ \(|X_{i_1}|=|X_{i_3}|=1\);\ \(|X_{i_2}|=2\);\ \(\forall \ell\in[5], |X_{k_\ell}| = 2\);\ \(\forall \ell\in[4], |X_{j_\ell}| = 1\)

   \item The only envy edges in \(G(X)\) are \((s,i_1),(i_1,i_2)\), and \((i_2,i_3)\). Furthermore, \(v_s(X_{i_1})>\frac{3}{2}v_s(X_s)\), \(v_{i_1}(X_{i_2})>\frac{3}{2}v_{i_1}(X_{i_1})\) and \(v_{i_2}(X_{i_3})>\frac{3}{2}v_{i_2}(X_{i_2})\). Therefore, all the three envy edges in \(G(X)\) also appear in the reduced envy graph \(G_\mathrm{r}(X)\) and also the enhanced envy graph \(G_\mathrm{e}(X)\).

   \item \(v_{i_3}(X_s)>\frac{2}{3}v_{i_3}(X_{i_3})\). Therefore, \((i_3,s)\) is a red edge in the enhanced envy graph \(G_\mathrm{e}(X)\).

   \item \(\forall \ell\in[5], v_{i_3}(X_{i_3})=v_{i_3}(X_{k_\ell})\).

   \item \(\forall \ell\in[4], v_{j_\ell}(X_{i_2})>\frac{2}{3}v_{j_\ell}(X_{j_\ell})\). However, \((j_\ell,i_2)\) is not a red edge in \(G_\mathrm{e}(X)\) as \(i_2\) is not a source vertex.\label{cond:not_source} 
\end{enumerate}

The enhanced envy graph \(G_\mathrm{e}(X)\) is shown in Figure~\ref{fig:fig1}. It has three envy edges and one red edge. Obtain allocation \(X'\) by applying the \textsc{Cycle Resolution} subroutine on the allocation \(X\). The enhanced envy graph \(G_\mathrm{e}(X')\) is as shown in Figure~\ref{fig:fig2}. As \(v_{i_3}(X_s) < v_{i_3}(X_{i_3})\), agent \(i_3\) envies each of \(k_1\ldots k_5\). Furthermore, as each \(|X_{k_\ell}|=2\), the edges \((i_3,k_\ell)\), \(\forall \ell\in[5]\), also appear in the reduced and hence in the enhanced envy graph \(G_\mathrm{e}(X')\). Therefore, the number of envy edges strictly increases. 

Furthermore, agent \(i_1\) is a source in \(X'\) and has the bundle \(X_{i_2}\). Due to Condition~\ref{cond:not_source}, we have five new red edges in \(X'\), from every \(j_\ell\) to \(i_1\), \(\forall \ell\in[4]\). Therefore, the number of red edges has also strictly increased.

\begin{algorithm}
\DontPrintSemicolon
\caption{\textsc{Cycle Resolution}\((X,\tilde{G},C)\)}
\label{alg:cycle-resolution}
\SetKwComment{Comment}{/* }{ */}
\KwData{A partial allocation \(X\), its graph \(\tilde{G}(X)\), and a cycle \(C\) in \(\tilde{G}(X)\)}
\KwResult{An updated partial allocation \(X\) such that the (implied) graph \(\tilde{G}(X)\) no longer contains the cycle \(C\).\vspace{3pt}}
   \(\tilde{X}\gets X\) \tcp*[r]{\(\tilde{X}\) is an auxiliary allocation}
   \For{every edge \((i,j) \in C\)}
   {
   \(X_i \gets \tilde{X}_j\) \tcp*[r]{swap the bundles backwards along the cycle}
   }
   \Return \(X\)
\end{algorithm}

\begin{algorithm}
\DontPrintSemicolon
\caption{\textsc{All Cycles Resolution}\((X,\tilde{G})\)} \label{alg:all-cycles-resolution}
\SetKwComment{Comment}{/* }{ */}
\KwData{A partial allocation \(X\) and its graph \(\tilde{G}(X)\).}
\KwResult{An updated partial allocation \(X\) such that its graph \(\tilde{G}(X)\) is acyclic.\vspace{3pt}}
   \While{there exists a cycle \(C\) in \(\tilde{G}(X)\)}
   {
   \(X=\textsc{Cycle Resolutuion}(\tilde{G}(X),C)\)
   }
   \Return \(X\)
\end{algorithm}

\begin{figure}
    \centering
    \begin{subfigure}[t]{0.48\textwidth}
        \centering
        \begin{tikzpicture}
            \node[agent] (s) {\small \(s\) };
            \node[agent, left=1cm of s] (i1) {\small \(i_1\) };
            \node[agent, below=1cm of i1] (i2) {\small \(i_2\) };
            \node[agent, below=1cm of i2] (i3) {\small \(i_3\) };

            \node[agent, above left=2cm and 2cm of i1] (j1) {\small \(j_1\) };
            \node[agent, below=0.5cm of j1] (j2) {\small \(j_2\) };
            \node[agent, below=0.5cm of j2] (j3) {\small \(j_3\) };
            \node[agent, below=0.5cm of j3] (j4) {\small \(j_4\) };
        
            \node[agent, below=1cm of j4] (k1) {\small \(k_1\) };
            \node[agent, below=0.5cm of k1] (k2) {\small \(k_2\) };
            \node[agent, below=0.5cm of k2] (k3) {\small \(k_3\) };
            \node[agent, below=0.5cm of k3] (k4) {\small \(k_4\) };
            \node[agent, below=0.5cm of k4] (k5) {\small \(k_5\) };

            \draw[envy] (s) to[out=135, in=35] (i1);
            \draw[envy] (i1) -- (i2);
            \draw[envy] (i2) -- (i3);
            \draw[champ] (i3) to[out=-45, in=135] (s);
        \end{tikzpicture}
    \caption{Enchanced Envy Graph \(G_\mathrm{e}(X)\), before the \textsc{Cycle Resolution} subroutine is applied.}
    \label{fig:fig1}
    \end{subfigure}
    \hfill
    \begin{subfigure}[t]{0.48\textwidth}
        \centering
        \begin{tikzpicture}
            \node[agent] (s) {\small \(s\) };
            \node[agent, left=1cm of s] (i1) {\small \(i_1\) };
            \node[agent, below=1cm of i1] (i2) {\small \(i_2\) };
            \node[agent, below=1cm of i2] (i3) {\small \(i_3\) };

            \node[agent, above left=2cm and 2cm of i1] (j1) {\small \(j_1\) };
            \node[agent, below=0.5cm of j1] (j2) {\small \(j_2\) };
            \node[agent, below=0.5cm of j2] (j3) {\small \(j_3\) };
            \node[agent, below=0.5cm of j3] (j4) {\small \(j_4\) };
        
            \node[agent, below=1cm of j4] (k1) {\small \(k_1\) };
            \node[agent, below=0.5cm of k1] (k2) {\small \(k_2\) };
            \node[agent, below=0.5cm of k2] (k3) {\small \(k_3\) };
            \node[agent, below=0.5cm of k3] (k4) {\small \(k_4\) };
            \node[agent, below=0.5cm of k4] (k5) {\small \(k_5\) };

            \foreach \k in {1,2,3,4,5}{
                \draw[envy] (i3) -- (k\k);
            }

            \foreach \j in {1,2,3,4}{
                \draw[champ] (j\j) -- (i1);
            }

        \end{tikzpicture}
        \caption{Enhanced Envy Graph \(G_\mathrm{e}(X')\) obtained after applying the \textsc{Cycle Resolution} subroutine on \(X\).}
    \label{fig:fig2}
    \end{subfigure}
    \caption{Enhanced Envy Graphs before and after resolving the cycle $(s,i_1,i_2,i_3,s)$.}

\end{figure}
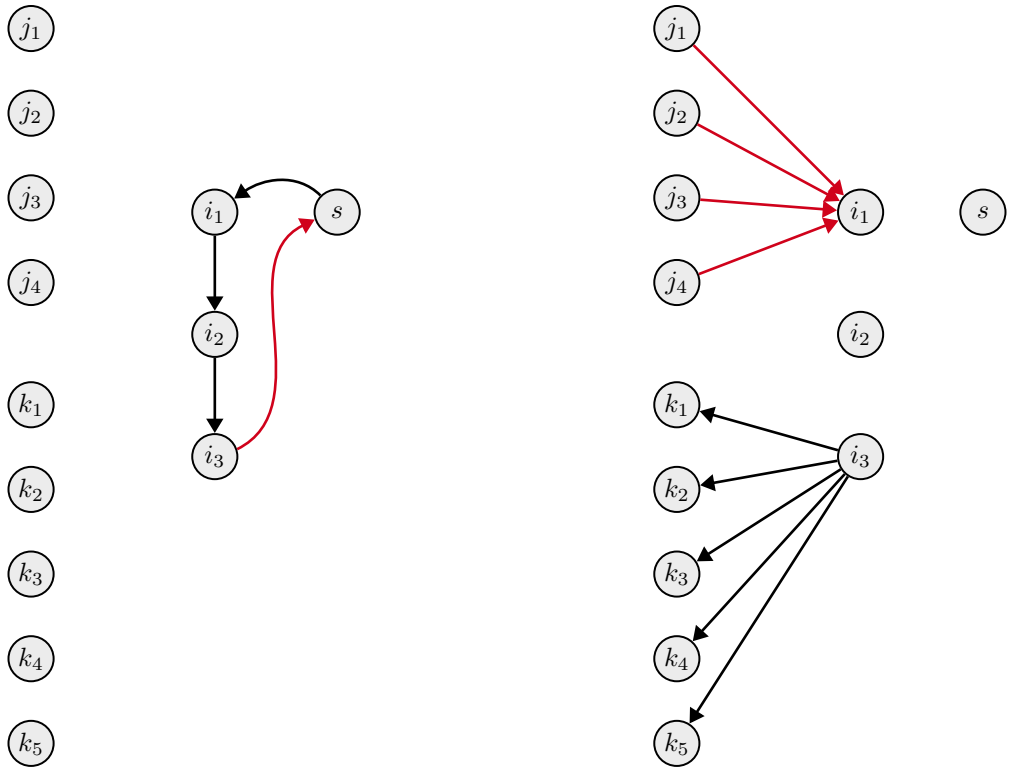

\subsection{Corrected Proof of Lemma~\ref{obs:all-cycle-resolution-running-time}}\label{sec:corrected-proof-ece}

We show that the \textsc{All Cycles Resolution} subroutine terminates in pseudo-polynomial time.

Given an allocation, define a potential function as follows.

\begin{definition}[Potential Function \(\phi\)]\label{def:potential}
    Let \(w: \mathbb{N} \to \mathbb{R}\) be a weight function defined as:
    \[
    w(x) =
    \begin{cases}
        1 & \text{if } x = 1, \\
        \frac{3}{2} & \text{if } x \geq 2.
    \end{cases}
    \]

    Given an instance \(\mathcal{I}\), and an allocation \(X\) for \(\mathcal{I}\), 
    \[
    \phi(X) = \prod_{a \in \N} w(|X_a|) \cdot v_a(X_a),
    \]
\end{definition}

\begin{proof}

    Due to \cite{liptonetalapproximatelyfair04}, it is known that the algorithm terminates when the input graph is the envy graph \(G(X)\). Therefore we only need to show that the algorithm terminates when the input graph \(\tilde{G}(X)\) is either the reduced envy graph \(G_\mathrm{r}(X)\) or the enhanced envy graph \(G_\mathrm{e}(X)\).

    Given an allocation \(X\) and its graph \(\tilde{G}(X)\), let \(X'\) be the allocation obtained after resolving one cycle \(C\) in \(\tilde{G}(X)\). We show that \(\phi(X') > \phi(X)\).

    For an agent \(a\), her contribution to the potential function is $w(|X_a|) \cdot v_a(X_a)$. Note that the agents who are not part of the cycle, retain their bundles, and hence their contributions to the potential remain unchanged. Thus, we only need to consider the agents who are part of the cycle \(C\). 

    Consider two consecutive agents \(a\) and \(b\) in the cycle \(C\). If \((a,b)\) is a red edge in \(\tilde{G}(X)\), then by the definition of the red edge, we have $|X_a|=1$, $|X_b|>1$ and $v_a(X_b)\ge \frac{2}{3}v_a(X_a)$. When the cycle is resolved, agent \(a\) receives the bundle of agent \(b\). Therefore, the new contribution of agent \(a\) to the potential function is
    \begin{align}
        w(|X_a'|) \cdot v_a(X_a') &= w(|X_b|) \cdot v_a(X_b)\notag\\
                                  &\geq \frac{3}{2} \cdot \frac{2}{3}v_a(X_a)\notag\\
                                  &\geq w(|X_a|) \cdot v_a(X_a)\notag
    \end{align}

    If \((a,b)\) is not a red edge in \(\tilde{G}(X)\), and if either $|X_a|=1$ or $|X_b|>1$, then, on receiving the bundle of agent \(b\), agent \(a\) has a new contribution to the potential function strictly increases as $w(|X_b|)\ge w(|X_a|)$ and $v_a(X_b)>v_a(X_a)$.

    Finally, if \((a,b)\) is not a red edge in \(\tilde{G}(X)\), and if both $|X_a|>1$ and $|X_b|=1$, then, by the definitions of the reduced envy graph and enhanced envy graph, we know that $v_a(X_b)>\frac{3}{2}v_a(X_a)$. Therefore, the new contribution of agent \(a\) to the potential function is $w(|X_b|) \cdot v_a(X_b) > \frac{3}{2}v_a(X_a) = w(|X_a|) \cdot v_a(X_a)$. 

    As red edges go from agents with singleton bundles to agents with bundles of size at least two, a cycle must contain at least one non red edge. Therefore, at least one agent in the cycle has a contribution to the potential function that strictly increases after the cycle resolution, while no agents contribution strictly decreases. Thus, \(\phi(X') > \phi(X)\).  Finally, since for any allocation \(X\), the maximum value of the potential is \(\phi(X)\le \max\limits_{a\in \N}\left(\frac{3}{2}\cdot v_a(\M)\right)^n \), the algorithm terminates in pseudo-polynomial time. This completes the proof.
\end{proof}

\subsection{Erratum in the   \(\ppa\) Algorithm}

In the version of the \(\ppa\) algorithm presented in \cite{amanatidisetalPushingFrontier24a}, in Step~\ref{step9} (Called as Step~{8.5} in \cite{amanatidisetalPushingFrontier24a}), the \textsc{Path Resolution$^*$} subroutine is invoked with the reduced envy graph \(G_\mathrm{r}\) as the input graph. Here, if there exists a path \(\pi\) from a source \(s\) to some agent \(i\) in \(G_r(X)\), such that $i$ prefers a good $g$ from the pool and a good $g'$ from $X_s$ to her own bundle, the swap the bundles along the envy path and give $\{g,g'\}$ to agent \(i\).

Now, in the proof of Claim~$4.9$, perticular in Case~$2$ of the proof, failure of execution of Step~$8.5$ is used to show that the allocation is \(\EFXT\). However, if the \textsc{Path Resolution$^*$} subroutine is invoked with the reduced envy graph \(G_\mathrm{r}\), then it is possible that there exists an agent $i$, a source $s$ such that $i$ prefers a good $g$ from the pool and a good $g'$ from $X_s$ to her own bundle, but there is no path from $s$ to $i$ in \(G_\mathrm{r}\), as shown in Figure~\ref{fig:fig_error_3pa}.

\begin{figure}[H]
    \centering
    \begin{center}
        \begin{tikzpicture}
            
            \node[agent] (s) {\small \(s\) };
            \node[agent, below=1cm of s] (a) {\small \(a\) };
            \node[agent, right=2cm of s] (b) {\small \(b\) };
            \node[agent, below=1cm of b] (i) {\small \(i\) };
            
            \draw[envy] (s) -- (a);
            \draw[envy] (b) -- (i);
            \draw[champ] (a) -- (b);
        \end{tikzpicture}
    \end{center}
    \caption{The path from \(s\) to \(i\) does not exist in the reduced envy graph \(G_\mathrm{r}\).}
    \label{fig:fig_error_3pa}
\end{figure}
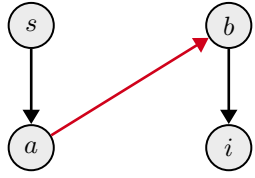

Therefore, the \textsc{Path Resolution$^*$} in Step~{8.5} must be invoked with the enhanced envy graph \(G_\mathrm{e}\). This correction is made in Algorithm~\ref{alg:3PA_plus}.

\end{document}